\def\comments{0}
\def\submission{0}

\ifnum\submission=0
\documentclass[11pt,letterpaper]{article}
\usepackage[margin=1in]{geometry}
\else
\documentclass{llncs}
\fi
\pagestyle{plain}
\usepackage[utf8]{inputenc}
\usepackage{xcolor}
\usepackage{amsmath}
\usepackage{amsfonts}
\ifnum\submission=0
\usepackage{amsthm}
\fi
\usepackage{braket}
\usepackage{algorithm}
\floatname{algorithm}{Procedure}
\usepackage{algorithmic}
\usepackage{subcaption}
\usepackage{authblk}
\usepackage{comment}
\usepackage[colorlinks=true,allcolors=blue]{hyperref}
\usepackage{breakcites}
\usepackage{cleveref}
\usepackage[sanserif,basic]{complexity}
\usepackage{graphicx}

\ifnum\submission=0
\newtheorem{theorem}{Theorem}[section]

\newtheorem{lemma}[theorem]{Lemma}
\newtheorem{claim}[theorem]{Claim}

\newtheorem{corollary}[theorem]{Corollary}
\newtheorem{definition}[theorem]{Definition}

\newtheorem{remark}{Remark}

\fi
\DeclareMathOperator*{\Ex}{\mathbb{E}}

\crefname{claim}{Claim}{Claims}
\ifnum\comments=1
    \newcommand{\takashi}[1]{{\color{red}(\textbf{Takashi}: #1)}}
    \newcommand{\xingjian}[1]{{\color{magenta}(\textbf{Xingjian}: #1)}}
    \newcommand{\angelos}[1]{{\color{blue}(\textbf{Angelos}: #1)}}
    \newcommand{\qipeng}[1]{{\color{teal}(\textbf{Qipeng}: #1)}}
\else
    \newcommand{\takashi}[1]{}
    \newcommand{\xingjian}[1]{}
    \newcommand{\angelos}[1]{}
    \newcommand{\qipeng}[1]{}
\fi

\newcommand{\ra}{\rightarrow}

\newcommand{\game}{\mathcal{G}}
\newcommand{\flag}{\mathsf{flag}}
\newcommand{\good}{\mathsf{Good}}
\newcommand{\la}{\leftarrow}
\newcommand{\negl}{\mathsf{negl}}
\newcommand{\secp}{n}

\newcommand{\A}{\mathcal{A}}
\newcommand{\B}{\mathcal{B}}
\newcommand{\M}{\mathcal{M}}
\newcommand{\V}{\mathcal{V}}
\newcommand{\bit}{\{0,1\}}
\newcommand{\ch}{\mathsf{CH}}
\newcommand{\Func}{\mathsf{Func}}
\newcommand{\dist}{\mathcal{D}}
\newcommand{\Ver}{\mathsf{Ver}}

\newcommand{\vv}{\mathbf{v}}
\newcommand{\calL}{\mathcal{L}}
\newcommand{\ora}{\mathcal{O}}

\newcommand{\CPhO}{\mathsf{CPhO}}
\newcommand{\CStO}{\mathsf{CStO}}
\newcommand{\F}{\mathcal{F}}
\newcommand{\samp}{{\sf Samp}}
\newcommand{\query}{{\sf Query}}
\newcommand{\ver}{{\sf Verify}}
\newcommand{\calH}{\mathcal{H}}
\newcommand{\calD}{\mathcal{D}}

\newcommand{\As}{\mathcal{A}}
\newcommand{\Bs}{\mathcal{B}}
\newcommand{\Cs}{\mathcal{C}}
\newcommand{\ans}{{\sf ans}}

\title{Classical vs Quantum Advice and Proofs under Classically-Accessible Oracle}
\ifnum\submission=0
\author[1]{Xingjian Li}
\author[2]{Qipeng Liu}
\author[3]{Angelos Pelecanos}
\author[4]{Takashi Yamakawa}
\affil[1]{Tsinghua University}
\affil[2]{Simons Institute}
\affil[3]{UC Berkeley}
\affil[4]{NTT Social Informatics Laboratories}
\else
\fi

\begin{document}
\maketitle
\ifnum\submission=1
\vspace{-7mm}
\fi

\begin{abstract}
It is a long-standing open question to construct a classical oracle relative to which $\BQP/\qpoly\neq \BQP/\poly$ or $\QMA\neq\QCMA$. 
In this paper, we construct \emph{classically-accessible classical oracles} relative to which $\BQP/\qpoly\neq \BQP/\poly$  and $\QMA\neq\QCMA$. 
Here,  classically-accessible classical oracles are oracles that can be accessed only classically even for quantum algorithms. Based on a similar technique, we also show an alternative proof for the separation of $\QMA$ and $\QCMA$ relative to a distributional quantumly-accessible classical oracle, which was recently shown by Natarajan and Nirkhe.  
\end{abstract}


\section{Introduction}


Quantum information is often in possession of richer structures than classical information, at least intuitively. The first (but often false) thought is that phases and magnitudes are continuous, and a piece of quantum information may be able to store exponentially or infinitely more information than classical ones; which is always not true\footnote{As storing and extracting information takes resources that scale with accuracy.}. Since classical and quantum information present distinct and unique natures, the community studies their differences under different contexts and directions, including advice-aided quantum computation~\cite{NY04,Aaronson05,Aaronson07,SICOMP:AarDru14,nayebi2014quantum,hhan2019quantum,chung2019lower,chung2020tight,TCC:GLLZ21,EC:Liu23}, 
{\sf QMA} vs {\sf QCMA} (i.e., quantum {\sf NP} with either quantum or classical witness)~\cite{AN02,aaronson2007quantum,FK18,NN22}, 
quantum vs classical 
communication complexity~\cite{FOCS:Yao93,STOC:BuhCleWig98,STOC:Raz99,SICOMP:ASTVW03,STOC:BarJayKer04,gavinsky2008classical} 
and many others. 

One way to understand their differences is by studying one-way communication complexity: i.e., Alice and Bob want to jointly compute a function with their private inputs, but only one-time quantum/classical communication from Alice to Bob is allowed. 
Among many works, Bar-Yossef, Jayram, and Kerenidis~\cite{STOC:BarJayKer04} showed an exponential separation between quantum and classical one-way communication complexity, for the so-called hidden matching problem. 

The other approach is by looking at {\sf QMA} vs {\sf QCMA}. In 2007, Aaronson and Kuperberg \cite{aaronson2007quantum} showed a black-box separation with respect to a black-box quantum unitary and left the same separation with respect to a classical oracle as an open question. More than a decade later, Fefferman and Kimmel \cite{FK18} proved a second black-box separation using a distributional in-place oracle, which is a non-standard type of oracles. 
Recently, Natarajan and Nirkhe \cite{NN22} moved a step closer to the goal by presenting a black-box separation with respect to a distributional oracle\footnote{In this case, the witness only depends on the distribution. The oracle is later picked from the distribution, but independent of the witness.}. 
Therefore, we would like to further investigate the difference between quantum and classical proofs, i.e., the separation between {\sf QMA} vs {\sf QCMA}. In this work, we address the question by demonstrating a separation relative to a \emph{classically accessible classical} oracle.

\begin{definition}[{\sf QMA}]
\qipeng{new}
A language $\cal L$ is said to be in {\sf QMA} if there exists a quantum polynomial-time machine $\V$ together with a polynomial $p(\cdot)$ such that,
\begin{itemize}
    \item For all $x \in {\cal L}$, there exists a quantum state $\ket{\psi_x}$ of at most $p(|x|)$ qubits, such that $\V$ accepts on $\ket x, \ket {\psi_x}$ with a probability at least $2/3$. 
    \item For all $x \not\in {\cal L}$, for all quantum states $\ket{\psi_x}$ of at most $p(|x|)$ qubits, $\V$ accepts on $\ket x, \ket {\psi_x}$ with a probability at most $1/3$. 
\end{itemize}
\end{definition}
\noindent One can similarly define the class {\sf QCMA} except $\ket{\psi_x}$ are of $p(|x|)$ classical bits. 

\qipeng{previously, I talked about the intro with the following stories: there are studies on understanding quantum info vs classical info, including one-way communication complexity, quantum proofs and advice-aided computation. I briefly talked about the first two, and the work will focus on the third one. 

Now, since we are also working on quantum proofs, I add the results when I was talking about quantum vs classical proofs. So, I expand this paragraph. }

\medskip

We also aim to understand the difference between quantum and classical information through advice-aided quantum computation. Classically, a piece of advice can significantly speed up classical computation, from speeding up exhaustive search \cite{hellman1980cryptanalytic} to deciding the unary Halting Problem. In a quantum world, advice can be either a piece of classical or quantum information. It is very natural to ask the question: does quantum advice ``outperform’’ classical advice? Among many questions, one of the most fundamental is to understand the power of {\sf BQP/qpoly} and {\sf BQP/poly}: i.e., the class of languages that can be decided by bounded-error quantum machines with arbitrary bounded-length quantum/classical advice and polynomial time. 

\begin{definition}[{\sf BQP/qpoly}]
A language $\cal L$ is said to be in {\sf BQP/qpoly} if and only if there exists 
a quantum polynomial-time machine $\A$ together with a collection of polynomial-sized quantum states $\{\ket{z_n}\}_{n\in \mathbb{N}}$ such that, 
\begin{itemize}
    \item For all $x \in {\cal L}$, $\Pr\left[ \A(x, \ket{z_{|x|}}) = 1 \right] \geq 2/3$.
    \item For all $x \not\in {\cal L}$, $\Pr\left[ \A(x, \ket{z_{|x|}}) = 0 \right] \geq 2/3$.
\end{itemize}
\end{definition}
\noindent One can similarly define the class {\sf BQP/poly} except 
$\ket{z_n}$
are poly-sized classical strings.

Similar to the case of {\sf QMA} vs {\sf QCMA}, Aaronson and Kuperberg~\cite{aaronson2007quantum} in the same paper showed an oracle separation between these two classes, leaving the separation with respect to a classical oracle as an open question. Recently, Liu~\cite{EC:Liu23} showed the separation for its relational variants (i.e., {\sf FBQP/qpoly} and {\sf FBQP/poly}) under a special case, where the oracle is never given to the algorithms\footnote{As mentioned in \Cref{sec:concurrent}, a concurrent work by Aaronson, Buhrman, and Kretschmer \cite{ABK23} proves their relational variants $\mathsf{FBQP}/\qpoly\neq \mathsf{FBQP}/\poly$ \emph{unconditionally}.}. Despite all the efforts, the separation between ${\sf BQP/poly}$ and ${\sf BQP/qpoly}$ relative to a classical oracle remains obscure. 

In this work, we proceed with the question by showing a \emph{full separation} relative to a \emph{classically accessible classical} oracle. 
Along the way, we adapt our techniques and give an alternative proof for the separation between {\sf QMA} vs {\sf QCMA} relative to a \emph{quantumly accessible classical} distributional oracle, which was recently  established by Natarajan and Nirkhe~\cite{NN22}. 

\subsection{Our Results}

Our first result is a black-box separation between {\sf BQP/qpoly} vs {\sf BQP/poly} with respect to a classically-accessible classical oracle. 

\emph{Classically-accessible classical oracles.} A classical oracle $\cal O$ is said to be classically accessible if a quantum algorithm can only query the oracle classically; in other words, the only interface of ${\cal O}$ to quantum algorithms is classical: given an input $x$, it outputs $y = {\cal O}(x)$. 

\begin{theorem}[Informal]\label{thm:main_bqp}
There exists a language $\mathcal{L}$ in {\sf BQP/qpoly} but not in {\sf BQP/poly}, relative to a classically accessible classical oracle $\cal O$.
\end{theorem}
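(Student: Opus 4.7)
The plan is to construct a classically-accessible classical oracle $\mathcal{O}$ that, for each input length $n$, encodes a high-entropy random object $R_n$ together with classical verification labels. A natural choice is to let $R_n \subseteq \{0,1\}^n$ be a uniformly random subset of size roughly $2^{n/2}$, and have $\mathcal{O}$, on a classical query $x$, return a pair consisting of a bit indicating membership of $x$ in $R_n$ together with some extra classical ``label'' data that is useless on its own but becomes meaningful when combined across many elements of $R_n$. The language $\calL$ is then defined so that deciding membership on input $x$ of length $n$ requires simultaneously combining labels from many elements of $R_n$ — for instance, by asking whether a particular global function of the labels attached to $R_n$ takes a given value. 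This mirrors the hidden-subset template used by Natarajan--Nirkhe~\cite{NN22} for $\QMA$ vs $\QCMA$, but the roles of witness and advice differ.

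For the upper bound (putting $\calL$ in $\BQP/\qpoly$), I would take the quantum advice to be (several copies of) the uniform superposition $\ket{\psi_n} = \frac{1}{\sqrt{|R_n|}} \sum_{r \in R_n} \ket{r}$. Although this is an $n$-qubit state, it coherently represents the entire $|R_n|$-element subset. The $\BQP/\qpoly$ algorithm uses classical oracle queries derived from measurements of $\ket{\psi_n}$ (or measurements in bases adaptively chosen from oracle responses) to collect enough consistent label information to evaluate the global predicate that defines $\calL$. The key quantum gadget here is the ability to sample a uniformly random element of $R_n$ per copy of $\ket{\psi_n}$, and then verify and use it through classical queries; several samples and a polynomial-time postprocessing step then suffice.

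For the lower bound (ruling out $\BQP/\poly$), I would use a compression / incompressibility argument over the random choice of $\mathcal{O}$. Fix any $\BQP/\poly$ algorithm $\A$ with advice family $\{z_n\}$ of length $p(n)$. Since queries are classical, each step of $\A$ produces classical query-answer transcripts that reveal at most polynomially many oracle values. Averaging over a uniformly random $R_n$, I would argue that the joint distribution of $(z_n, \text{transcript})$ carries at most $p(n) + \mathrm{poly}(n)$ bits of information about $R_n$, whereas the hardcore subset of inputs $x$ on which $\A$ must succeed requires $\omega(p(n))$ bits about $R_n$ to decide. This can be formalized either via a direct counting/compression argument showing that a good algorithm implies a too-short description of $R_n$, or via a hybrid argument that reprograms $\mathcal{O}$ on inputs $\A$ never queries.

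The main obstacle I expect is this lower bound. The subtle point is that classical oracle queries by the quantum algorithm can be adaptive and can depend on internal quantum state, so one must carefully quantify how much information about $R_n$ is transferred into $\A$'s state via a classical query channel, and rule out that a cleverly chosen short classical advice $z_n$ can ``steer'' the $p(n)$ queries so as to extract super-polynomially useful information. Controlling this requires a clean information-theoretic treatment of classically-accessible oracles — essentially an analogue of the hybrid lemma used in quantum query lower bounds, but simpler since here all oracle interaction is classical. Designing $\calL$ so that the ``global label predicate'' requires many labels (so that $p(n)$ queries do not suffice) while still being decidable from a few coherent samples of $\ket{\psi_n}$ is the crux that ties the upper and lower bounds together.
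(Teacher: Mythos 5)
Your proposal diverges substantially from the paper, and there is a real gap in your upper bound that I don't see how to close.

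The paper's construction does not use a hidden-subset / superposition-over-a-set template. Instead, it builds on the Yamakawa--Zhandry (YZ) function $f_C^H(\vv) = H(1,v_1)\|\cdots\|H(n,v_n)$ over a list-recoverable code $C$, together with Liu's observation that a single piece of quantum advice (no online queries needed) lets a QPT algorithm invert $f_C^H$ on \emph{every} image $r$, while polynomial classical advice plus classical queries does not let an unbounded-time machine invert on a random $r$. The oracle is defined to reveal $G(x)$ only when queried together with a preimage $\vv$ with $f_C^H(\vv)=x$, and the language is $\calL = G^{-1}(1)$. The upper bound then is: given $x$, use the quantum advice to produce $\vv$ with $f_C^H(\vv)=x$, then issue one classical query to learn $G(x)$. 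The lower bound reduces to Yao's box (using classical presampling / bit-fixing): an algorithm that cannot invert $f_C^H$ gets no information about $G(x)$ from the oracle, so its success probability on a random $x$ is at most $1/2 + o(1)$, contradicting $\BQP/\poly$.

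The concrete gap in your upper bound is that the advice state $\ket{\psi_n} = |R_n|^{-1/2}\sum_{r\in R_n}\ket{r}$ is not a function of the challenge $x$, and measuring copies of it just yields a handful of uniformly random elements of $R_n$. But a handful of random elements of $R_n$ could equally well be given as \emph{classical} advice, so this gadget by itself cannot distinguish quantum from classical advice. The Aaronson--Kuperberg and Natarajan--Nirkhe hidden-subset constructions exploit $\ket{\psi_n}$ in a genuinely quantum way (phase detection, swap tests, or coherent interaction with a quantumly-accessible oracle) --- none of which is available when the oracle is only classically accessible, which is precisely the setting of this theorem. Moreover, if your language is defined by a ``global predicate of all labels of $R_n$,'' then its value at length $n$ is independent of $x$, so $\calL$ restricted to length $n$ is trivial and lies in $\BQP/\poly$ with one advice bit. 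You would need the oracle to hide something indexed by the challenge $x$ that the quantum advice can unlock but the classical advice cannot; making this work with classical-only queries is exactly what the YZ-inversion gadget buys, and it is not clear how a hidden-subset template achieves it.

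Your lower-bound sketch is closer in spirit but still underdetermined: a bare ``$p(n)+\mathrm{poly}(n)$ bits of information'' accounting does not handle adaptive classical queries in a clean way. The paper's argument goes through the presampling theorem (advice plus $Q$ classical queries is captured by $ST$ fixed coordinates in a $P$-bit-fixing random oracle), then applies a Yao's-box bound and a diagonalization/Borel--Cantelli argument to lift a per-length distinguishing bound to $\calL\notin\BQP^\ora/\poly$. These technical tools are what make the adaptivity manageable, and they would be needed in any serious attempt at your lower bound too.
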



Our work is based on the previous works on quantum advantages with unstructured oracles, by Yamakawa and Zhandry~\cite{FOCS:YamZha22} and recent separation by Liu~\cite{EC:Liu23}. Our work improves \cite{EC:Liu23} in two aspects: first, Liu only proved the separation of their relational variants, instead of the original decision classes; second, the separation by Liu does not allow algorithms to have access to the oracle (either classically, or quantumly), but only the advice can depend on the oracle. 

\begin{remark}
Although the long-standing open question is to understand the separation relative to a quantumly accessible classical oracle, our theorem is not weaker but incomparable. Since classical access is not stronger than quantum access, it limits both the computational power of quantum machines with quantum and classical advice: intuitively, classical access makes it easier to prove a language ${\cal L}$ is not in ${\sf BQP^{\cal O}/poly}$ but harder to prove ${\cal L} \in {\sf BQP^{\cal O}/qpoly}$. A similar observation is applicable to the theorem concerning the separation between {\sf QMA} and {\sf QCMA}.
\end{remark}

Our second result is about the separation between {\sf QMA} and {\sf QCMA}. 
\begin{theorem}[Informal]\label{thm:main_qma_qcma}
There exists a language $\mathcal{L}$ in {\sf QMA} but not in {\sf QCMA}, relative to a classically accessible classical oracle $\cal O$.
\end{theorem}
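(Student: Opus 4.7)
My plan is to lift the oracle construction behind Theorem~\ref{thm:main_bqp} from the advice setting to the witness setting, so that what served as quantum advice now becomes a quantum witness, while simultaneously arranging for soundness against cheating provers. Concretely, I would have $\mathcal{O}$ encode, for each input length $n$ and each input $x \in \{0,1\}^n$, an independent copy of the Yamakawa--Zhandry-style search instance used in the proof of Theorem~\ref{thm:main_bqp}. Yes-instances ($x \in \mathcal{L}$) correspond to copies that contain a ``planted'' structured set of solutions, while no-instances correspond to copies drawn uniformly at random so that with overwhelming probability no valid solution exists. The intended quantum witness for $x \in \mathcal{L}$ is the uniform superposition over the solution set of the $x$-th instance.

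For the QMA upper bound, the verifier on input $x$ and witness $\ket{\psi_x}$ would measure the witness in the computational basis to obtain a candidate string $s$, then issue a constant number of \emph{classical} queries to $\mathcal{O}$ to check that $s$ is a valid solution for the $x$-th instance, accepting iff the check passes. Completeness is immediate because every element of the support verifies. Soundness against arbitrary cheating quantum witnesses follows because the verifier's accepting region is determined entirely by the classical query-answer transcript: for any state the prover might send, the distribution of the measured string $s$ is independent of $\mathcal{O}$ beyond the queried bits, and for a random no-instance oracle no fixed $s$ has non-negligible chance of passing the check, so a union bound and standard amplification suffice.

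The main obstacle is the QCMA lower bound: we must rule out \emph{every} polynomial-length classical witness $w$, adaptively chosen by a prover who sees the whole oracle. My plan is to view the pair (classical witness, quantum verifier making classical queries) as a classical-query randomized algorithm with a $p(n)$-bit nonuniform hint about $\mathcal{O}$, and then invoke an advice-robust classical-query lower bound for the underlying search/decision task — essentially the same kind of lower bound that drives Theorem~\ref{thm:main_bqp}, where classical accessibility of the oracle allows us to analyze the query transcript directly rather than fight with quantum queries. For a single fixed $w$, the lower bound shows the verifier fails on all but an exponentially small fraction of oracles; union-bounding over the $2^{p(n)}$ possible witnesses still gives overall failure, provided the hardness beats $2^{p(n)}$, which is exactly the regime that the Yamakawa--Zhandry-type lower bound provides. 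The technical heart, and the step I expect to be most delicate, is proving this advice-resistant classical-query lower bound with parameters strong enough to survive the $2^{p(n)}$ witness union bound while still permitting the polynomial-size quantum witness to carry the completeness side.
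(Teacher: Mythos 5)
Your high-level framing — treat the classical witness as a nonuniform hint about the oracle and invoke an advice-robust classical-query lower bound for a YZ-type problem — is aligned with the paper's strategy, but the concrete construction you sketch is broken and the lower-bound argument needs to be quite different from what you describe.

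\textbf{The construction does not work.} You propose that each input $x$ indexes an independent YZ instance, with YES instances ``containing a planted structured set of solutions'' and NO instances ``drawn uniformly at random so that with overwhelming probability no valid solution exists.'' This fails in two ways. First, for a random $H$, the function $f_C^H$ on the codeword domain is (nearly) surjective, so any fixed target $r$ has preimages with overwhelming probability; you cannot realize ``no valid solution exists'' simply by randomizing the instance. Second, and more fatally, if the oracle has a \emph{fixed} target $r$ accessible per instance, a cheating prover can simply hand the verifier a single classical string $\vv$ with $f_C^H(\vv)=r$ — a perfectly good polynomial-length classical witness. Your construction would place the language in $\QCMA$. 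The paper avoids this by composing with a fresh random function $G:\bit^n\to\bit^n$: the verifier picks a fresh random $t$ and needs a preimage of $G(t)$, which is random and changes each time, so no polynomial list of preimages can cover it. The NO instance is not ``no solutions exist'' but rather an oracle $\ora_n[G,H,S]$ that \emph{refuses to confirm} solutions for a hidden dense set $S\subseteq\bit^n$; YES is $S=\emptyset$ and NO is $|S|\ge\frac{2}{3}\cdot 2^n$.

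\textbf{The union bound over $2^{p(n)}$ witnesses does not close the gap.} You caveat this yourself, but the caveat fails: the YZ classical-advice lower bound gives security of order $2^{-\Omega(n)}$, and the witness length $p(n)$ is an arbitrary polynomial, so $2^{p(n)}\cdot 2^{-\Omega(n)}$ is unbounded. What actually works is to apply the advice-robust bound \emph{directly}, letting the advice family depend arbitrarily on the oracle, so that the ``best witness for each oracle'' is already accounted for without any union bound. But even with that fix, the paper's $\QCMA$ lower bound is not a ``no witness accepts NO'' statement at all — it is an indistinguishability statement: no classical-witness verifier can distinguish $\ora_n[G,H,\emptyset]$ from $\ora_n[G,H,S_{G,H}]$ for an appropriately chosen $S_{G,H}$. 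Proving this requires constructing $S_{G,H}$ \emph{as a function of the classical witness} via the Coretti--Dodis--Guo--Steinberger dense/bit-fixing decomposition (\Cref{cl:CDGSalmostbitfixing}, \Cref{cl:dense-to-bit-fixing}), with a derandomization step to make $S$ depend only on $(G,H)$. That machinery is the technical heart of the paper's proof and is entirely absent from your sketch.
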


Inspired by the techniques used in our \Cref{thm:main_bqp} and \Cref{thm:main_qma_qcma}, we give an alternative proof for the result by Natarajan and Nirkhe \cite{NN22}. Note in the following result, the classical oracle is \emph{quantumly accessible}. 
\begin{theorem}[Informal]\label{thm:main_qma}
There exists a language $\mathcal{L}$ in {\sf QMA} but not in {\sf QCMA}, relative to a distributional quantumly-accessible classical oracle $\cal O$. 
\end{theorem}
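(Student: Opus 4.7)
The plan is to adapt the oracle construction and witness analysis used for Theorem~\ref{thm:main_qma_qcma} to the distributional setting, exploiting the crucial feature that the classical witness depends only on the distribution $\calD$, not on the concrete oracle $\ora \sim \calD$. At a high level, I would design $\calD$ as a distribution over classical functions encoding a Yamakawa--Zhandry-style structured search problem: for a YES distribution $\calD_{\yes}$ the oracle has a rare but structured solution set that can be located via interference, while for a NO distribution $\calD_{\no}$ no such set exists (or its structure is destroyed). The QMA witness $\ket{\psi_\calD}$ encodes a uniform superposition (or Fourier-style state) over the structural parameters of $\calD_{\yes}$, so that a QMA verifier with quantum oracle access can use the witness to steer its queries toward the planted solution and verify it with probability $\geq 2/3$ averaged over $\ora$.

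For the QCMA lower bound I would, for each fixed classical string $w \in \bit^{p(n)}$, analyze the best quantum query algorithm $\A_w$ that has $w$ hardwired and attempts to distinguish $\ora \sim \calD_{\yes}$ from $\ora \sim \calD_{\no}$. The key tool is Zhandry's compressed-oracle formalism (using $\CPhO$/$\CStO$) applied jointly to the oracle-register and the verifier's workspace, which converts quantum query access into a lazily-sampled object on which one can argue about the total amount of information $\A_w$ can extract. Because $w$ is chosen before $\ora$ is sampled, one can track, using a progress/weight argument analogous to those used in Sections for Theorems~\ref{thm:main_bqp} and~\ref{thm:main_qma_qcma}, that the compressed database grows only at rate $\poly(n)$ per query and therefore hits a solution region with probability at most $\poly(n)/2^{\Omega(n)}$ per witness. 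A union bound over the $2^{p(n)}$ possible classical witnesses, together with an appropriate choice of the hardness parameter in $\calD$, then shows that no polynomially-long $w$ suffices, establishing that $\calL \notin \QCMA^{\ora}$ for $\ora \sim \calD$.

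The main obstacle I anticipate is the interaction between quantum oracle access and the advice-before-oracle order of quantifiers. In the classically-accessible setting of Theorem~\ref{thm:main_qma_qcma}, one can reason query-by-query in the plain random oracle model and essentially ignore superposition entanglement, but here each quantum query can leak global information about $\ora$, and one must argue that the witness $w$ cannot be crafted to amplify this leakage, even though $w$ may depend on arbitrary properties of $\calD$. Handling this correctly likely requires an adaptive compressed-oracle reweighting that folds the dependency on $w$ into the progress measure, together with a careful hybrid step that replaces the true $\calD$ by an idealized distribution on which the compressed-oracle transition rules admit a clean bound; getting these two pieces to compose cleanly is where the bulk of the technical work will lie.
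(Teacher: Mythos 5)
Your proposal diverges substantially from the paper's argument, and I believe your route has a real technical gap.

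The paper's proof (Lemma 6.1 and Theorem 6.3) constructs the oracle family $\ora_n^{b}[H,r]$ where $b=1$ gives a verification oracle (outputs $1$ on $\vv$ iff $f_C^H(\vv)=r$) and $b=0$ is the \emph{all-zeros} oracle. The QCMA lower bound then does \emph{not} use compressed oracles at all. It applies the one-way-to-hiding lemma (Lemma 2.10): the distinguishing advantage between $\ora^1$ and $\ora^0$ is bounded by $2q\sqrt{\Pr[\M^{\ora^0}(z)\in S]}$, where $S$ is the set of queries on which the oracles differ, namely $\{\vv: f_C^H(\vv)=r\}$. The crucial observation is that $\ora^0$ is identically zero, so $\M^{\ora^0}$ can be simulated with \emph{no oracle access whatsoever}; the right-hand side therefore reduces exactly to the probability that a machine given only a poly-size classical advice string $z_H$ (and no queries) inverts $f_C^H$ on a random $r$ --- which is negligible by Theorem 3.3, item 2 (hardness with classical advice, proved via the presampling/bit-fixing technique).

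Your plan misses this all-zeros observation, and as a result you must analyze quantum queries to the nontrivial YES oracle directly (hence your invocation of $\CPhO$/$\CStO$), which is exactly the complication the paper avoids. More importantly, your closing step --- union-bounding over all $2^{p(n)}$ classical witnesses --- requires per-witness distinguishing advantage $2^{-\omega(p(n))}$. You estimate a per-witness hit probability of $\poly(n)/2^{\Omega(n)}$, so the union bound yields roughly $2^{p(n)-\Omega(n)}$, which is only negligible when $p(n)=o(n)$. Since QCMA allows the witness length to be an arbitrary polynomial in $n$ (e.g., $n^2$), the union bound cannot close the argument without a circular dependence between the code parameters and the to-be-determined polynomial $p$. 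The presampling machinery that the paper relies on (Theorem 2.7, and its use inside Theorem 3.3) is precisely the standard tool for avoiding this union-bound loss and handling advice of arbitrary polynomial length directly; your proposal should either incorporate that or explicitly exploit the trivial NO oracle as the paper does.
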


A difference between the construction of~\cite{NN22} and ours is in the case when the oracle is fixed to some oracle from the distribution. In~\cite{NN22}, they can neither prove nor disprove the separation between {\sf QMA} and {\sf QCMA}. However, in our construction, the separation will disappear if we fix our oracle.

Finally, we observe that the problem considered in \cite{FOCS:YamZha22} gives a new superpolynomial separation between classical and quantum one-way communication complexity. Though such a separation has been known since 2004~\cite{STOC:BarJayKer04}, the new separation has two interesting features that Bob's input length is short and the classical lower bound holds even if Bob can classically access Alice's input as an oracle (see \Cref{sec:communication} for more detail).\footnote{As mentioned in \Cref{sec:concurrent}, a concurrent work by Aaronson, Buhrman, and Kretschmer \cite{ABK23} observes that there is a variant of \cite{STOC:BarJayKer04} that satisfies the former (but not latter).} 

\subsection{Overview}

\paragraph{Quantum Advantages and Separation for a Special Case.}
We will introduce the recent work on quantum advantages from unstructured oracles by Yamakawa and Zhandry~\cite{FOCS:YamZha22}, as it will be used for both of our results in this work. They proved that there exists a code $C \subseteq \Sigma^n$ and an oracle-aided function $f_C$ such that, relative to a random oracle from $H: [n] \times \Sigma \to \{0,1\}$, the function $f_C^H$ is easy to invert on any image with quantum access but inversion is hard with only classical access. The function is defined as the following: 
\begin{align*}
    f^H_C(v_1, \cdots, v_n) = \begin{cases}
                H(1, v_1) || H(2, v_2) || \cdots || H(n, v_n)  & \text{ if } (v_1, \cdots, v_n) \in C \\
                \bot  & \text{ if } (v_1, \cdots, v_n) \not\in C \\
            \end{cases}.
\end{align*}
Intuitively, although the function computes an entry-by-entry hash, the requirement that $(v_1,\cdots, v_n)$ must be a codeword enforces the hardness of inversion when only classical queries are allowed. More precisely, the underlying code $C$ satisfies a property called list-recoverability; even if a classical algorithm learns hash values of a subset $E_i \subseteq \Sigma$ for each of $H(i,\cdot)$, only a polynomial number of codewords can be found in $E_1 \times E_2 \times \cdots \times E_n$, which does not help invert random images\footnote{When all classical queries are non-adaptive, this is clearly true: as only polynomially many $f^H_c$ are known for codewords in $C$. The idea can be adapted to adaptive queries as well; we do not elaborate on it here. }. On the other hand, they showed a quantum algorithm that uses quantum queries to invert images.

Liu~\cite{EC:Liu23} observed that the inversion quantum algorithm only needs to make non-adaptive quantum queries that are independent of the image $y$. Therefore, the algorithm with quantum access can be easily cast into a quantum algorithm with quantum advice but no queries; on the other hand, he showed that, if an algorithm has no access to the random oracle, it can not invert even with a piece of exponentially large classical advice. Since given an image $y$ there are multiple pre-images of $y$, the above two statements lead to the separation between {\sf FBQP/qpoly} and {\sf FBQP/poly} when an algorithm has no access to the oracle. 

\paragraph{Allowing classical queries.}
Our first result is to extend the previous separation of {\sf FBQP/qpoly} and {\sf FBQP/poly} by Liu, by allowing quantum algorithms to make online \emph{classical queries} to $H$. Since the algorithm with quantum advice makes no queries, it also works in this setting of classical access. We only need to prove the hardness with classical advice: i.e., no quantum algorithms can invert with classical queries and bounded classical advice. 

Following the framework by Guo et al.~\cite{TCC:GLLZ21}, when only making classical queries (say, at most $T$), a piece of $S$ bits of classical advice is equivalent to the so-called ``ordinary advice'': the advice consists of only $S T$ coordinates, or $S T$ pairs of inputs and outputs. More precisely, the information the algorithm can learn from $S$ bits of advice and $T$ classical queries is roughly the same as that from $ST$ bits of ordinary advice and $T$ classical queries. Thus, the first step is to replace classical advice with ordinary classical advice. 

It now remains to show that a quantum algorithm with classical access to $H$ and short ordinary advice can not invert random images. As ordinary advice only gives information on at most polynomially many pairs of inputs and outputs, let $E_i$ denote the subset of inputs whose hash values under $H(i, \cdot)$ are in the ordinary advice; since the ordinary advice is of length polynomial, $|E_i|$ is polynomial for each $i \in [n]$. Now let us assume the algorithm makes non-adaptive queries that are independent of a challenge. In this case, we can further define $E'_i$ that consists of all $x$ inputs whose values $H(i,x)$ are known from these classical queries. The algorithm in total learns hashes of $H(i,\cdot)$ for the inputs in $E_i \cup E'_i$. Observing that $|E_i\cup E'_i|$ is polynomial for each $i$, by the list-recoverability of the underlying code $C$, the algorithm only learns values of $f^H_C$ for a polynomial number of codewords, which almost always never hits a random challenge. 

Lastly, we extend the proof to adaptive cases (for more details, please refer to \Cref{sec:YZ}).

\paragraph{Upgrading to {\sf BQP/qpoly} v.s. {\sf 
BQP/poly}.}
Next, we turn the above separation of relational classes into a separation of {\sf BQP/qpoly} v.s. {\sf 
BQP/poly}. Our idea is to define a language ${\cal L}$ through a family of random functions for each $n$:  $G_n\colon\bit^n\to\bit$, i.e., for $x\in\bit^n$, $x \in {\cal L}$ if and only if $G_n(x) = 1$. We omit the subscript when it is clear from context in the introduction.

We use an oracle $\cal O$ to hide the evaluation of $G$ on $x$, by requiring the algorithm to invert the function $f^H_C$ on image $x$.  More precisely, we define $\ora$ as follows: 
\begin{align*}
    \ora(\vv, x) = \begin{cases}
                G(x)  & \text{ if } f^H_C(\vv) = x, \\
                \bot  & \text{ otherwise}.
    \end{cases}
\end{align*}
Then an algorithm only gets oracle access to $\ora$, but not $H$ or $G$.

To see that $\calL$ is decidable by a $\BQP/\qpoly$ machine, we can just take the quantum advice as in~\cite{EC:Liu23}. On an input $x$, an algorithm generates $\vv$ such that $f^H_C(\vv)=x$ using the quantum advice. We can then evaluate $G(x)$ by querying $\ora$ at $(x,\vv)$ and decide if $x \in \cal L$. 

On proving that $\calL$ cannot be decided by any $\BQP/\poly$ machine, we leverage the statement proved above: given classical advice, by only querying a classical oracle $H$, it is hard for an efficient algorithm to invert $f^H_C$. The beyond result implies that the algorithm should only have negligible query weight on $\vv$ under the classical oracle $\ora$ such that $f^H_C(\vv)=x$; otherwise the algorithm can be turned into another one that inverts $f^H_C$. 

Therefore, we can reduce the problem to the case where an algorithm has only access to $G(y)$ for all $y \ne x$, but no access to $G(x)$ for the challenge $x$; the goal is still to learn $G(x)$. This is exactly the famous Yao's box problem~\cite{STOC:Yao90}: a piece of advice is allowed to depend on the whole oracle $G$, but then an online algorithm uses the advice to find $G(x)$ for a random $x$, with no access to $G(x)$. By adapting the ideas in~\cite{STOC:Yao90,C:DeTreTul10} and combining all the previous ideas with a standard diagonalization argument, we prove the separation ${\sf BQP^\ora/qpoly} \ne {\sf BQP^\ora/poly}$ relative to a classically accessible classical oracle.



\paragraph{Separation between {\sf QMA} and {\sf QCMA} relative to a classically accessible classical oracle.}

We first construct a problem that has a short quantum proof for {\sf YES} instances, no quantum proof for {\sf NO} instances, and no classical proof that can distinguish between {\sf YES} and {\sf NO} instances. Given random functions $H: [n] \times \Sigma \to \{0,1\}$ and $G : \{0,1\}^n \to \{0,1\}^n$, as well as a subset $S \subseteq \{0,1\}^n$ with a size of at least $2/3\cdot 2^n$, we create two pairs of oracles:

\begin{itemize}
\item For a {\sf YES} instance, the oracle $G$ and an oracle $\ora[G, H, \emptyset]$ are provided. The latter takes an input $t \in \{0,1\}^n$ and a vector $\vv \in C$ and outputs $1$ if and only if $f^H(\vv) = G(t)$.

\item If it is a {\sf NO} instance, the oracle $G$ and an oracle $\ora[G, H, S]$ is given. The latter takes as input a $t \in \{0,1\}^n$ and a vector $\vv \in C$, it outputs $1$ if and only if $f^H(\vv) = G(t)$ and $t \not\in S$.
\end{itemize}

A quantum algorithm $\As$, with the same advice as in \cite{EC:Liu23}, achieves the following:
\begin{itemize}
    \item The algorithm $\As$ on oracles $G, \ora$ (which will be either $\ora[G, H, \emptyset]$ or $\ora[G, H, S]$), uniformly at random samples $t \in \{0,1\}^n$. It then uses the quantum advice to compute a vector $\vv$ such that $f^H(\vv) = G(t)$ and outputs $\ora(t, \vv)$. 
    \item When $\ora[G, H, \emptyset]$ is given, by the correctness of the YZ algorithm, $\As$ will output $1$ with an overwhelming probability. 
    \item When $\ora[G, H, S]$ is given, by the definition and the condition $|S| \geq 2/3 \cdot 2^n$, $\As$ will output $1$ with a probability at most $1/3$. 
\end{itemize}

On the other hand, for any quantum algorithm $\Bs$ with classical queries and bounded-sized classical proof, it can not distinguish between the case of having access to $G, \ora[G, H, \emptyset]$ or $G, \ora[G, H, S]$. On a high level, the only way to tell the difference is by finding an input $(t, \vv)$ such that $\ora[G, H, \emptyset](t, \vv) \ne \ora[G, H, S](t, \vv)$. This will require  $\Bs$ to query on an input $(t, \vv)$ such that $f^H(\vv) = G(t)$, which is difficult for $\Bs$ with only classical advice. 

Finally, we mount our new separation result on the diagonalization argument and construct a language that shows the separation between {\sf QMA} and {\sf QCMA} relative to a classically accessible classical oracle. Please refer to \Cref{sec:QMA_QCMA_classical} for full details. 

\paragraph{Separation between {\sf QMA} and {\sf QCMA} relative to a distributional oracle.}
To separate $\QMA$ from $\QCMA$, we would try to separate two distributions of oracles, namely the {\sf YES} and {\sf NO} distributions, defined below: For a random function $H\colon[n]\times \Sigma\to \bit$, 
\begin{itemize}
    \item If it is a {\sf YES} instance, an oracle distribution $\{\ora[r]\}_{r}$ with the index being drawn uniformly at random, such that: $\ora[r]$ takes $\vv\in\Sigma^n$ and evaluates $f^H_C(\vv)$, and outputs 1 iff $f^H_C(\vv)=r$; 
    \item If it is a {\sf NO} instance, the oracle always outputs $0$. 
\end{itemize}


To see that the two distributions can be distinguished using a $\QMA$ machine, notice that we can take the quantum advice as in~\cite{EC:Liu23}, and generate $\vv$ such that $f^H_C(\vv)=r$. By querying $\ora$ at $\vv$, we can distinguish whether the oracle belongs to the {\sf YES} distribution or the {\sf NO} distribution.

To prove the two distributions cannot be distinguished by any $\QCMA$ machine, we notice that the difference between {\sf YES} and {\sf NO} oracles is only on inputs $\vv$ that $f^H_C(\vv)=r$. Therefore, we reduce it to the hardness of finding $\vv$ for random $r$ such that $f^H_C(\vv)=r$, even when given classical advice. 

In~\Cref{sec:QMA_QCMA}, we define unary languages $\calL_i$ and their related oracle distributions for each $n$. By a standard diagonalization argument, we can argue that there exists some language $\calL$ that is in $\QMA^\ora$ but not in $\QCMA^\ora$.
\subsection{Concurrent Work}\label{sec:concurrent}
A concurrent work by Aaronson, Buhrman, and Kretschmer \cite{ABK23}, among many results, proves $\mathsf{FBQP}/\poly\neq \mathsf{FBQP}/\qpoly$ \emph{unconditionally} where $\mathsf{FBQP}/\poly$ and $\mathsf{FBQP}/\qpoly$ are the relational variants of $\mathsf{BQP}/\poly$ and $\mathsf{BQP}/\qpoly$, respectively. The key insight behind the result is an observation that a variant of hidden matching problem~\cite{STOC:BarJayKer04} gives an exponential separation between classical and quantum one-way communication complexity with short input length for Bob. They essentially prove that any such a separation with efficient Bob can be used to prove $\mathsf{FBQP}/\poly\neq \mathsf{FBQP}/\qpoly$. 
We independently had an observation that \cite{FOCS:YamZha22} gives such a separation of classical and quantum one-way communication complexity with short input length for Bob (see \Cref{sec:communication}). 
By relying on their proof, which is fairly easy in hindsight,  it seems possible to prove $\mathsf{FBQP}/\poly\neq \mathsf{FBQP}/\qpoly$ by using \cite{FOCS:YamZha22} instead of the hidden matching problem.

A crucial difference between the one-way communication variant of \cite{FOCS:YamZha22} and the hidden matching problem is that the hardness of the former with classical communication holds even if Bob can classically query Alice's input. Due to this difference, one cannot reprove  $\BQP/\poly\neq \BQP/\qpoly$  and $\QMA\neq \QCMA$ 
relative to a classically-accessible classical oracle by using the hidden matching problem instead of \cite{FOCS:YamZha22}. On the other hand, it seems possible to prove $\QMA\neq \QCMA$ relative to a distributional quantumly-accessible classical oracle by using (a parallel repetition variant of) their variant of the hidden matching problem instead of \cite{FOCS:YamZha22}.
\section{Preliminaries}
\paragraph{Basic notations.}
For a set $X$, 
$|X|$ denotes the cardinality of $X$.
We write $x\la X$ to mean that $x$ is uniformly taken from $X$.
For a distribution $\mathcal{D}$ over classical strings, we write $x\la \mathcal{D}$ to mean that $x$ is sampled from the distribution $\mathcal{D}$.  
For sets $X$ and $Y$, $\Func(X,Y)$ denotes the set of all functions from $X$ to $Y$. 
For a positive integer $n$, $[n]$ denotes the set $\{1,2,...,n\}$.
QPT stands for ``Quantum Polynomial-Time''.
We use $\poly$ to mean a polynomial and $\negl$ to mean a negligible function.

\paragraph{Oracle variations.}
In the literature on quantum computation, 
when we say that a quantum algorithm has oracle access to $f:X\ra Y$, it usually means that it is given oracle access to an oracle that applies a unitary $\ket{x}\ket{y}\mapsto \ket{x}\ket{y\oplus f(x)}$.
We refer to such a standard oracle as \textbf{quantumly-accessible classical oracles}.
In this paper, we consider the following two types of non-standard oracles. 

The first is \textbf{classically-accessible classical oracles}.
A classically-accessible classical oracle for a classical function $f:X\ra Y$ takes a \emph{classical string} $x\in X$ as input and outputs $f(x)$.
In other words, when an algorithm sends $\sum_{x,y}\alpha_{x,y}\ket{x}\ket{y}$ to the oracle, the oracle first \emph{measures} the first register and then applies the unitary $\ket{x}\ket{y}\mapsto \ket{x}\ket{y\oplus f(x)}$. 
Note that classically-accessible classical oracles apply non-unitary operations.

The second is \textbf{distributional quantumly-accessible classical oracles}.
They are specified by a distribution $\mathcal{F}$ over classical functions $f$ rather than by a single function $f$. 
When we consider an algorithm that is given oracle access to a distributional quantumly-accessible classical oracle, it works as follows:  
At the beginning of an execution of the algorithm, a function $f$ is chosen according to the distribution $\mathcal{F}$, and then the algorithm has access to a quantumly-accessible classical oracle that computes $f$. Note that $f$ is sampled at the beginning and then the same $f$ is used throughout the execution.

\paragraph{Complexity classes.}
We define the complexity classes which we consider in this paper. 
Specifically, we define $\BQP/\qpoly$, $\BQP/\poly$, $\QMA$, and $\QCMA$ relative to classically-accessible classical oracles and $\QMA$ and $\QCMA$ relative to distributional quantumly-accessible classical oracles.
\begin{definition}[$\BQP/\qpoly$ and $\BQP/\poly$ languages relative to classically-accessible classical oracles.]
    Let $\ora$ be a classically-accessible classical oracle.
    A language $\mathcal{L}\subseteq \bit^*$ belongs to $\BQP/\qpoly$ relative to $\ora$ if 
    there is a QPT machine $\A$ and a polynomial-size family $\{\ket{z_n}\}_{n\in \mathbb{N}}$ of quantum advice such that for any $x\in \bit^*$, 
    \[
    \Pr[\A^\ora(x,\ket{z_{|x|}})=\mathcal{L}(x)]\ge 2/3
    \]
    where $\mathcal{L}(x):=1$ if $x\in \mathcal{L}$ and otherwise $\mathcal{L}(x):=0$.
    
$\BQP/\poly$ is defined similarly except that the advice is required to be classical.
\end{definition}

\begin{definition}[$\QMA$ and $\QCMA$ languages relative to classically-accessible classical oracles.]
    Let $\ora$ be a classically-accessible classical oracle.
    A language $\mathcal{L}\subseteq \bit^*$ belongs to $\QMA$ relative to $\ora$ if 
    there is a QPT machine $V$ with classical access to its oracle and a polynomial $p$ such that
    the following hold:
    \begin{description}
    \item[Completeness]
    For any $x\in \mathcal{L}$, there is a 
    $p(|x|)$-qubit witness $\ket{w}$ such that 
    \[
    \Pr[V^\ora(x,\ket{w})=1]\ge 2/3.
    \]
    \item[Soundness]
      For any $x\notin \mathcal{L}$ and 
    $p(|x|)$-qubit witness $\ket{w}$, 
    \[
    \Pr[V^\ora(x,\ket{w})=1]\le 1/3.
    \]
    \end{description}
$\QCMA$ is defined similarly except that the witness is required to be classical.
\end{definition}

\begin{definition}[$\QMA$ and $\QCMA$ languages relative to distributional quantumly-accessible classical oracles.]
    Let $\mathcal{F}$ be a distributional quantumly-accessible classical oracle, i.e., it specifies a distribution over classical functions $f$.
    A language $\mathcal{L}\subseteq \bit^*$ belongs to $\QMA$ relative to $\mathcal{F}$ if 
    there is a QPT machine $V$ with quantum access to its oracle and a polynomial $p$ such that
    the following hold:
    \begin{description}
    \item[Completeness]
    For any $x\in \mathcal{L}$, there is a 
    $p(|x|)$-qubit witness $\ket{w}$ such that 
    \[
    \Pr_{f\la \mathcal{F}}[V^f(x,\ket{w})=1]\ge 2/3.
    \]
    \item[Soundness]
      For any $x\notin \mathcal{L}$ and 
    $p(|x|)$-qubit witness $\ket{w}$, 
    \[
    \Pr_{f\la \mathcal{F}}[V^f(x,\ket{w})=1]\le 1/3.
    \]
    \end{description}
$\QCMA$ is defined similarly except that the witness is required to be classical.
\end{definition}
\begin{remark}
Notice that the quantum/classical witness $\ket{w}/w$ can only depend on the distribution $\F$ rather than a specific oracle $f$. 
\end{remark}


\paragraph{Non-Uniformity in the ROM.} Prior work has developed a number of tools to characterize the power of a non-uniform adversary in the random oracle model (ROM). Note that we are considering the ROM where we only allow classical access to random oracles unlike the quantum ROM~\cite{AC:BDFLSZ11}. This is sufficient for our purpose because we use these tools only for proving $\BQP/\qpoly \neq \BQP/\poly$ and $\QMA \neq \QCMA$ relative to \emph{classically-accessible} classical oracles. 

Similar to \cite{EC:Liu23}, we will be using the presampling technique, introduced by \cite{C:Unruh07} and further developed by \cite{EC:CDGS18,TCC:GLLZ21}. 
First, we define games in the ROM.
\begin{definition}[Games in the ROM]
\label{def:game}
A game $\game$ 
in the ROM is specified by three classical algorithms $\samp^H, \query^H$, and $\ver^H$ where $H\la \Func(X,Y)$ 
for some sets $X,Y$: 
\begin{itemize}
    \item $\samp^H(r)$: it is a deterministic algorithm that takes uniformly random coins $r \in \mathcal{R}$ as input, and outputs a challenge $\ch$. \footnote{We note that the randomness $r$ is separate from the randomness of $H$. $r$ is used to sample the challenge.}
    \item $\query^H(r, \cdot)$: it is a deterministic classical algorithm that hardcodes the randomness $r$ used to construct the challenge and provides the adversary's online queries. \footnote{As an example, for most applications, $\query^H(r, \cdot)=H(\cdot)$.}
    \item $\ver^H(r, \ans)$: it is a deterministic algorithm that takes the same random coins for generating a challenge and an alleged answer $\ans$, and outputs $b$ indicating whether the game is won ($b = 1$ for winning). 
\end{itemize}
Let $T_\samp$ be the number of queries made by $\samp$ and $T_\ver$ be the number of queries made by $\ver$. 

\medskip

For a fixed $H\in \Func(X,Y)$ and a quantum algorithm $\As$, the game $\game^H_{\As}$ 
is executed as follows:
\begin{itemize}
    \item A challenger $\Cs$ samples $\ch \gets \samp^H(r)$ using uniformly random coins $r$.
    \item A (uniform or non-uniform) quantum algorithm $\As$, that has classical oracle access 
    to $\query^H(r, \cdot)$,  
    takes $\ch$ as input and outputs $\ans$. We call $\As$ an online adversary/algorithm. 
    \item $b \gets \ver^H(r, \ans)$ is the game's outcome. 
\end{itemize}
\end{definition}
\begin{definition}\label{def:game_ROM}
    We say that a game $\game$ in the ROM has security $\delta(Z,Q):=\delta$ if
    $$
    \max_{\A}\Pr_H[\game_\A^H=1]\le \delta
    $$
    where 
    $H\la\Func(X,Y)$ and 
    $\max$ is taken over all $\A$ with $Z$-bit classical advice that makes $Q$ classical queries. 
\end{definition}

The presampling technique relates the probability of success of a non-uniform algorithm with classical queries to a random oracle with the success probability of a uniform algorithm in the $P$ bit-fixing game, as defined below.

\begin{definition}[Games in the $P$-BF-ROM]
\label{def:pbfrom} 
A game $\game$ in the $P$-BF-ROM is specified by two classical algorithms $\samp^H$ and $\ver^H$ that work similarly to those in \Cref{def:game}. 

For a fixed $H\in\Func(X,Y)$  and a pair of algorithms $(f,\As)$, the game $\game^H_{f,\As}$ 
is executed as follows:
\begin{itemize}
    \item {\bf Offline Stage:} Before a game starts, an offline algorithm $f$ (having no input) generates a list $\mathcal{L} = \{(x_i, y_i)\}_{i \in [P]}\in (X\times Y)^P$ containing at most $P$ input-output pairs (all $x_i$'s are distinct).
    \item {\bf Online Stage:} The game is then executed with $\A$ that knows $\mathcal{L}$ and has oracle access to $H$ similarly to \Cref{def:game}. $H$ is a function drawn at random such that it satisfies $\mathcal{L}$.
\end{itemize}
\end{definition}
\begin{definition}
    We say that a game $\game$ in the $P$-BF-ROM has security $\nu(P,Q):=\nu$ if  
    $$
    \max_{f,\A}\Pr_H[\game_{f,\A}^H=1]\le \nu
    $$
    where 
    $H\la\Func(X,Y)$ and 
    $\max$ is taken over all 
    $f$ that outputs $P$ input-output pairs and 
    $\A$ that makes $Q$ classical queries.  
\end{definition}

\begin{theorem}[{\cite[Theorem A.1]{EC:Liu23}}\footnote{Similar theorems with slightly worse bounds are presented in \cite[Theorem 5 and 6]{EC:CDGS18} and \cite[Theorem 3]{TCC:GLLZ21}.}]
\label{thm:classical_bf_to_nonuniform}
Let $\game$ be any game with $Q_\samp, Q_\ver$ being the number of queries made by $\samp$ and $\ver$. 
For any classical advice length $Z$, and number of online queries $Q$: 
\begin{enumerate}
\item \label{item:search} 
For $P=Z(Q+Q_\samp+Q_\ver)$, 
if $\game$ has security $\nu(P, Q)$ in the $P$-BF-ROM, then it has security $\delta(Z, Q) \leq 2 \cdot \nu(P, Q)$ against non-uniform unbounded-time algorithms with $Z$ bits of classical advice and $Q$ classical queries.  
\item \label{item:decision} 
For any $P > 0$, 
if $\game$ has security $1/2+\nu(P, Q)$ in the $P$-BF-ROM, then it has security $\delta(Z, Q) \leq 1/2 + \nu(P, Q) + {Z(Q + Q_\ver + Q_\samp)}/{P}$ against non-uniform unbounded-time algorithms with $Z$ bits of classical advice and $Q$ classical queries. 
\end{enumerate}
\end{theorem}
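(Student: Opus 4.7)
The plan is to follow the presampling framework of \cite{EC:CDGS18}, in the tightened classical form developed in \cite{EC:Liu23,TCC:GLLZ21}. The central observation is that, from the viewpoint of any classical $Q'$-query distinguisher, a random oracle $H$ conditioned on $Z$ bits of advice $z=\mathsf{adv}(H)$ is statistically close to a convex combination of $P$-bit-fixing oracles, with distinguishing error scaling like $Q'Z/P$.

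Concretely, the first step is to establish the following technical lemma: for every classical function $\mathsf{adv}\colon \Func(X,Y)\to\bit^Z$ and every $z\in\bit^Z$, there exists a distribution $\mathcal{D}_z$ over $P$-bit-fixing oracles such that for every classical $Q'$-query oracle distinguisher $D$,
\[
\Bigl|\Pr_{H\la\Func(X,Y)}\bigl[D^H(z)=1 \mid \mathsf{adv}(H)=z\bigr] - \Pr_{H'\la\mathcal{D}_z}\bigl[D^{H'}(z)=1\bigr]\Bigr|\le \frac{Q'\cdot Z}{P}.
\]
This is the classical CDGS statement and is proved via a convex-decomposition / compression argument exploiting that the conditional distribution of $H$ given $z$ has min-entropy deficit at most $Z$ below uniform.

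Given this lemma, Part~\ref{item:decision} is almost immediate: view the composed experiment $\samp$--$\A$--$\ver$ as a single distinguisher with $Q'=Q+Q_\samp+Q_\ver$ queries and apply the lemma with the free parameter $P$. Averaging over $z$ and taking the worst-case offline/online pair in the BF world yields $\delta(Z,Q)\le 1/2+\nu(P,Q)+Z(Q+Q_\samp+Q_\ver)/P$, as claimed.

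For Part~\ref{item:search}, the additive error is too weak, since we want a purely multiplicative $2\nu(P,Q)$ bound when $\nu$ is small. Here I would use a ``multiplicative'' variant of the presampling lemma: partition the advice values $z$ into ``good'' ones, for which the BF approximation contributes at most $\nu(P,Q)$ to the success probability, and ``bad'' ones, whose total mass is charged to the presampling error. Choosing $P=Z(Q+Q_\samp+Q_\ver)$ balances these two contributions and gives the $2\nu(P,Q)$ bound. The main obstacle is establishing this sharp multiplicative form of the presampling lemma: a naive union bound over $z$ would yield only an additive $Z(Q+Q_\samp+Q_\ver)/P$ term, and preserving the multiplicative structure for search games requires the finer convex decomposition of \cite[Theorem~A.1]{EC:Liu23}, which is the most delicate technical step.
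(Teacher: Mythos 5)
This statement is cited from \cite{EC:Liu23} rather than proved in the paper, so there is no in-paper proof to compare against; what follows is an assessment of your sketch against what the cited proof actually has to do.

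Your account of Part~\ref{item:decision} is essentially correct: once you have the (tightened) presampling lemma saying that, from the vantage point of a classical $Q'$-query distinguisher, $H$ conditioned on $z=\mathsf{adv}(H)$ is $Q'Z/P$-close to a convex combination of $P$-bit-fixing oracles, the additive bound falls out by treating the whole $\samp$--$\A$--$\ver$ pipeline as a $Q'=Q+Q_\samp+Q_\ver$-query distinguisher and averaging over $z$.

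Part~\ref{item:search} has a genuine gap, and your own sketch is circular at precisely the hard step: you write that ``preserving the multiplicative structure for search games requires the finer convex decomposition of \cite[Theorem~A.1]{EC:Liu23}'', but Theorem~A.1 of \cite{EC:Liu23} \emph{is} the statement being proved. Beyond that, your proposed ``good/bad advice value'' partition does not, on its own, produce a multiplicative $2\nu$ bound: even on good $z$ the bit-fixing approximation incurs an additive error, which cannot be folded into $\nu$ without a separate argument. The actual mechanism is the route through dense sources (what this paper records as \Cref{cl:CDGSalmostbitfixing}, \Cref{cl:CDGS-length-of-fixed}, and a companion claim from \cite{EC:CDGS18}): conditioned on $z$, the oracle is $\gamma$-close to a convex combination of $(P',1-\delta)$-dense sources with $P'\approx S_z/(\delta\log|Y|)$; and, crucially, for \emph{unpredictability} games a $T$-query algorithm's success probability against a $(P',1-\delta)$-dense source is at most a multiplicative factor $2^{\delta T\log|Y|}$ larger than against the corresponding $P'$-bit-fixing source. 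Choosing $\delta$ so that $2^{\delta T\log|Y|}\le 2$ (with $T=Q+Q_\samp+Q_\ver$) fixes $P'\approx S_z\cdot T$; then \Cref{cl:CDGS-length-of-fixed} bounds the mass of ``bad'' $z$ with $S_z$ much larger than $Z$, and putting $P=Z\cdot T$ (with the GLLZ/Liu tightening that removes the $\log 1/\nu$ overhead present in the original CDGS bound) yields $\delta(Z,Q)\le 2\nu(P,Q)$. Without the multiplicative dense-to-bit-fixing comparison, the argument collapses back to an additive bound, which is exactly why Part~\ref{item:search} requires its own proof and is not a corollary of Part~\ref{item:decision}.
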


We can use the above correspondence to bound the success probability of a non-uniform quantum algorithm with classical queries to the Yao's Box game \cite{STOC:Yao90}.


\begin{lemma}[Yao's Box with Classical Queries~\cite{STOC:Yao90, C:DeTreTul10}]
\label{lem:classical-Yao-box}
    Let $G:[N] \to \{0, 1\}$ be a random function. Let $\mathcal{A}$ be an unbounded-time algorithm, with $Z$ bits of (classical) advice $z_G$ and $Q$ classical queries to $G$. 
    The probability that $\mathcal{A}$ computes $G(x)$ without querying $G$ at random index $x$ is at most
    \[
        \Pr_x[\mathcal{A}^G(z_G, x) = G(x)] \leq \frac{1}{2} + 2 \sqrt{\frac{Z(Q + 1)}{N}}.
    \]
\end{lemma}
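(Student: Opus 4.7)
The plan is to reduce the non-uniform, classically-advised setting to the $P$-BF-ROM via \Cref{thm:classical_bf_to_nonuniform} (item 2, the decision version), and then bound the advantage in the $P$-BF-ROM by a direct information-theoretic argument.

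First, I would cast Yao's box as a game $\game$ in the ROM in the sense of \Cref{def:game}: $\samp^G(r)$ parses $r$ as an index $x \in [N]$ and outputs $\ch := x$ (so $Q_\samp = 0$); $\query^G(r, i)$ returns $G(i)$ when $i \neq x$ and $\bot$ otherwise, which enforces the ``no query at $x$'' constraint on $\As$; and $\ver^G(r, \ans)$ queries $G(x)$ once and outputs $1$ iff $\ans = G(x)$ (so $Q_\ver = 1$). Thus proving the lemma is equivalent to upper bounding $\delta(Z, Q)$ for this game.

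Next, I would analyze the game in the $P$-BF-ROM. Fix any offline strategy $f$ producing a list $\calL = \{(x_i, y_i)\}_{i \in [P]}$ and any uniform online adversary $\As$ making $Q$ classical queries. Condition on the challenge $x$: if $x$ coincides with some $x_i$ in $\calL$, which happens with probability at most $P/N$ over the sampling of $x$, then $\As$ can always output the correct value $y_i$. Otherwise $x \notin \{x_i\}$, and because $G$ is drawn uniformly subject to agreeing with $\calL$ and $\As$ never queries the forbidden index $x$, the value $G(x)$ remains a uniformly random bit independent of $\calL$ and of the transcript of online queries; any output of $\As$ therefore matches $G(x)$ with probability exactly $1/2$. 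Combining gives
\[
\Pr_G\bigl[\game^G_{f, \As} = 1\bigr] \;\le\; \frac{P}{N} + \left(1 - \frac{P}{N}\right)\cdot \frac{1}{2} \;=\; \frac{1}{2} + \frac{P}{2N},
\]
so the $P$-BF-ROM security is $1/2 + \nu(P, Q)$ with $\nu(P, Q) = P/(2N)$.

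Plugging into \Cref{thm:classical_bf_to_nonuniform} item 2 with $Q_\samp + Q_\ver = 1$ yields, for every $P > 0$,
\[
\delta(Z, Q) \;\le\; \frac{1}{2} + \frac{P}{2N} + \frac{Z(Q+1)}{P}.
\]
Optimizing by choosing $P = \sqrt{2 N Z (Q+1)}$ (or any convenient nearby value) balances the two error terms and gives $\delta(Z, Q) \le 1/2 + \sqrt{2 Z(Q+1)/N} \le 1/2 + 2\sqrt{Z(Q+1)/N}$, which is the claimed bound. The only subtlety, and thus the main place to be careful, is the step that argues $G(x)$ is uniform conditional on the transcript when $x$ is not presampled: one must verify that the bit-fixing structure together with the ``forbidden query'' gadget in $\query$ really does leave $G(x)$ independent of $\As$'s view, so that neither the $Z$-bit advice (already absorbed into $\calL$ by presampling) nor the $Q$ online queries leak any information about $G(x)$.
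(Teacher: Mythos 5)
Your proposal follows essentially the same route as the paper: cast Yao's box as a ROM game with $Q_\samp=0$, $Q_\ver=1$, and a query oracle that forbids index $x$; bound the $P$-BF-ROM security by observing that a win requires $x$ to be among the $P$ presampled points (else $G(x)$ is uniform and independent of the view); and then apply \Cref{item:decision} of \Cref{thm:classical_bf_to_nonuniform} with $P\approx\sqrt{Z(Q+1)N}$. Your intermediate bound $\nu(P,Q)=P/(2N)$ is a touch tighter than the paper's $P/N$, but this only changes the constant inside the square root and yields the same stated conclusion.
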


The above lemma was essentially proved in~\cite{C:DeTreTul10}, but we offer here an alternative proof using the presampling technique of~\Cref{thm:classical_bf_to_nonuniform}.


\begin{proof}[Proof of \Cref{lem:classical-Yao-box}]
    We consider the bit-fixing game in the $P$-BF-ROM, where we fix the value of $G$ at $P$ positions in the offline phase. Since $\mathcal{A}$ is not allowed to query $G$ at position $x$ even in the $P$-BF-ROM (because it queries $G$ via the original $\query^G$), the only way for $\mathcal{A}$ to successfully compute $G(x)$ is if $x$ is included in the set of fixed positions during the offline phase. This happens with probability $\frac{P}{N}$ and thus
    $$\nu(P, T) \leq \frac{P}{N}.$$

    In this game $Q_\samp = 0$, since the Sampler outputs a challenge $x \in [N]$ uniformly at random, without the need to perform any queries. The Verifier only needs to query $G$ at position $x$, thus $Q_\ver = 1$. The statement of~\Cref{item:decision} of \Cref{thm:classical_bf_to_nonuniform} with $P = \sqrt{Z(Q+1)N}$ implies that the advantage of a non-uniform algorithm with advice $z_G$ and $Q$ queries is at most
    $$\delta(Z, Q) \leq \frac{1}{2} + 2 \sqrt{\frac{Z(Q+1)}{N}}$$.
\end{proof}

The series of definitions and lemmas that follow relate the ROM with auxiliary input to two related kinds of distributions: bit-fixing and dense distributions. Looking ahead, these are used in the separation between $\QMA$ and $\QCMA$ relative to a classically-accessible oracle in \Cref{sec:QMA_QCMA_classical}.
\begin{definition}  
    For $P\in \mathbb{N}$, 
    we say that a distribution $\calD$ supported by functions $G\colon X\to Y$ is 
    a $P$-bit-fixing distribution if there is a subset $S\subseteq X$ such that 
    $|S|\le P$ and 
    \begin{itemize}
        \item for all  $x\in S$,  $G(x)$ takes the same value for all $G$ in the support of $\calD$, and 
        \item for all  $x\notin S$,  $G(x)$ is uniformly distributed over $Y$ when we sample $G\la \calD$. 
    \end{itemize}
    We say that $\calD$ is fixed on $x$ if and only if $x\in S$. 
\end{definition}

In addition, we will consider the following generalization of a $P$-bit-fixing distribution. A $(P, 1-\delta)$-dense distribution is one where the non-fixed coordinates are close to uniformly distributed. The parameter $1 - \delta$ controls how close to uniform they are.
\begin{definition}
    A distribution $\calD$ supported on functions $G:X \to Y$ is called
    \begin{itemize}
        \item $(1-\delta)$-dense if for every subset $I \subset X$,
        $$H_\infty(G(I)) \geq (1 - \delta)\cdot |I|\cdot \log |Y| = (1-\delta)\cdot |Y|^{|I|}.$$
        \item $(P, 1-\delta)$-dense if it is fixed on at most $P$ coordinates and is $(1-\delta)$-dense on the rest.
    \end{itemize} 
\end{definition}

\angelos{added formal definition of corresponding bit-fixing distribution}
We will further define for each $(P, 1-\delta)$-dense distribution $\calD$ its corresponding $P$-bit-fixing distribution to be the unique bit-fixing distribution with the same fixed coordinates. Formally:

\begin{definition}
    Let $\calD$ be a $(P, 1-\delta)$-dense distribution supported on functions $G:X \to Y$. Its corresponding $P$-bit-fixing distribution $\calD'$ is also supported on functions $G:X \to Y$ and is fixed on the same coordinates that $\calD$ is, and uniform on the rest.
\end{definition}

\begin{claim}{\cite[Claim 2]{EC:CDGS18}}
    \label{cl:CDGSalmostbitfixing}
    Let $G:X \to Y$ be a random oracle, and let $f$  be an arbitrary function that maps a random oracle to $S$-bit advice. For an arbitrary classical advice $z \in \{0, 1\}^S$, let $G_z$ be the distribution of $G$ conditioned on $z = f(G)$. 
    Let $S_z = |X|\log |Y| - H_\infty(G_z)$ be the min-entropy deficiency of $G_z$. For every $\gamma, \delta > 0$, $G_z$ is $\gamma$-close to a convex combination of finitely many $(P', 1-\delta)$-dense sources for
    $$P' = \frac{S_z + \log 1/\gamma}{\delta\cdot \log |Y|}.$$
\end{claim}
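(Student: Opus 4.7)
The plan is to prove this via a greedy iterative-fixing decomposition. Start with $\calD_0 := G_z$ and $S_0 := \emptyset$. At each step, if the current distribution $\calD$ restricted to the unfixed coordinates $X \setminus S$ is $(1-\delta)$-dense, declare $\calD$ a ``good'' leaf with fixed set $S$. Otherwise, failure of density yields a subset $I \subseteq X \setminus S$ and a string $y_I \in Y^{|I|}$ with $p := \Pr_{G \leftarrow \calD}[G(I) = y_I] > |Y|^{-(1-\delta)|I|}$. Split $\calD = p\cdot \calD_1 + (1-p)\cdot\calD_2$ where $\calD_1 := \calD \mid G(I)=y_I$ and $\calD_2 := \calD \mid G(I)\neq y_I$, recursing on both branches and adding $I$ to the fixed set only in the $\calD_1$ branch.

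The key invariant to maintain is the min-entropy deficiency on the unfixed coordinates, $\Phi(\calD, S) := (|X| - |S|)\log|Y| - H_\infty(\calD|_{X\setminus S})$, which starts at most $S_z$. Since $\log(1/p) < (1-\delta)|I|\log|Y|$ at a fixing step, conditioning on $G(I)=y_I$ loses at most $(1-\delta)|I|\log|Y|$ bits of min-entropy while the unfixed-support budget shrinks by $|I|\log|Y|$; hence $\Phi$ drops by at least $\delta|I|\log|Y|$ in the $\calD_1$ branch. In the $\calD_2$ branch, $\Phi$ rises by at most $\log(1/(1-p))$. Therefore on any root-to-leaf path, if $P$ is the total number of fixed coordinates,
\[
P \cdot \delta\log|Y| \;\le\; S_z + \sum_{\text{no-steps}}\log\!\bigl(1/(1-p_j)\bigr) \;=\; S_z + \log\!\bigl(1/\textstyle\prod_j(1-p_j)\bigr),
\]
so any leaf with $P > P'$ necessarily has $\prod_j(1-p_j) < \gamma$, and in particular lies on a path whose raw mass weight is below $\gamma$.

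To turn this per-path bound into a $\gamma$-close decomposition into $(P',1-\delta)$-dense distributions, I would allocate the error budget $\gamma$ asymmetrically across the splits---giving the $\calD_1$ branch a fresh budget $\gamma$ and the $\calD_2$ branch a fresh $\gamma/(1-p)$ at each split, with a truncation rule that cuts off any branch whose cumulative no-mass drops below $\gamma$---and aggregate the per-path inequality above into a total-variation bound of $O(\gamma)$, absorbing the constant factor into $\log(1/\gamma)$ by rescaling $\gamma$ at the outset. The main obstacle is precisely this error-accounting step: the per-leaf bound ``every bad leaf has mass below $\gamma$'' does not on its own imply a bound on the total bad mass, so the analysis has to either use the asymmetric budget allocation or, equivalently, set up a supermartingale on the random walk of $G \leftarrow \calD$ through the tree whose exponential moment controls the probability of ever reaching a leaf with more than $P'$ fixings.
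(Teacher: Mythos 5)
Your framework---the greedy iterative-fixing tree with the deficiency potential $\Phi$, and the resulting per-path inequality $P\,\delta\log|Y| \le S_z + \log\bigl(1/\prod_j(1-p_j)\bigr)$---is correct and matches the construction underlying Claim 2 of \cite{EC:CDGS18}. The problem is precisely the one you flag: the per-leaf bound does not aggregate, and neither of your two proposed repairs closes the induction as stated. For the supermartingale, take any $c>0$ and set $\alpha := \delta\log|Y|$; a split with yes-probability $p$ gives $\Ex\bigl[2^{c(P_{t+1}+\Phi_{t+1}/\alpha)}\mid\text{past}\bigr] \le 2^{c(P_t+\Phi_t/\alpha)}\bigl(p + (1-p)^{1-c/\alpha}\bigr)$, and the multiplier $p + (1-p)^{1-c/\alpha}$ is strictly greater than $1$ for every $p\in(0,1)$ (it equals $1$ at the endpoints and has derivative $c/\alpha>0$ at $p=0$). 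So the natural exponential potential is a \emph{sub}martingale, and Markov gives nothing. The asymmetric budget $\gamma_{\calD_1}=\gamma$, $\gamma_{\calD_2}=\gamma/(1-p)$ fails for the same arithmetic reason: writing $B(v)$ for the fraction of the mass under node $v$ that reaches a bad leaf, the inductive step reads $B(v) \le p\,\gamma_{\calD_1} + (1-p)\,\gamma_{\calD_2} = (1+p)\gamma$, and the stray $(1+p)$ factor compounds over the depth of the tree.

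The underlying obstacle is that the potential jump $\log\bigl(1/(1-p)\bigr)$ at a no-step is unbounded precisely when the no-branch is rare, a borderline trade-off that a generic random-walk or Chernoff-type bound cannot resolve; treating the splits as arbitrary worst-case steps throws away too much. The proof of Claim 2 in \cite{EC:CDGS18} (and the density-restoring partition argument it builds on) uses additional structure of the decomposition beyond the per-step worst-case potential changes to carry out the error accounting. You should work through that argument; everything up to and including your per-path inequality can be retained as is.
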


In \Cref{cl:CDGSalmostbitfixing}, the number of fixed entries is related to $S_z$, the min-entropy deficiency of $G_z$, which can be exponentially large. The following claim shows that $S_z$ is close to the length $S$ of the classical advice with high probability.

\begin{claim}{\cite[Claim 4]{EC:CDGS18}}
    \label{cl:CDGS-length-of-fixed}
    Let $G:X \to Y$ be a random oracle, and let $f$ be an arbitrary function that maps a random oracle to some $S$-bit advice. Additionally, let $S_z$ be the min-entropy deficiency of $G_z$ as defined in \Cref{cl:CDGSalmostbitfixing}. Then the distribution of $S_z$ satisfies
    $$\Ex_{G,z \la f(G)}[S_z] \leq S\quad \text{and}\quad \Pr_{G,z\la f(G)}[S_z > S + \log 1/\gamma] \leq \gamma.$$ 
\end{claim}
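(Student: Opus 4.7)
My plan is to reduce both bounds to standard properties of the Shannon entropy of $f(G)$. The key observation is that since $G$ is uniform over $Y^X$ and $f$ is a deterministic function, the conditional distribution $G_z$ (i.e., $G$ conditioned on $f(G)=z$) is simply the uniform distribution over the preimage $f^{-1}(z)$. Hence $H_\infty(G_z) = \log |f^{-1}(z)|$, and since $\Pr_G[f(G)=z] = |f^{-1}(z)|/|Y|^{|X|}$, we obtain the clean identity
$$S_z \;=\; |X|\log|Y| - H_\infty(G_z) \;=\; -\log \Pr_{G}[f(G) = z],$$
which identifies $S_z$ with the self-information of $z$ under the distribution of $f(G)$.

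From this identity the expectation bound is immediate: $\Ex_{z \la f(G)}[S_z]$ is by definition the Shannon entropy $H(f(G))$, and this is at most $\log|\mathrm{supp}(f(G))| \leq S$ since $f$ outputs at most $S$ bits. For the tail bound, the plan is a direct counting argument: letting $B = \{z : \Pr_G[f(G)=z] < 2^{-S}\gamma\}$, we have
$$\Pr_{G,z\la f(G)}[S_z > S+\log(1/\gamma)] \;=\; \sum_{z \in B} \Pr_G[f(G) = z] \;\leq\; |B| \cdot 2^{-S}\gamma \;\leq\; 2^S \cdot 2^{-S}\gamma \;=\; \gamma,$$
where we used that the total number of possible advice strings is at most $2^S$.

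I do not anticipate any substantive obstacle, as the argument is essentially the textbook fact that self-information has Shannon entropy as its mean and admits a Markov-style tail bound. The only subtle point is verifying that conditioning a uniform distribution on $f(G)=z$ yields a uniform distribution on the preimage $f^{-1}(z)$, so that min-entropy and log-support-size coincide in this conditional distribution; this step is exactly where we use that $f$ is deterministic, so that no extra randomness enters the conditional distribution.
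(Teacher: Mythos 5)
Your proof is correct, and it follows the same route as the original CDGS proof: identify $S_z$ with the self-information $-\log\Pr_G[f(G)=z]$ (using that $G_z$ is uniform on $f^{-1}(z)$), then read off the expectation bound as $H(f(G))\le S$ and the tail bound by counting the at most $2^S$ advice strings. The paper cites this claim directly from [EC:CDGS18] without reproducing a proof, so there is nothing further to compare against.
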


Finally, we will use the following claim from \cite{EC:CDGS18}, which provides a way to translate between a $(P, 1-\delta)$-dense distribution to a $P$-bit-fixing one.

\begin{claim}{\cite[Claim 3]{EC:CDGS18}}
\label{cl:dense-to-bit-fixing}
    Let $\calD$ be a $(P, 1-\delta)$-dense distribution supported on functions $G:X \to Y$ and $\calD'$ be its corresponding $P$-bit-fixing distribution. Then, for any (adaptive) distinguisher $\B$ that makes at most $T$ oracle queries,
    $$\left|\Pr[\B^\calD = 1] - \Pr[\B^{\calD'} = 1]\right| \leq T\delta \cdot \log |Y|.$$ 
\end{claim}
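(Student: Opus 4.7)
The plan is to bound the distinguishing advantage by the total variation distance between the joint distributions of query-answer transcripts under $\calD$ and $\calD'$, and to bound this TV distance via a direct pointwise ratio argument rather than through a hybrid or Pinsker-style route; this is what delivers the claimed linear-in-$T\delta\log|Y|$ bound (a naive KL$+$Pinsker analysis would only give the weaker $\sqrt{T\delta\log|Y|}$).

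Without loss of generality, assume the distinguisher $\B$ is deterministic and never repeats a query, so that each oracle $G$ determines a unique transcript $\tau_G=((x_1,G(x_1)),\ldots,(x_T,G(x_T)))$, with $x_i$ a deterministic function of $G(x_1),\ldots,G(x_{i-1})$. Let $p_\calD$ and $p_{\calD'}$ denote the induced distributions on transcripts when $G\la \calD$ and $G\la \calD'$ respectively; $\B$'s distinguishing advantage is at most the total variation distance between them.

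Next, I would establish a pointwise ratio bound on transcripts. For a consistent transcript $\tau$ (one whose values on fixed coordinates match the common fixed values of $\calD$ and $\calD'$), let $I_\tau$ be the set of queries in $\tau$ that land on non-fixed coordinates, so $|I_\tau|\le T$. Since $\calD'$ is uniform on the non-fixed coordinates, $p_{\calD'}(\tau)=|Y|^{-|I_\tau|}$. Applying the $(1-\delta)$-density condition to the subset $I_\tau$ yields $p_\calD(\tau)=\Pr_{G\la \calD}[G(I_\tau)=\vec{y}_{I_\tau}]\le 2^{-H_\infty(G(I_\tau))}\le |Y|^{-(1-\delta)|I_\tau|}$, and hence
$$
p_\calD(\tau)\ \le\ |Y|^{\delta|I_\tau|}\cdot p_{\calD'}(\tau)\ \le\ |Y|^{\delta T}\cdot p_{\calD'}(\tau).
$$
Transcripts that are inconsistent with the fixed values have $p_\calD(\tau)=p_{\calD'}(\tau)=0$ and contribute nothing. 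Summing the positive parts of the differences then gives
$$
\mathrm{TV}(p_\calD,p_{\calD'})=\sum_\tau\bigl(p_\calD(\tau)-p_{\calD'}(\tau)\bigr)_+\ \le\ \bigl(|Y|^{\delta T}-1\bigr)\sum_\tau p_{\calD'}(\tau)\ =\ |Y|^{\delta T}-1.
$$
I would finish with the elementary estimate $2^x-1\le x$ for $x\in[0,1]$, giving $|Y|^{\delta T}-1\le T\delta\log|Y|$ whenever $T\delta\log|Y|\le 1$; in the complementary regime the claim is trivial since total variation is at most $1$.

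The main point that looks like an obstacle at first glance is the adaptivity of $\B$: the set $I_\tau$ of non-fixed queries appearing in the transcript depends on $\tau$, so one cannot pre-commit to a single subset $I$ and analyse a static distribution on $Y^{|I|}$. This is resolved by the fact that the $(1-\delta)$-density condition holds \emph{simultaneously} for every subset $I\subseteq X$; the pointwise ratio bound therefore goes through uniformly regardless of how $I_\tau$ is realised by $\B$'s adaptive choices.
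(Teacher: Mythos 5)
The paper does not give its own proof of this claim; it is imported verbatim as \cite[Claim 3]{EC:CDGS18}. So there is no paper-internal argument to compare against, and the question is whether your blind proof is correct. It is, and it matches the argument one finds in the CDGS paper in its essential structure: reduce to total variation of transcript distributions, establish the pointwise ratio $p_\calD(\tau)\le |Y|^{\delta|I_\tau|}\,p_{\calD'}(\tau)$ from the $(1-\delta)$-density condition applied to the (adaptively determined) set $I_\tau$ of non-fixed coordinates queried, and finish with $2^x-1\le x$ on $[0,1]$ plus the trivial bound in the complementary regime. All the individual steps check out: the reduction to deterministic, non-repeating $\B$ is without loss of generality; for a consistent transcript, $p_{\calD'}(\tau)=|Y|^{-|I_\tau|}$ because $\calD'$ is uniform on non-fixed coordinates and both distributions agree (deterministically) on the fixed ones; the density bound $\Pr_{G\la\calD}[G(I_\tau)=\vec y_{I_\tau}]\le|Y|^{-(1-\delta)|I_\tau|}$ applies to every subset $I_\tau$ of the non-fixed coordinates simultaneously, which, as you correctly point out, is exactly what neutralizes the adaptivity concern; and the one-sided ratio bound suffices because $\mathrm{TV}=\sum_\tau(p_\calD(\tau)-p_{\calD'}(\tau))_+$. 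The only thing worth flagging is cosmetic: the paper's Definition of $(1-\delta)$-density contains a typo ($(1-\delta)\cdot|Y|^{|I|}$ should be $(1-\delta)\cdot\log(|Y|^{|I|})$), and you implicitly used the corrected reading, which is the right one.
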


\paragraph{One-way to hiding lemma.} 
We review a lemma called the one-way to hiding lemma originally proven by Unruh \cite{JACM:Unruh15}. The following is a variant proven in \cite{C:AmbHamUnr19}. 
\begin{lemma}[{One-Way to Hiding Lemma~\cite[Theorem~3]{C:AmbHamUnr19}}]
\label{lem:o2h}
Let $S \subseteq \mathcal{X}$ be random.
Let $G,H \colon \mathcal{X} \to \mathcal{Y}$ be random functions satisfying $\forall x \not\in S~[G(x) = H(x)]$.
Let $z$ be a random classical bit string or quantum state.
($S,G,H,z$ may have an arbitrary joint distribution.)
Let $\A$ be an oracle-aided quantum algorithm that makes at most $Q$ quantum queries. 
Let $\B$ be an algorithm that on input $z$ chooses $i\la [q]$, runs $\A^H(z)$, measures $\A$'s $i$-th query, and outputs the measurement outcome.
Then we have
\[
|\Pr[\A^{G}(z)=1]-\Pr[\A^H(z)=1]|\leq 2Q\sqrt{\Pr[\B^{H}(z)\in S]}.
\]
\end{lemma}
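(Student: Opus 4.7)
The plan is to prove the lemma by a hybrid argument over the $Q$ queries of $\A$, reducing the distinguishing advantage between oracles $G$ and $H$ to the cumulative weight of $\A$'s intermediate query states on $S$, and then relating this cumulative weight to the success probability of the extractor $\B$. A helpful first move is to condition on a fixed sample of $(S, G, H, z)$ and prove the bound pointwise; the average over the joint distribution will then follow by Jensen's inequality at the very end, which is important because $z$ is allowed to be a quantum state correlated with $(S, G, H)$ and cannot be cloned or re-sampled across the hybrids.

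The core technical step is the following. Let $|\phi_i\rangle$ denote the state of $\A^H(z)$ immediately before its $i$-th query, and let $P_S := \sum_{x\in S}\ket{x}\bra{x}$ be the projector (on the query register) onto basis vectors indexed by $S$. Since $G$ and $H$ agree on $\mathcal{X}\setminus S$, their standard query unitaries $U_G$ and $U_H$ satisfy $(U_G - U_H)|\phi\rangle = (U_G - U_H)P_S|\phi\rangle$ for every $|\phi\rangle$, which gives $\|(U_G - U_H)|\phi_i\rangle\| \le 2 \|P_S |\phi_i\rangle\|$. Define hybrids $\A_0,\A_1,\ldots,\A_Q$ where $\A_k$ uses oracle $H$ for the first $k$ queries and $G$ for the rest, so $\A_0 = \A^G(z)$ and $\A_Q = \A^H(z)$. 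A triangle inequality over the chain yields that the final-state Euclidean distance is at most $2\sum_{i=1}^{Q}\|P_S|\phi_i\rangle\|$.

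Now apply Cauchy-Schwarz to obtain $\sum_i \|P_S|\phi_i\rangle\| \le \sqrt{Q\sum_i \|P_S|\phi_i\rangle\|^2}$. The quantity $\|P_S|\phi_i\rangle\|^2$ is exactly the probability that $\B^H(z)$ outputs an element of $S$ conditioned on choosing index $i$, and since $\B$ chooses $i$ uniformly from $[Q]$, one has $\sum_i \|P_S|\phi_i\rangle\|^2 = Q \cdot \Pr[\B^H(z) \in S]$. Plugging back gives a final-state Euclidean distance bounded by $2Q\sqrt{\Pr[\B^H(z)\in S]}$. Converting Euclidean distance into a bound on the difference of accepting probabilities (via the standard fact that the trace distance of two pure states is at most their Euclidean distance) and then averaging over $(S,G,H,z)$ using Jensen's inequality on the concave square root yields the stated inequality.

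The main obstacle I expect is bookkeeping: keeping the hybrid argument consistent with a fixed initial state across the chain, tracking constants carefully to land exactly at the tight factor $2Q$ rather than something larger, and performing the expectation over the joint distribution of $(S,G,H,z)$ only at the final step so that Jensen can be applied cleanly. Once the pointwise bound $\| |\A^G(z)\rangle - |\A^H(z)\rangle\| \le 2Q\sqrt{\Pr[\B^H(z)\in S]}$ is established, the rest is standard.
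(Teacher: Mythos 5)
The paper does not prove \Cref{lem:o2h}; it states it as a known result cited from Ambainis--Hamburg--Unruh (Theorem~3 of \cite{C:AmbHamUnr19}), so there is no in-paper proof to compare against. Your argument is nevertheless a correct and essentially standard proof of the Unruh-style O2H bound: the hybrid chain with $(U_G-U_H)=(U_G-U_H)P_S$ giving $\|(U_G-U_H)\ket{\phi_i}\|\le 2\|P_S\ket{\phi_i}\|$, the triangle inequality over the $Q$ hybrids, Cauchy--Schwarz to pass from $\sum_i\|P_S\ket{\phi_i}\|$ to $\sqrt{Q\sum_i\|P_S\ket{\phi_i}\|^2}$, the identification $\sum_i\|P_S\ket{\phi_i}\|^2 = Q\cdot\Pr[\B^H(z)\in S]$, and the final Jensen step over the joint distribution of $(S,G,H,z)$ all go through and land on the tight factor $2Q$. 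The only phrasing I would tighten is the conversion from Euclidean distance to distinguishing advantage: what you need is that for any two-outcome measurement $M$, $|\Pr[M(\psi_1)=1]-\Pr[M(\psi_2)=1]|\le \|\ket{\psi_1}-\ket{\psi_2}\|$ (which follows from the pure-state trace-distance bound $\sqrt{1-|\braket{\psi_1|\psi_2}|^2}\le\|\ket{\psi_1}-\ket{\psi_2}\|$); as written it reads slightly loosely, but the fact invoked is the right one.
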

We remark that we consider quantum access to the oracles in the above lemma since we use it in the proof of the separation between $\QMA$ and $\QCMA$ relative to a distributional \emph{quantumly-accessible} oracle in \Cref{sec:QMA_QCMA}.
\section{\texorpdfstring{Quantum vs Classical Advice for \cite{FOCS:YamZha22}}{Classical Advice for the YZ problem}}\label{sec:YZ}

We review the result of \cite{FOCS:YamZha22}, which gives an $\NP$-search problem that is easy for $\BQP$ machines but hard for $\BPP$ machines relative to a random oracle. 
Then we show that the problem is easy for $\BQP$ machines with quantum advice and no online query but hard for unbounded-time machine with polynomial-size classical advice and polynomially-many classical online queries relying on an observation of \cite{EC:Liu23}. 
\begin{definition}[\cite{FOCS:YamZha22}]\label{def_YZ_function}
Let $C\subseteq \Sigma^n$ be a code over an alphabet $\Sigma$ and $H: [n]\times \Sigma \rightarrow \bit$ be a function. The following function is called a YZ function with respect to $C$ and $H$:
\begin{align*}
    &f_C^H:C\rightarrow \bit^n\\
&f_C^H(v_1,v_2,...,v_n)=H(1,v_1)||H(2,v_2)||...||H(n,v_n).
\end{align*}
\end{definition}
\begin{remark}\label{rem:subscript_n}
When we refer to a code $C\subseteq \Sigma^n$, it actually means a family $\{C_n\}_{n\in \mathbb{N}}$ of codes $C_n\subseteq \Sigma_n^n$ over an alphabet $\Sigma_n$.  
We often omit the dependence of $n$ for notational simplicity.
\end{remark}

\cite{FOCS:YamZha22} shows that there is an error correcting code $C$, which satisfies a property called \emph{list-recoverability}, such that $f_C^H$ is easy to invert with quantum queries to $H$ but hard to invert with classical queries to $H$ where $H$ is modeled as a random oracle.   Liu~\cite{EC:Liu23} observed that the quantum inversion algorithm of \cite{FOCS:YamZha22} does not need to make adaptive queries, and having a polynomial-size quantum advice on $H$ that is independent of the target $r$ suffices. In addition, he proved that classical advice does not suffice for the task if no adaptive query is allowed. Combining the above, we have the following theorem.
\begin{theorem}[\cite{FOCS:YamZha22,EC:Liu23}]\label{thm:YZLiu_separation}
There is a code $C\subseteq \Sigma^n$ satisfying the following:
\begin{enumerate}
    \item\label{item:YZ_easiness_with_quantum_advice} {\bf(Easiness with Quantum Advice)}
    There is a QPT algorithm $\A$ and a family of $\poly(n)$-qubit quantum advice $\{\ket{z_H}\}_{H}$
    such that for any $r\in\bit^n$, 
    \begin{align*}
        \Pr_{H}[
    f_C^H(\A(\ket{z_H},r))=r)
    ]\ge  1-\negl(n)
    \end{align*}
where $H\la \Func([n]\times \Sigma,\bit)$.
\item {\bf(Hardness with Classical Advice)}  For any unbounded-time algorithm $\B$ and polynomial $s$, there is a negligible function $\mu$ such that for any family of $s(n)$-bit classical advice $\{z_H\}_{H}$,  
     \begin{align*}
        \Pr_{H,r}[
    f_C^H(\B(z_H,r))=r
    ]
    \le \mu(n)
    \end{align*}
where $H\la \Func([n]\times \Sigma, \bit)$
and $r\la \bit^n$. 
\end{enumerate}
Moreover, $C$ satisfies $(\zeta,\ell,L)$-list-recoverability, i.e., for any subset $T_i\subseteq \Sigma$ such that $|T_i|\le \ell$ for every $i\in[n]$, 
\begin{align*}
    |\{(v_1,...,v_n)\in C:|\{i\in [n]:x_i\in T_i\}|\ge (1-\zeta)n\}|\le L
\end{align*}
where $\zeta=\Omega(1)$, $\ell=2^{n^c}$, and $L=2^{\tilde{O}(n^{c'})}$ for some constants $0<c<c'<1$.
\end{theorem}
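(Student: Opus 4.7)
The plan is to assemble the theorem from three ingredients, each already essentially present in \cite{FOCS:YamZha22} or \cite{EC:Liu23}, so the task reduces to combining them cleanly. I take $C\subseteq\Sigma^n$ to be the $(\zeta,\ell,L)$-list-recoverable code constructed in \cite{FOCS:YamZha22} with the stated parameters; the ``moreover'' clause is then a direct quotation from that paper.

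For the easiness with quantum advice, I invoke the non-adaptive form of the \cite{FOCS:YamZha22} inversion algorithm singled out by \cite{EC:Liu23}. The structural point is that every $H$-query made by the YZ inverter is independent of the target $r$, so the result of all of those queries can be compressed into a $\poly(n)$-qubit quantum state. I hard-code this state as $\ket{z_H}$ and let $\A$ run the rest of the inverter on input $(\ket{z_H},r)$ with zero online queries; correctness is inherited from the analysis in \cite{FOCS:YamZha22,EC:Liu23}.

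For the hardness with classical advice, the plan is to reduce to the $P$-bit-fixing model via \Cref{item:search} of \Cref{thm:classical_bf_to_nonuniform}. With $Q=0$ online queries, $Q_\samp=0$ (the sampler just outputs a uniform $r$), and $Q_\ver=n$ (the verifier evaluates $f_C^H$ coordinatewise), one obtains $P=s(n)\cdot n=\poly(n)$. In the $P$-BF-ROM, let $T_i\subseteq\Sigma$ be the set of $v$ for which $(i,v)$ is among the fixed input-output pairs, so $\sum_i|T_i|\le P$ and in particular $|T_i|\le P\le\ell$ for every $i$ (once $n$ is large enough). List-recoverability then yields $|B|\le L$ where $B:=\{v\in C:|\{i:v_i\in T_i\}|\ge (1-\zeta)n\}$.

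To finish, I bound $\Pr_{H,r}[f_C^H(\B(z_H,r))=r]$ in the bit-fixing model by splitting on whether the output codeword $v=\B(z_H,r)$ lies in $B$. If $v\notin B$, at least $\zeta n$ coordinates are unfixed, so $\Pr_H[f_C^H(v)=r]\le 2^{-\zeta n}$ for every $r$. If $v\in B$, rather than union-bounding over all (exponentially many) codewords in $C$, I group the $r$'s by the codeword $\B$ outputs and use $\sum_r\Pr_H[f_C^H(v)=r]\le 1$ for each $v$, which bounds the total contribution by $|B|/2^n\le L/2^n$. Since $\zeta=\Omega(1)$ and $L=2^{\tilde{O}(n^{c'})}$ with $c'<1$, both terms are negligible, so $\nu(P,0)\le L/2^n+2^{-\zeta n}$ is negligible and \Cref{thm:classical_bf_to_nonuniform} transfers the bound to the non-uniform classical-advice setting. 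The main subtlety is precisely this case split: a naive union bound over all codewords would lose a factor of $|C|$ which is exponential, so I must exploit that $\B$'s output is a deterministic function of $r$ (once the advice is fixed, since there are no online queries) in order to accumulate the $B$-contribution across all $r$ simultaneously.
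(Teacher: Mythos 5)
Your proposal is correct, and it tracks the paper's approach closely. One thing to keep in mind for calibration: the paper does not actually prove \Cref{thm:YZLiu_separation} itself — it cites it from \cite{FOCS:YamZha22,EC:Liu23} — but it does give a full proof of the closely related \Cref{thm:YZLiu_separation_classical_query}, which extends the hardness to allow polynomially many online \emph{classical} queries. Your argument is, in effect, the $Q=0$ specialization of that proof, which is indeed how the hardness half of the cited theorem is established in \cite{EC:Liu23}. Both the reduction to the $P$-BF-ROM via \Cref{item:search} of \Cref{thm:classical_bf_to_nonuniform} and the case split using list-recoverability to isolate a small ``dangerous'' set $B$ of codewords match the paper's framework.

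Two small remarks on the bookkeeping. For the $v\in B$ case, your grouping-by-$v(r)$ argument is correct, but you slightly overstate the subtlety: the paper's route is simply to union-bound over $B$ itself (there denoted $\mathsf{Good}$, with $|\mathsf{Good}|\le L$), observing that for each fixed $\vv\in\mathsf{Good}$ we have $\Pr_r[f_C^H(\vv)=r]=2^{-n}$; the ``naive union bound over all of $C$'' you warn against is never on the table once list-recoverability has cut things down to $B$. The paper's phrasing also has the advantage of not relying on $\B$'s output being a deterministic function of $r$, which is why it generalizes smoothly to the $Q>0$ setting of \Cref{thm:YZLiu_separation_classical_query}. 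For the $v\notin B$ case, your direct $2^{-\zeta n}$ bound exploits $Q=0$ to avoid the extra factor of $L$ (and the ``$K$-queried'' bookkeeping) that the paper needs when there are adaptive online queries; that is a legitimate simplification at this level of generality, and both bounds are negligible anyway. The easiness-with-quantum-advice half is correctly delegated to the non-adaptive inverter observation of \cite{EC:Liu23}, as the paper also does.
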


We extend \Cref{thm:YZLiu_separation} to require that the hardness with classical advice holds even if $\A$ is given a classical access to $H$. This can be seen as a unification of \cite{FOCS:YamZha22} and \cite{EC:Liu23}.

\begin{theorem}\label{thm:YZLiu_separation_classical_query}
Let $C\subseteq \Sigma^n$ be the code in \Cref{thm:YZLiu_separation}, the following holds:
\begin{description}
\item[Hardness with Classical Advice and Classical Queries]  For any unbounded-time algorithm $\B$ that makes polynomially many classical queries to $H$ and polynomial $s$, there is a negligible function $\mu$ such that for any family of $s(n)$-bit classical advice $\{z_H\}_{H}$,  
     \begin{align*}
        \Pr_{H,r}[
    f_C^H(\B^H(z_H,r))=r
    ]
    \le \mu(n)
    \end{align*}
where $H\la \Func([n]\times \Sigma, \bit)$
and $r\la \bit^n$. 
\end{description} 
\end{theorem}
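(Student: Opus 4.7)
The plan is to combine the presampling technique with the list-recoverability argument of~\cite{FOCS:YamZha22}. First I would invoke \Cref{thm:classical_bf_to_nonuniform} (Item~\ref{item:search}) to reduce the non-uniform setting (with $s(n)$-bit classical advice and $Q$ online classical queries) to the uniform $P$-BF-ROM with $P = s(n) \cdot (Q + Q_\samp + Q_\ver) = \poly(n)$ fixed input--output pairs. It thus suffices to show $\nu(P, Q) = \negl(n)$ in the $P$-BF-ROM, where any algorithm $\B$ gets oracle access to a random $H$ consistent with the bit-fixing, can make $Q$ additional classical queries, and must output $v \in C$ with $f_C^H(v) = r$ for uniform $r \la \bit^n$.

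For a given execution, define $E_i \subseteq \Sigma$ as the set of $v$ for which $H(i, v)$ is known to $\B$, either from the bit-fixing or from an online query. Since $\sum_{i \in [n]} |E_i| \le P + Q = \poly(n)$, each $|E_i| \le \poly(n) \le \ell = 2^{n^c}$, so $(\zeta, \ell, L)$-list-recoverability of $C$ applies. For the alleged output $v = (v_1, \ldots, v_n)$, let $J(v) = \{i \in [n] : v_i \notin E_i\}$. On the branch $|J(v)| > \zeta n$, the values $\{H(i, v_i)\}_{i \in J(v)}$ are fresh uniform bits independent of $\B$'s view, so the probability that they all match the corresponding $r_i$ is at most $2^{-\zeta n}$. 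On the branch $|J(v)| \le \zeta n$, list-recoverability caps the number of candidate codewords at $L$; for each, a uniform $r$ equals $f_C^H(v) \in \bit^n$ with probability $2^{-n}$, so a union bound yields $L \cdot 2^{-n}$. Combining the branches, $\nu(P, Q) \le L / 2^n + 2^{-\zeta n} = \negl(n)$, since $L = 2^{\tilde{O}(n^{c'})}$ with $c' < 1$.

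The main obstacle is handling adaptivity: since the sets $E_i$ depend on $\B$'s queries which in turn depend on $r$, the list in the second branch is itself $r$-dependent. The standard workaround, used in~\cite{FOCS:YamZha22}, is to condition on the full transcript (bit-fixing, $\B$'s randomness, queries, and answers) before applying the union bound; for each fixed transcript the $E_i$ are deterministic, hence so is the $L$-size list of candidate codewords, and the union bound over the uniform $r$ then goes through. Averaging over transcripts preserves the bound. Plugging the resulting $\nu(P, Q)$ back into \Cref{thm:classical_bf_to_nonuniform} yields $\delta(s(n), Q) \le 2 \nu(P, Q) = \negl(n)$, which completes the proof.
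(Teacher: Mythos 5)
Your overall strategy matches the paper: reduce to the $P$-BF-ROM via \Cref{item:search} of \Cref{thm:classical_bf_to_nonuniform}, then analyze the resulting game using list-recoverability and a fresh-randomness argument. Your Branch~1 (when $|J(v^*)| > \zeta n$, the unsampled hash values at $v^*$ are fresh bits, giving $2^{-\zeta n}$) is sound. Branch~2, however, has a genuine gap, and the workaround you propose does not close it.

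The issue is exactly the adaptivity obstacle you flag. You build the sets $E_i$ out of \emph{all} of $\B$'s queries, so the $L$-size list obtained from list-recoverability is a function of $r$. Your fix --- condition on the full transcript, then union-bound over the uniform $r$ --- is circular: the transcript contains $\B$'s input $r$ and all of its $r$-dependent queries, so conditioned on the transcript there is no remaining randomness in $r$ to union-bound over. Concretely, for any $v^*$ that $\B$ ``discovers'' during its online queries (having already verified that the revealed hash coordinates match $r$), $\Pr_r[f_C^H(v^*)=r \mid \text{transcript}]$ is not $2^{-n}$; it can be~$1$. This is not what \cite{FOCS:YamZha22} does.

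The paper (and \cite{FOCS:YamZha22}) handles this by separating offline from online information. The set $\mathsf{Good}$ is defined using \emph{only} the offline bit-fixing sets $T_i$; since these are chosen before $r$ is sampled, $\mathsf{Good}$ is $r$-independent and the union bound over uniform $r$ gives $L \cdot 2^{-n}$ for $v^*\in\mathsf{Good}$. For $v^*\notin\mathsf{Good}$, the paper tracks the first point during the online phase at which a codeword becomes $\lceil(1-\zeta)n\rceil$-queried; under lazy sampling, the remaining $\lfloor\zeta n\rfloor$ coordinates are unsampled at that moment, so the probability that this codeword turns out to be a valid preimage of $r$ is at most $2^{-\lfloor\zeta n\rfloor}$, and a union bound over the at most $L$ codewords that ever reach that threshold gives $L \cdot 2^{-\lfloor\zeta n\rfloor}$. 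Note that your Branch~2 contains both $v^*\in\mathsf{Good}$ and the ``not $\mathsf{Good}$ but eventually small $J$'' case; only the former admits a clean union bound over $r$, and the latter genuinely needs the first-time-$K$-queried bookkeeping. Your claimed $L/2^n$ bound for Branch~2 therefore cannot be proved as stated, though the final conclusion (negligibility) survives with the paper's decomposition.
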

\begin{proof}
    The proof is obtained by combining the proofs of \cite{FOCS:YamZha22} and \cite{EC:Liu23}. We give a full proof for completeness.
    By \Cref{item:search} of \Cref{thm:classical_bf_to_nonuniform}, we only have to prove that the winning probability of the following game in the $P$-BF-ROM is $\negl(n)$ for any $P=\poly(n)$ and an unbounded-time algorithm $\B$ that makes $Q=\poly(n)$ online classical queries.
\begin{description}
    \item[Offline Stage] 
    $\B$ chooses list $L=\{x_k,y_k\}_{k \in [P]}$ of $P$ input-output pairs 
    where 
    $x_k\in [n]\times \Sigma$ and $y_k\in \bit$ for each $k\in[P]$
    and $x_k\neq x_{k'}$ for all $k\neq k'$. 
    Then the random oracle $H\la \Func([n]\times \Sigma,\bit)$ is uniformly chosen under the constraint that $H(x_k)=y_k$ for all $i\in[P]$.
    \item[Online Stage]
    $\B$ takes $r\la \bit^n$ as input, 
    makes $Q$ classical queries to $H$, and outputs $\vv^*=(v_1^*,...,v_n^*)\in \Sigma^n$.
    \item[Decision] 
    $\B$ wins if $\vv^*\in C$ and $f_C^H(\vv^*)=r$.
\end{description}
For $i\in[n]$, let $T_i:=\{v_i\in \Sigma: x_k=(i,v_i)~\text{for some}~k\in [P]\}$. 
Let 
\begin{align*}
    \mathsf{Good}:=\{(v_1,...,v_n)\in C:|\{i\in [n]:x_i\in T_i\}|\ge (1-\zeta)n\}.
\end{align*}
By the $(\zeta,\ell,L)$-list-recoverability and $|T_i|\le P\le 
2^{n^{c}}
\le 
\ell$ (for sufficiently large $n$), we have 
$|\mathsf{Good}|\le L$.  
We consider the following two cases:
\paragraph{Case 1:~$\vv^*\in \mathsf{Good}$.} 
For each element  $\vv\in \mathsf{Good}$, $\Pr_{r\la \bit^n}[f_C^H(\vv)=r]=2^{-n}$. Thus, by the union bound, the probability that $\B$ wins is at most $|\mathsf{Good}|\cdot 2^{-n}=2^{-\Omega(n)}$ by $|\mathsf{Good}|\le L=2^{\tilde{O}(n^{c'})}$ and $c'<1$.

\paragraph{Case 2:~$\vv^*\notin \mathsf{Good}$.}  
The analysis of this case is very similar to the proof of soundness in \cite{FOCS:YamZha22} and the following proof is partially taken verbatim from theirs. 
For each $i\in[n]$ and $j\in [Q]$, let $S_i^j \subseteq \Sigma$ be the set of elements $v_i$ such that $\B$ ever queried  $(i,v_i)$ by the point when it has just made the $j$-th query. 
Let $\hat{S}_i^j:= S_i^j \cup T_i$ for each $i\in[n]$ and $j\in \{0,1,...,Q\}$.
Without loss of generality, we assume that $v_i^*\in \hat{S}_i^Q$ for all $i\in[n]$.\footnote{This might increase $Q$ by at most $n$, but $T$ is still $\poly(\secp)$ anyway.}
After the $j$-th query, we say that a codeword $\vv=(v_1,...,v_n)\in C$ is \emph{$K$-queried} 
if there is a subset $I\in [n]$ such that $|I|= K$, $v_i\in \hat{S}_i^j$ for all $i\in I$, and $v_i\notin \hat{S}_i^j$ for all $i\notin I$.
By the assumption that $\vv^*\notin \mathsf{Good}$, $\vv^*$ is $K_0$-queried for some $K_0\le \lceil(1-\zeta)n \rceil$ right after the offline stage.
By the assumption that $v_i^*\in \hat{S}_i^Q$ for all $i\in[n]$, $\vv^*$ is $n$-queried after the $Q$-th query. 
Since a $K$-queried codeword either becomes $(K+1)$-queried or remains $K$-queried by a single query, $\vv^*$ must be $K$-queried at some point of the execution of $\B$ for all $K=K_0,K_0+1,...,n$.

We consider the number of codewords that ever become $K$-queried for $K=\lceil(1-\zeta)n \rceil$. 
If $\vv=(v_1,...,v_n)\in C$ is $\lceil(1-\zeta)n \rceil$-queried at some point, the number of $i$ such that $v_i\in \hat{S}_i^{Q}$ is at least $\lceil(1-\zeta)n \rceil$ since  $\hat{S}_i^j\subseteq \hat{S}_i^{Q}$ for all $i,j$. We have $|\hat{S}_i^{Q}|\le Q+P=\poly(n)<2^{n^c}$ for sufficiently large $n$. 
On the other hand, $C$ is $(\zeta,\ell,L)$-list recoverable for $\ell=2^{n^c}$ and $L=2^{\tilde{O}(n^{c'})}$.  
Thus, the number of  codewords that ever become $\lceil(1-\zeta)n \rceil$-queried is at most $L=2^{\tilde{O}(n^{c'})}$.  

Suppose that we simulate the oracle $H$ for $\B$ via lazy sampling, that is, instead of uniformly choosing random functions at first, we sample  function values whenever they are specified in the list sent in the offline stage or queried in the online stage.  
Suppose that a codeword $\vv$ becomes $\lceil(1-\zeta)n \rceil$-queried at some point of the execution of the experiment. Since the function values on the unqueried $\lfloor\zeta n \rfloor$ positions are not sampled yet, $\vv$ can become a valid proof only if all those values happen to be consistent to $r$, which occurs with probability $\left(\frac{1}{2}\right)^{\lfloor\zeta n \rfloor}=2^{-\Omega(n)}$ by $\zeta=\Omega(1)$. 
Since one of them is the final output $\vv^*$, by the union bound, the probability that $\vv^*$ is a valid proof is at most 
$L\cdot \left(\frac{1}{2}\right)^{\lfloor\zeta n \rfloor}=2^{-\Omega(n)}$ by $L=2^{\tilde{O}(n^{c'})}$ and $c'<1$.
\end{proof}


We have the following corollary. 
The motivation of showing this corollary is to ensure that ``a large fraction of $H$ works for all $r$'' rather than that ``for all $r$, a large fraction of $H$ works''. Looking ahead, this is needed for proving an oracle separation for $\BQP/\poly$ and $\BQP/\qpoly$ (but not for $\QMA$ and $\QCMA$).  
\begin{corollary}\label{cor:YZLiu_separation_classical_query_amp}
  Let $C$ be the code in \Cref{thm:YZLiu_separation}. Then the following hold:
\begin{enumerate}    \item\label{item:amp_quantum_easiness}  {\bf (Easiness with Quantum Advice)}
    There is a QPT algorithm $\A$ and a family of $\poly(n)$-qubit quantum advice $\{\ket{z_\mathcal{H}}\}_{\mathcal{H}}$
    such that 
    \begin{align*}
        \Pr_{\mathcal{H}}[
            \forall r\in \bit^n~
            &\Pr[\exists j\in [n]~\text{s.t.}~ f_C^{H^{(j)}}(\vv^{(j)})=r: (\vv^{(1)},...,\vv^{(n)})\la \A(\ket{z_\mathcal{H}},r)]\ge 1-\negl(n)
        ] \\
        &\ge  1-\negl(n)
    \end{align*}
where $\mathcal{H}=(H^{(1)},...,H^{(n)})\la \left(\Func([n]\times \Sigma,\bit)\right)^n$. 
\item\label{item:amp_classical_hardness}  {\bf (Hardness with Classical Advice and Classical Queries)}  For any unbounded-time algorithm $\B$ that makes $\poly(n)$ classical queries to $\mathcal{H}=(H^{(1)},...,H^{(n)})$ and polynomial $s$, there is a negligible function $\mu$ such that for any family of $s(n)$-bit classical advice $\{z_\mathcal{H}\}_{\mathcal{H}}$,  
     \begin{align*}
        \Pr_{\mathcal{H},r}[
   \exists j\in [n]~\text{s.t.}~ f_C^{H^{(j)}}(\vv^{(j)})=r: (\vv^{(1)},...,\vv^{(n)})\la \B^\mathcal{H}(z_\mathcal{H},r)
    ]
    \le \mu(n)
    \end{align*}
where $\mathcal{H}=(H^{(1)},...,H^{(n)})\la \left(\Func([n]\times \Sigma,\bit)\right)^n$ 
and $r\la \bit^n$. 
\end{enumerate}
\end{corollary}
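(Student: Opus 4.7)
The plan is to reduce both items to the single-copy statements of \Cref{thm:YZLiu_separation} and \Cref{thm:YZLiu_separation_classical_query} by exploiting the independence of the $n$ coordinates of $\mathcal{H}=(H^{(1)},\ldots,H^{(n)})$.

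For the easiness (item~\ref{item:amp_quantum_easiness}), I would take the new quantum advice to be the tensor product $\ket{z_\mathcal{H}} := \bigotimes_{j \in [n]} \ket{z_{H^{(j)}}}$ of $n$ copies of the single-copy advice from \Cref{thm:YZLiu_separation}, and have the new $\A$ run the single-copy inverter once per coordinate to produce $(\vv^{(1)},\ldots,\vv^{(n)})$. Fix any $r\in\bit^n$. For each $j$, \Cref{thm:YZLiu_separation} gives $\Pr_{H^{(j)},\A}[f_C^{H^{(j)}}(\vv^{(j)})\neq r]\le \epsilon(n)$ for some negligible $\epsilon$. Writing $q_r(\mathcal{H})$ for the success probability of the $n$-copy $\A$ over its internal randomness at fixed $\mathcal{H}$, independence of $H^{(1)},\ldots,H^{(n)}$ yields $\Ex_\mathcal{H}[1-q_r(\mathcal{H})]\le \epsilon(n)^n$. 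Markov then gives $\Pr_\mathcal{H}[q_r(\mathcal{H}) < 1-\epsilon(n)^{n/2}]\le \epsilon(n)^{n/2}$, and a union bound over $r\in\bit^n$ produces $\Pr_\mathcal{H}[\exists r:q_r(\mathcal{H}) < 1-\epsilon(n)^{n/2}]\le 2^n \epsilon(n)^{n/2}$, which is $\negl(n)$ since $\epsilon(n)\le 1/16$ for all large $n$.

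For the hardness (item~\ref{item:amp_classical_hardness}), a black-box reduction to the single-copy hardness looks obstructed: the classical advice $z_\mathcal{H}$ can correlate all $n$ oracles arbitrarily, and simulating the other $n-1$ oracles while embedding a given single-copy challenge would seem to require information that cannot fit into polynomial advice. Instead, I would rerun the proof of \Cref{thm:YZLiu_separation_classical_query} directly, adapted to $n$ parallel copies. Concretely, apply \Cref{thm:classical_bf_to_nonuniform}\,(\ref{item:search}) to pass to the $P$-BF-ROM with $P=\poly(n)$; then, for each $j\in[n]$ and $i\in[n]$, define $T_i^{(j)}\subseteq\Sigma$ to be the set of positions at which $H^{(j)}(i,\cdot)$ is pinned by the offline list, and note that $\sum_{j,i}|T_i^{(j)}|\le P\le \ell$, so the $(\zeta,\ell,L)$-list-recoverability of $C$ bounds $|\mathsf{Good}^{(j)}|\le L$ separately for each $j$. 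Case~1 ($\vv^{(j)}\in\mathsf{Good}^{(j)}$ for some $j$) contributes at most $nL/2^n = 2^{-\Omega(n)}$ by a union bound over $j$ on top of the original argument. Case~2 (no coordinate lies in its $\mathsf{Good}^{(j)}$) is handled copy-by-copy exactly as in the original proof: for each $j$, the number of codewords that ever become $\lceil(1-\zeta)n\rceil$-queried during the online phase on copy $j$ is at most $L$, each matches $r$ on its $\lfloor\zeta n\rfloor$ unqueried positions with probability $2^{-\lfloor\zeta n\rfloor}$, and summing over $j$ and codewords gives $nL\cdot 2^{-\lfloor\zeta n\rfloor} = 2^{-\Omega(n)}$.

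The step I expect to require the most care is the probabilistic amplification in the easiness: the inner $\negl(n)$ must survive the union bound over all $2^n$ targets $r$. This is why the Markov step relaxes $\epsilon(n)^n$ into the weaker $\epsilon(n)^{n/2}$, leaving enough slack for the $2^n$ outer union bound to still give a $\negl(n)$ failure probability over $\mathcal{H}$. The hardness side, by contrast, just inherits a harmless factor-$n$ loss from running the same list-recoverability argument across $n$ parallel copies.
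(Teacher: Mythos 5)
Your proof of Item~\ref{item:amp_quantum_easiness} is essentially the paper's: same tensor-product advice, same independence bound $\epsilon^n$, same Markov step to $\epsilon^{n/2}$, same union bound over $2^n$ targets. For Item~\ref{item:amp_classical_hardness}, however, you take a genuinely different route, and you do so based on a misconception. You claim a black-box reduction to \Cref{thm:YZLiu_separation_classical_query} is ``obstructed'' because the other $n-1$ oracles ``cannot fit into polynomial advice.'' But the paper does not try to put $\mathcal{H}_{\bar{j}}$ into the advice: it \emph{hardwires} $\mathcal{H}_{\bar{j}}$ and a coordinate index $j$ into the unbounded-time reduction adversary $\B'$, while the advice $z'_H[j,\mathcal{H}_{\bar{j}}]:=z_\mathcal{H}$ retains length $s(n)$. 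Since the hardness statement quantifies over unbounded-time $\B$, hardwiring is free; averaging over $j$ costs a factor $1/n$, and then the best $(j^*,\mathcal{H}^*_{\bar{j^*}})$ is fixed. Your alternative---passing to the $P$-BF-ROM over the combined oracle $\mathcal{H}$ and rerunning the list-recoverability case analysis per copy, with a union bound over $j\in[n]$ in both cases---is also correct and costs the same factor $n$. It buys you self-containedness and avoids the (perfectly valid) ``fix the best hardwired parameters'' maneuver, at the expense of re-deriving rather than reusing \Cref{thm:YZLiu_separation_classical_query}; the paper's reduction is shorter once one sees the hardwiring trick. You should revise the sentence explaining why the black-box reduction fails, since as written it asserts something false about what such a reduction would require.
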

\begin{proof}
    For proving \Cref{item:amp_quantum_easiness}, we can set the advice as $\ket{z_\mathcal{H}}=\ket{z_{H^{(1)}}}\otimes\ket{z_{H^{(2)}}}\otimes\cdots\otimes\ket{z_{H^{(n)}}}$, and the algorithm just parallel runs the algorithm in~\Cref{thm:YZLiu_separation} for different $H^{(i)}$. 
    Assume the algorithm in~\Cref{thm:YZLiu_separation} satisfies:
    \begin{align*}
         \Pr_{H}[
    f_C^H(\A(\ket{z_H},r))=r)
    ]\ge  1-\eta(n),
    \end{align*}
    for some negligible function $\eta(n)$. For any $r\in\{0,1\}^n$, 
    \begin{align*}
        \Pr_\mathcal{H}[\forall j\in [n], f_C^{H^{(j)}}(\vv^{(j)})\neq r: (\vv^{(1)},...,\vv^{(n)})\la \A(\ket{z_\mathcal{H}},r)]\leq \eta(n)^n.
    \end{align*}
     By an averaging argument, at most $\eta(n)^{n/2}$ fraction of $\mathcal{H}$ will satisfy 
    \begin{align*}
        \Pr[\forall j\in [n], f_C^{H^{(j)}}(\vv^{(j)})\neq r: (\vv^{(1)},...,\vv^{(n)})\la \A(\ket{z_\mathcal{H}},r)]\geq \eta(n)^{n/2}
    \end{align*}
    By applying a union bound over all $r \in \{0, 1\}^n$, we obtain that
    \begin{align*}
        \Pr_\mathcal{H}[\exists r\in\bit^n \Pr[\forall j\in [n], f_C^{H^{(j)}}(\vv^{(j)})\neq r: (\vv^{(1)},...,\vv^{(n)})\la \A(\ket{z_\mathcal{H}},r)]\geq \eta(n)]\leq (4\eta(n))^{\frac{n}{2}}
    \end{align*}
    Applying negation, we obtain the bound above.

    For proving \Cref{item:amp_classical_hardness}, we will show how an adversary $\B$ that breaks hardness with classical advice and classical queries, can be used to construct an adversary $\B'$ that breaks hardness with classical advice and classical queries of~\Cref{thm:YZLiu_separation_classical_query}. To show this, we go through the following steps:
    \begin{enumerate}
        \item For each fixed $j \in [n],\mathcal{H}_{\Bar{j}}=(H^{(1)},\dots,H^{(j-1)},H^{(j+1)},\dots,H^{(n)})$, we define a pair\\ $(\B'[j,\calH_{\Bar{j}}],\{z'_H[j,\calH_{\Bar{j}}]\}_H)$ of an adversary and advice in which $j$ and $\calH_{\Bar{j}}$ is hardwired.
        \item Show that $(\B'[j,\calH_{\Bar{j}}],\{z'_H[j,\calH_{\Bar{j}}]\}_H)$ breaks \Cref{thm:YZLiu_separation_classical_query} on average over the choice of $j$.
        \item Fix the ``best'' $j$ and $\calH_{\Bar{j}}$ (w.r.t. random $H$) to get a fixed pair of algorithm and advice that breaks hardness of \Cref{thm:YZLiu_separation_classical_query}.
    \end{enumerate}
    Specifically, it works as follows. Suppose there exists some adversary $\B$ and some polynomial $q(n)$ such that for sufficiently large $n$,
    \begin{align*}
        \Pr_{\mathcal{H},r}[
   \exists j\in [n]~\text{s.t.}~ f_C^{H^{(j)}}(\vv^{(j)})=r: (\vv^{(1)},...,\vv^{(n)})\la \B^\mathcal{H}(z_\mathcal{H},r)
    ]
    \geq q(n)
    \end{align*}
    
    For each $j$ and $\calH_{\Bar{j}}$, we define $(\B'[j,\calH_{\Bar{j}}],\{z'_H[j,\calH_{\Bar{j}}]\}_H)$ as follows:
    \begin{description}
        \item[$z'_H{[}j,\calH_{\Bar{j}}{]}$:] Given $\calH_{\Bar{j}}$, set
        $$\mathcal{H} = (H^{(1)}, \dots, H^{(j-1)}, H, H^{(j+1)}, \dots, H^{(n)}).$$ Set $z'_H[j,\calH_{\Bar{j}}]$ to the advice of $\B$ for $\mathcal{H}$, i.e. $z'_H[j,\calH_{\Bar{j}}]:=z_\mathcal{H}$.
        \item[${\B'[j,\calH_{\Bar{j}}]}^{H}(z'_H{[}j,\calH_{\Bar{j}}{]},r)$:] 
        It runs $\B^{\mathcal{H}}(z'_H[j,\calH_{\Bar{j}}],r)$ where $\B'$ hardwired $\calH_{\Bar{j}}$ into its algorithm. It simulates the oracle $\mathcal{H}$ for $\B$ by querying its own oracle $H$ as $H^{(j)}$ and simulates other $H^{(i)}(i\neq j)$ by querying the hardwired $\calH_{\Bar{j}}$. 
    \end{description}
    By the above argument, we have 
        \begin{align*}
        &\Pr_{H,r, j, \calH_{\Bar{j}}}[
            f_C^H(\B'[j,\calH_{\Bar{j}}]^H(z_H'[j,\calH_{\Bar{j}}],r))=r
        ] \\
        &= \Pr_{\mathcal{H},r, j}[f_C^{H^{(j)}}(\vv^{(j)})=x: (\vv^{(1)},...,\vv^{(n)})\la \B^\mathcal{H}(z_\mathcal{H},r)
        ] \\
        &\geq \frac{q(n)}{n}.
    \end{align*}
    Thus, by taking $j=j^*$ and $\calH_{\Bar{j^*}}^*$ that makes the above probability the largest for random $\mathcal{H}$, 
    $(\B',\{z'_H\}_H):=(\B'[j^*,\calH_{\Bar{j^*}}^*],\{z'_H[j^*,\calH_{\Bar{j^*}}^*]\}_H)$ breaks
    hardness of \Cref{thm:YZLiu_separation_classical_query}.
    \end{proof}

\section{\texorpdfstring{$\BQP/\poly$ vs $\BQP/\qpoly$ under Classically-Accessible Oracle}{BQP\slash poly vs BQP\slash qpoly under Classically-Accessible Oracle}}\label{sec:advice} 
In this section, we demonstrate a $\BQP/\qpoly$ and $\BQP/\poly$ separation relative to a classically-accessible classical oracle.
 

The main technical lemma needed for proving the separation
is the following.
\begin{lemma}\label{lem:separation_poly_qpoly_dist}
    There is a family of distributions $\{\dist_n\}_{n\in \mathbb{N}}$, where $\dist_n$ is supported on tuples $(G,\ora)$ of functions $G:\bit^n\rightarrow \bit$ and $\ora: \bit^{p(n)}\rightarrow \bit^{q(n)}$ for some polynomials $p$ and $q$, 
    satisfying the following:
    \begin{enumerate}
  \item\label{item:separation_poly_qpoly_dist_quantum_easiness} 
 {\bf (Easiness with Quantum Advice.)}
    There is a QPT algorithm $\A$ with classical access to $\ora$ and a family of $\poly(n)$-qubit quantum advice $\{\ket{z_\ora}\}_{\ora}$
    such that  
    \begin{align*}
        \Pr_{(G,\ora)\la \dist_n}[
    \forall x\in \bit^n~
    \Pr[\A^{\ora}(\ket{z_\ora},x)=G(x)]\ge 1-\negl(n)
    ]\ge  1-\negl(n).
    \end{align*}
    \item\label{item:separation_poly_qpoly_dist_classical_hardness} {\bf (Hardness with Classical Advice.)} 
     For any unbounded-time algorithm $\B$ that makes $\poly(\secp)$ classical queries to $\ora$ and a family of $\poly(n)$-bit classical advice $\{z_\ora\}_{\ora}$,  
     \begin{align*}
        \Pr_{\substack{(G,\ora)\la \dist_n \\x\la \bit^n}}[
    \B^{\ora}(z_\ora,x)=G(x) 
    ]
    \le \frac{3}{5}
    \end{align*}
for all sufficiently large $n$.
\end{enumerate}
\end{lemma}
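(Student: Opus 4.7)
The plan is to follow the sketch in the overview, instantiating the oracle construction and splitting the hardness side into two cleanly separated bounds. First, I would define $\dist_n$ by sampling $G \la \Func(\bit^n, \bit)$ and $\calH = (H^{(1)}, \dots, H^{(n)}) \la (\Func([n]\times\Sigma,\bit))^n$ independently, and setting $\ora(\vv, x, j) := G(x)$ when $\vv \in C$ and $f_C^{H^{(j)}}(\vv) = x$, and $\bot$ otherwise, where $C$ is the code of \Cref{thm:YZLiu_separation}. The easiness claim then follows almost immediately: take as quantum advice $\ket{z_\calH}$ from \Cref{cor:YZLiu_separation_classical_query_amp}, Item \ref{item:amp_quantum_easiness}; on input $x$, run the YZ quantum algorithm to produce $(\vv^{(1)}, \dots, \vv^{(n)})$, classically query $\ora(\vv^{(j)}, x, j)$ for each $j$, and output whichever answer is not $\bot$. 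The uniform-in-$x$ correctness guarantee is precisely the statement of that corollary.

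For hardness, fix an adversary $\B$ with $s$-bit classical advice $\{z_\ora\}_\ora$ and $q = \poly(n)$ classical queries to $\ora$. The approach is to decompose via a bad event. Let $x^* \la \bit^n$ and define $\mathsf{Inv}$ as the event that $\B$ ever issues an $\ora$-query $(\vv, x^*, j)$ with $\vv \in C$ and $f_C^{H^{(j)}}(\vv) = x^*$. Let $\tilde{\B}$ denote $\B$ modified to abort and output $0$ as soon as it detects $\mathsf{Inv}$ (detectable from the classical response). Then $\Pr[\B^\ora(z_\ora, x^*) = G(x^*)] \le \Pr[\mathsf{Inv}] + \Pr[\tilde{\B}^\ora(z_\ora, x^*) = G(x^*)]$, and I would bound the two terms by essentially independent arguments.

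To bound $\Pr[\mathsf{Inv}]$, view $R := (G, \calH)$ as a single random oracle (each $\ora$-query costs $O(n)$ classical $R$-queries) and apply the presampling theorem (\Cref{thm:classical_bf_to_nonuniform}, Item \ref{item:search}), since $\mathsf{Inv}$ is a search-game winning condition. It then suffices to bound $\mathsf{Inv}$ in the $P$-BF-ROM on $R$ with $P = \poly(n)$. In the bit-fixing model, at most $P$ positions of $\calH$ are fixed; the list-recoverability argument from the proof of \Cref{thm:YZLiu_separation_classical_query} (combined with the hybrid over the $n$ hash functions from the proof of \Cref{cor:YZLiu_separation_classical_query_amp}) applies almost verbatim and yields $\Pr_{\text{BF}}[\mathsf{Inv}] \le 2^{-\Omega(n)}$, so $\Pr[\mathsf{Inv}] = \negl(n)$. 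Crucially, $\ora$-responses reveal at most as much information about $\calH$ as direct $\calH$-queries would, so the YZ hardness analysis only becomes stronger in our setting.

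To bound $\Pr[\tilde{\B}^\ora(z_\ora, x^*) = G(x^*)]$, condition on $\calH$: the family $\{z_\ora\}$ becomes an $s$-bit classical advice on $G$ alone, and each $\ora$-query of $\tilde{\B}$ can be served by at most one classical query to $G$ (at the point $x$ iff $f_C^{H^{(j)}}(\vv) = x$, which $\tilde{\B}$ computes from the fixed $\calH$). By construction $\tilde{\B}$ never queries $G$ at $x^*$, so \Cref{lem:classical-Yao-box} with $Z = s$, $Q = q$, $N = 2^n$ gives $\Pr[\tilde{\B}^\ora(z_\ora, x^*) = G(x^*) \mid \calH] \le 1/2 + 2\sqrt{s(q+1)/2^n}$ uniformly over $\calH$. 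Combining the two bounds yields $\Pr[\B^\ora(z_\ora, x^*) = G(x^*)] \le 1/2 + \negl(n) \le 3/5$ for all sufficiently large $n$. The main obstacle is the joint dependence of the advice on $(G, \calH)$, which precludes a single black-box reduction to either \Cref{cor:YZLiu_separation_classical_query_amp} or \Cref{lem:classical-Yao-box}; the resolution is the decoupling above---presampling on the combined oracle for $\mathsf{Inv}$ (which is $G$-independent) and conditioning on $\calH$ for Yao's Box (which renders the advice a pure $G$-advice).
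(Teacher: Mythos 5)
Your proposal is correct and yields the same bound, but it organizes the hardness argument differently from the paper's. The paper first proves an intermediate ``query amplitude'' lemma (\Cref{lem:query_amplitude}) via a contrapositive of \Cref{lem:classical-Yao-box}: if $\B$ succeeds with probability above $3/5$, then for a constant fraction of $x$ it must place constant query weight on $x$; it then turns this into a YZ-inverter by the ``fix the best $G$'' trick and reduces to \Cref{item:amp_classical_hardness} of \Cref{cor:YZLiu_separation_classical_query_amp}. You instead make the additive decomposition $\Pr[\B\text{ succeeds}] \le \Pr[\mathsf{Inv}] + \Pr[\tilde\B\text{ succeeds}]$ explicit, then bound the two terms by two independent applications: presampling (\Cref{thm:classical_bf_to_nonuniform}) on the \emph{combined} oracle $(G,\calH)$ plus list-recoverability for $\Pr[\mathsf{Inv}]$, and Yao's box conditioned on $\calH$ for $\Pr[\tilde\B]$. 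This avoids the intermediate query-amplitude lemma and avoids hardwiring $G$ in the reduction, at the modest cost of re-deriving the YZ hardness on the larger joint oracle rather than invoking \Cref{cor:YZLiu_separation_classical_query_amp} as a black box. Both routes rely on exactly the same underlying facts, so the difference is one of packaging; your packaging is somewhat more modular and makes the two sources of hardness (advice on $G$ and advice on $\calH$) more visibly decoupled.

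Two small things to tidy. First, the lemma requires a quantum advice family $\{\ket{z_\ora}\}_\ora$ indexed by $\ora$ alone, but you write ``take $\ket{z_\calH}$''; since $\ora$ need not determine $\calH$, you should use the paper's mixed-state trick (sample $(G,\calH)$ from the conditional distribution given $\ora$ and set $\rho_\ora := \rho_\calH$). Second, $\Pr[\mathsf{Inv}]$ is an event about queries rather than about an output, so to fit \Cref{thm:classical_bf_to_nonuniform} you must convert $\B$ to an algorithm that outputs a uniformly random query and accept a $1/Q$ loss; this is standard but worth a sentence.
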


For proving \Cref{lem:separation_poly_qpoly_dist}, we prepare the following lemma.
\begin{lemma}\label{lem:query_amplitude}
    Let $G:\bit^n\rightarrow \bit$ be a uniformly random function.  For an unbounded-time algorithm $\A$ that makes $\poly(\secp)$ classical queries to $G$ and a family of $\poly(n)$-bit classical advice $\{z_G\}_{G}$,  suppose that the following holds:
    \begin{align*}
        \Pr_{G,x\la \bit^n}[\A^{G}(z_G,x)=G(x)]> \frac{3}{5}.
    \end{align*} 
    Then the probability that $x$ is contained in the query list is at least $\frac{1}{20}$ for $\frac{1}{30}$ fraction of $x \in \bit^n$ for sufficiently large $n$. 
    
\end{lemma}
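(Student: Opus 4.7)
The plan is to argue by contrapositive via Yao's Box Lemma for classical oracles (\Cref{lem:classical-Yao-box}). For each $x \in \bit^n$, let $q_x := \Pr_G[\A^G(z_G, x) \text{ queries } x]$ denote the probability that $x$ appears in $\A$'s query list. The goal is to show that $S := \{x : q_x \ge 1/20\}$ has density at least $1/30$; suppose toward contradiction that $|S| < 2^n/30$.

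First I would introduce a modified algorithm $\A'$ that simulates $\A^G(z_G, x)$ step by step but aborts and outputs a uniformly random bit the first time $\A$ attempts to query $G$ at the point $x$. By construction $\A'$ never queries $G$ at $x$, so \Cref{lem:classical-Yao-box} applies with $Z, Q = \poly(n)$ and $N = 2^n$ and yields
\[
\Pr_{G,\, x \la \bit^n}[\A'^G(z_G, x) = G(x)] \;\le\; \tfrac{1}{2} + 2\sqrt{\tfrac{Z(Q+1)}{2^n}} \;=\; \tfrac{1}{2} + \negl(n).
\]

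Next I would relate the success probabilities of $\A$ and $\A'$. Let $E$ denote the event that $\A$ queries $x$. Conditioned on $\neg E$ the two algorithms behave identically, while conditioned on $E$ the algorithm $\A$ succeeds with probability at most $1$ and $\A'$ succeeds with probability exactly $1/2$. A direct calculation therefore gives
\[
\Pr_{G,x}[\A^G(z_G, x) = G(x)] \;\le\; \Pr_{G,x}[\A'^G(z_G, x) = G(x)] + \tfrac{1}{2}\,\Ex_x[q_x].
\]

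Finally, I would upper bound $\Ex_x[q_x]$ using the contrapositive hypothesis. Splitting by membership in $S$, we get $\Ex_x[q_x] \le \tfrac{|S|}{2^n}\cdot 1 + \bigl(1 - \tfrac{|S|}{2^n}\bigr) \cdot \tfrac{1}{20} < \tfrac{1}{30} + \tfrac{29}{600} = \tfrac{49}{600}$. Combining with the previous two displays, $\Pr_{G,x}[\A = G(x)] < \tfrac{1}{2} + \tfrac{49}{1200} + \negl(n)$, which is strictly less than $\tfrac{3}{5}$ for sufficiently large $n$, contradicting the hypothesis. The only mild subtlety is to set up $\A'$ so that it genuinely satisfies the non-query precondition of Yao's Box while differing from $\A$ only on the event $E$; everything else is elementary bookkeeping.
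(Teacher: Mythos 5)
Your proof is correct and follows essentially the same route as the paper's: both replace $\A$'s run with one that has no access to $G(x)$, invoke Yao's Box (\Cref{lem:classical-Yao-box}) on that modified run, and close with an averaging argument over $x$. The only cosmetic difference is that you modify the \emph{algorithm} (abort and output a fresh random bit upon a query to $x$) whereas the paper modifies the \emph{oracle} (replacing $G$ by $G_x$ with $G_x(x)=0$), which incidentally gives you a factor-of-two sharper bound on the gap, but structurally the arguments coincide.
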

\begin{proof}
    For each $x\in\bit^n$, we define $G_x$ as the random function $G$ with its input on $x$ removed, i.e. $G_x(x') = G(x')$ 
     for $x' \neq x$ and $G_x(x) = 0$. Since $\A$ only makes classical queries to $G$, the only way for it to distinguish $G$ from $G_x$ is to query the oracle at $x$. Denote by $\delta_x$ the probability that $x$ is in the query list of $\A^{G_x}$, where the probability is over the randomness of $\A$ and $G_x$. 
    We obtain that, 
    \begin{align*}
        |\Pr_{G}[\A^{G}(z_G,x)=G(x)]-
        \Pr_{G}[\A^{G_x}(z_G,x)=G(x)]|\le \delta_x.
    \end{align*}
    Now we consider the case when we uniform randomly choose $x\la\bit^n$, and require $\A^{G_x}(z_G,x)$ to output $G(x)$. This is exactly Yao's box problem, where the adversary is required to output $G(x)$ without querying $x$. By~\Cref{lem:classical-Yao-box}, we have the following bound for Yao's box with classical queries and classical advice:
    \begin{align*}
        \Pr_{G,x}[\A^{G_x}(z_G,x)=G(x)]\leq \frac{1}{2}+2\sqrt{\frac{|z_G|(Q+1)}{2^n}}=\frac{1}{2}+\negl(n)
    \end{align*}
    where we assume that $\A$ makes $Q$ queries.
    Thus we have that 
    \begin{align*}
        \Pr_{G,x}[\A^{G}(z_G,x)=G(x)]- \Pr_{G,x}[\A^{G_x}(z_G,x)=G(x)]\geq \frac{1}{10}-\negl(n),
    \end{align*}
    Therefore, we have
    \begin{align*}
        \Ex_x [\delta_x]\geq \Ex_x \left[\left|\Pr_{G}[\A^{G}(z_G,x)=G(x)]-
        \Pr_{G}[\A^{G_x}(z_G,x)=G(x)]\right|\right]\geq \frac{1}{10}-\negl(n)
    \end{align*}

    We now show that $\delta_x$ is at least $\frac{1}{20}$ with probability $\frac{1}{30}$ for sufficiently large $n$.
    $$\Pr_x\left[\delta_x \geq \frac{1}{20}\right] + \left(1 - \Pr_x\left[\delta_x \geq \frac{1}{20}\right]\right)\cdot \frac{1}{20} \geq \mathbb{E}_x[\delta_x] \geq \frac{1}{10} - \negl(n)$$
    $$\implies \Pr_x\left[\delta_x \geq \frac{1}{20}\right] \geq \frac{1}{20} - \negl(n)$$ 
    Thus for sufficiently large $n$, for a $\frac{1}{20}-\negl(n) \geq \frac{1}{30}$ fraction of $x \in \{0, 1\}^n$, $\A$ will query $x$ with probability at least $\frac{1}{20}$.
\end{proof}

Then we prove \Cref{lem:separation_poly_qpoly_dist}.
\begin{proof}[Proof of \Cref{lem:separation_poly_qpoly_dist}]
We define $\dist_n$ to be the distribution that samples $G$ and $\ora$  as follows:
\begin{description}
    \item[$\dist_n$:] 
    Let  $C\subseteq \Sigma^n$ be the code in \Cref{cor:YZLiu_separation_classical_query_amp}. 
  It samples random functions $G:\bit^n\rightarrow \bit$ and $H^{(j)}:\bit^{\log n}\times \Sigma \rightarrow \bit$ for $j\in [n]$
  and defines $\ora$ as follows:
  $\ora$ takes $x\in \bit^n$ and $(\vv^{(1)},...,\vv^{(n)})\in C^n$ as input, and outputs $G(x)$ if 
 there is $j\in [n]$ such that
 $f_C^{H^{(j)}}(\vv^{(j)})=x$ and outputs $\bot$ otherwise.\footnote{Note that $x$ here plays the role of $r$ in \Cref{cor:YZLiu_separation_classical_query_amp}.} For simplicity we will denote by $\mathcal{H} = (H^{(1)}, \dots, H^{(n)})$.
\end{description} 
First, we show the easiness with quantum advice (\Cref{item:separation_poly_qpoly_dist_quantum_easiness}). 
Let $(\A',\{\ket{z'_\mathcal{H}}\}_\mathcal{H})$ be the tuple of an algorithm and family of quantum advice in \Cref{item:amp_quantum_easiness} of \Cref{cor:YZLiu_separation_classical_query_amp}. 
We construct $(\A,\{\ket{z_\ora}\}_{\ora})$ that satisfies \Cref{item:separation_poly_qpoly_dist_quantum_easiness} of \Cref{lem:separation_poly_qpoly_dist}.

In fact, we allow the advice $\ket{z_\ora}$ to be a mixed state and write it by $\rho_\ora$. Note that this does not weaken the statement since any mixed state can be considered as a distribution over pure states and thus there must exist a pure state advice $\ket{z_\ora}$ that is at least as good as $\rho_\ora$. 
The algorithm $\A$ and quantum advice $\rho_\ora$ is described as follows:
\begin{description}
    \item[$\rho_\ora$:] We describe a randomized procedure to set $\rho_\ora$ given an oracle $\ora$. 
    This should be understood as setting $\rho_\ora$ to be the mixed state corresponding to the output of the procedure.  
    Sample $(G,\mathcal{H})$ from the conditional distribution of $\dist_n$ conditioned on the given $\ora$. Note that then the joint distribution of $(G,\mathcal{H},\ora)$ is identical to the real one. Then $\rho_\ora$ is set to be $\rho_\mathcal{H}$.
    \item[$\A^\ora(\rho_\ora,x)$:] 
    It runs $\vv\la \A'(\rho_\ora,x)$, queries $(x,\vv)$ to $\ora$, and outputs whatever $\ora$ returns.  
\end{description}
Then \Cref{item:amp_quantum_easiness} of \Cref{cor:YZLiu_separation_classical_query_amp} implies
    \begin{align*}
        \Pr_{(G,\ora)\la \dist_n}[
    \forall x\in \bit^n~
    \Pr[\A^{\ora}(\rho_\ora,x)=G(x)]\ge 1-\negl(n)
    ]\ge  1-\negl(n).
    \end{align*}
Thus,  \Cref{item:separation_poly_qpoly_dist_quantum_easiness} of  \Cref{lem:separation_poly_qpoly_dist} holds.

Next, we show the hardness with classical advice (\Cref{item:separation_poly_qpoly_dist_classical_hardness}).
Suppose that there is $(\B,\{z_\ora\}_{\ora})$ that breaks it. Then we have 
    \begin{align*}
        \Pr_{\substack{(G,\ora)\la \dist_n \\x\la \bit^n}}[
    \B^{\ora}(z_\ora,x)=G(x) 
    ]
    > \frac{3}{5}
    \end{align*}
    for infinitely many $n\in \mathbb{N}$.  
Recall that $\ora$ returns $G(x)$ only if the query $(x,(\vv^{(1)},...,\vv^{(n)}))$ satisfies $f_C^{H^{(j)}}(\vv^{(j)})=x$ for some $j\in [n]$. 
 Thus, by a direct reduction to \Cref{lem:query_amplitude}, for a $\frac{1}{30}$  fraction of $x \in \bit^n$, the query list of $\B$ to a randomly chosen $\ora$ according to $\dist_n$ will contain a query of the form $(x,(\vv^{(1)},...,\vv^{(n)}))$ such that there is $j\in [n]$ such that $f_C^{H^{(j)}}(\vv^{(j)})=x$ with probability at least $\frac{1}{20}$. 

Also note that classical access to $\ora$ can be simulated by classical access to $G$ and $\mathcal{H} = (H^{(1)}, \dots, H^{(n)})$. 
Thus, the above directly gives an algorithm that violates the hardness with classical advice and classical access to $\mathcal{H}$ (\Cref{item:amp_classical_hardness} of \Cref{cor:YZLiu_separation_classical_query_amp}). To show this, we go through the following steps:
\begin{enumerate}
    \item For each fixed $G$, we define a pair $(\B'[G],\{z'_\mathcal{H}[G]\}_\mathcal{H})$ of an adversary and advice in which $G$ is hardwired.
    \item Show that $(\B'[G],\{z'_\mathcal{H}[G]\}_\mathcal{H})$  breaks \Cref{item:amp_classical_hardness} of \Cref{cor:YZLiu_separation_classical_query_amp} on average over the choice of $G$.
    \item Fix the ``best'' $G$ (w.r.t. random $\mathcal{H}$) to get a fixed pair of algorithm and advice that breaks  \Cref{item:amp_classical_hardness} of \Cref{cor:YZLiu_separation_classical_query_amp}.
\end{enumerate}
Specifically, it works as follows. 
For each $G$, we define $(\B'[G],\{z'_\mathcal{H}[G]\}_\mathcal{H})$ as follows:
\begin{description}
    \item[$z'_\mathcal{H}{[}G{]}$:] Construct $\ora$ from $(G,\mathcal{H})$. Set $z'_\mathcal{H}[G]:=z_\ora$.
    \item[${\B'[G]}^{\mathcal{H}}(z'_\mathcal{H}{[}G{]},x)$:] 
    It runs $\B^{\ora}(z'_\mathcal{H}[G],x)$ where $\B'$ simulates the oracle $\ora$ for $\B$ by using its own oracle $\mathcal{H}$ and the hardwired oracle $G$ and outputs a uniformly chosen query by $\B$. 
\end{description}
By the above argument, we have 
     \begin{align*}
        \Pr_{G,\mathcal{H},x}[
   \exists j\in [n]~\text{s.t.}~ f_C^{H^{(j)}}(\vv^{(j)})=x: (\vv^{(1)},...,\vv^{(n)})\la \B'[G]^\mathcal{H}(z'_\mathcal{H}[G],x)
    ]\ge \frac{1}{600Q}
    \end{align*}
    where $Q$ is the number of queries by $\B$. 
Thus, by taking $G=G^*$ that makes the above probability the largest, 
$(\B',\{z'_\mathcal{H}\}_\mathcal{H}):=(\B'[G^*],\{z'_\mathcal{H}[G^*]\}_\mathcal{H})$ breaks
\Cref{item:amp_classical_hardness} of \Cref{cor:YZLiu_separation_classical_query_amp}.\footnote{At first glance, this argument seems to allow $\B'$ to be a non-uniform machine that takes $G$ as advice. 
However, this is not needed since $\B'$ can find the best $G$ by itself by using its unbounded-time computational power.
} 
\end{proof}

Given \Cref{lem:separation_poly_qpoly_dist}, it is straightforward to prove  
a separation between $\BQP/\qpoly$ and $ \BQP/\poly$ relative to a classically-accessible classical oracle 
by the standard diagonalization argument. 
\begin{theorem}\label{thm:separation_poly_qpoly}
There is a classically-accessible classical oracle $\ora$ relative to which $\BQP/\poly\neq \BQP/\qpoly$.
\end{theorem}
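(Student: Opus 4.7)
The plan is to construct the oracle by independently sampling $(G_n, \ora_n) \la \dist_n$ for each $n \in \mathbb{N}$ from the distributions of \Cref{lem:separation_poly_qpoly_dist} and gluing them into a single classically-accessible classical oracle $\ora$ whose address space is partitioned by level, using a unary tag so that any query to level $n$ has length at least $n$. I would take $\calL := \{x \in \bit^* : G_{|x|}(x) = 1\}$ and prove that, with probability $1$ over the joint sampling, both $\calL \in \BQP^{\ora}/\qpoly$ and $\calL \notin \BQP^{\ora}/\poly$. Since both are probability-$1$ events, their intersection is nonempty, yielding a desired oracle.

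For $\calL \in \BQP^{\ora}/\qpoly$, I would invoke \Cref{item:separation_poly_qpoly_dist_quantum_easiness}: with probability $1 - \negl(n)$ over $(G_n, \ora_n)$ there is a $\poly(n)$-qubit advice $\rho_{\ora_n}$ on which the QPT algorithm from \Cref{lem:separation_poly_qpoly_dist} correctly computes $G_n$ on every length-$n$ input. Since $\sum_n \negl(n) < \infty$, the first Borel-Cantelli lemma ensures that almost surely only finitely many $n$ are \emph{bad}; for those exceptional lengths I would hardcode the truth table of $\calL$ at length $n$ into the advice, contributing only an $\ora$-dependent additive constant so that the overall advice size is bounded by some polynomial $p_\ora$.

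For $\calL \notin \BQP^{\ora}/\poly$, I would enumerate the countable set of pairs $(M_i, p_j)$ of QPT machines and polynomial advice bounds, and for each pair let $E_n^{i,j}$ denote the event that some $z \in \bit^{\le p_j(n)}$ causes $M_i$ with advice $z$ to output $\calL(x)$ with probability at least $2/3$ on every length-$n$ input $x$. Applying \Cref{item:separation_poly_qpoly_dist_classical_hardness} to the argmax advice function, with queries to levels other than $n$ simulated from their hardwired randomness, gives $\mathbb{E}_{\ora_n}\bigl[\max_{z} \Pr_{x \la \bit^n}[M_i^\ora(z, x) = G_n(x)] \,\big|\, \{\ora_m\}_{m \ne n}\bigr] \le 3/5$ almost surely. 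Since $2/3 > 3/5$, Markov's inequality yields $\Pr[E_n^{i,j} \mid \{\ora_m\}_{m \ne n}] \le 9/10$ almost surely.

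The main obstacle is converting this constant conditional bound into $\Pr[\bigcap_n E_n^{i,j}] = 0$, because $E_n^{i,j}$ depends on the entire oracle through $M_i$'s queries, so the events are not outright independent. I plan to overcome this by passing to a subsequence $n_1^{i,j} < n_2^{i,j} < \cdots$ in which $n_{k+1}^{i,j}$ exceeds any possible query length made by $M_i$ at input length $n_k^{i,j}$ (which is at most $\poly(n_k^{i,j})$, so this is feasible by taking $n_{k+1}^{i,j}$ large enough). By the unary level-tagging, at input length $n_{k'}^{i,j}$ with $k' < k$ the algorithm $M_i$ cannot reach any level $n_l^{i,j}$ with $l \ge k$, so $\bigcap_{k' < k} E_{n_{k'}^{i,j}}^{i,j}$ is measurable with respect to $\sigma\bigl(\{\ora_l\}_{l < n_k^{i,j}}\bigr) \subseteq \sigma\bigl(\{\ora_l\}_{l \ne n_k^{i,j}}\bigr)$, and the above conditional bound gives $\Pr\bigl[E_{n_k^{i,j}}^{i,j} \,\big|\, \bigcap_{k' < k} E_{n_{k'}^{i,j}}^{i,j}\bigr] \le 9/10$. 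Iterating the chain rule gives $\Pr\bigl[\bigcap_{k \le K} E_{n_k^{i,j}}^{i,j}\bigr] \le (9/10)^K \to 0$, hence $\Pr[\bigcap_n E_n^{i,j}] = 0$, and a union bound over the countably many $(i,j)$ completes the hardness argument.
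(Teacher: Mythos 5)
Your proposal is correct and follows essentially the same route as the paper: sample the levels independently, define $\calL$ via $G$, use Borel--Cantelli plus hardcoding small lengths for the $\BQP/\qpoly$ direction, and for the $\BQP/\poly$ direction pass to a sparse subsequence of input lengths so that the machine's queries cannot reach higher levels, giving conditional independence, then combine Markov's inequality (turning the $3/5$ average bound of \Cref{item:separation_poly_qpoly_dist_classical_hardness} into a $9/10$ probability bound for the success event), the chain rule, and a union bound over the countably many machine/advice-length pairs. The only cosmetic difference is that you make the Markov step and the $\sigma$-algebra bookkeeping explicit, whereas the paper states the $9/10$ bound and the conditional-independence identity more tersely; the underlying argument is identical.
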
 
\begin{proof}
    Suppose that we generate $(G,\ora)\la \dist_n$ and define a language $\calL:=\bigsqcup_{n\in \mathbb{N}}G_n^{-1}(1)$ and an oracle $\ora$ that returns $\ora_{|x|}(x)$ on a query $x\in \bit^*$.  
    We claim that $\calL\in \BQP^{\ora}/\qpoly$ and $\calL\notin \BQP^{\ora}/\poly$ with probability $1$. 
    
    To see $\calL\in \BQP^{\ora}/\qpoly$ with probability $1$, \Cref{item:separation_poly_qpoly_dist_quantum_easiness} of \Cref{lem:separation_poly_qpoly_dist} implies that there is 
    a
    $\BQP$ machine $\A$ with polynomial-size quantum advice that decides $\calL$ on all $x$ of length $n$  with probability at least $1-\frac{1}{n^2}$ for all sufficiently large $n$. 
    Since $\sum_{n=1}^{\infty}\frac{1}{n^2}=\pi^2/6$ is finite, the Borel–Cantelli lemma implies that the probability that $\A$ fails on infinitely many $n$ is $0$.  
    In other words, the probability that there is $N$ such that  $\A$ succeeds in deciding $\calL$ on all $x$ such that $|x|\ge N$ is $1$.
    By augmenting $\A$ to decide $\calL$ by brute-force when the instance has length smaller than $N$, we can conclude that there is a  $\BQP$ machine with polynomial-size quantum advice that decides $\calL$ on all $x\in \bit^*$ with probability $1$ over the random choice of $(G,\ora)$ for $n\in \mathbb{N}$. 

    Next, we prove $\calL\notin \BQP^{\ora}/\poly$ with probability $1$.  
    For a $\BQP$ machine $\B$ that takes $\ell(n)$-bit classical advice for a polynomial $\ell$, we define $S_\B(n)$ to be the event over the choice of $(G,\ora)$ that there is a $\ell(n)$-bit classical advice $z_\ora$
    such that 
    $$
    \Pr[
    \forall x\in\bit^n~
    \B^{\ora}(z_\ora,x)=G(x)]\ge \frac{2}{3}.
    $$
   \Cref{item:separation_poly_qpoly_dist_classical_hardness}  of \Cref{lem:separation_poly_qpoly_dist} implies that there is an integer $N$ such that
    for any $\BQP$ machine $\B$ with classical access to $\ora$, 
    we have $\Pr_{G,\ora}[S_\B(n)]\le c$ for all $n\ge N$ where $c:=9/10$. 
     We now show that 
    $$
    \Pr_{G,\ora}[S_\B(1) \land S_\B(2) \land \dots]= 0$$
    \begin{itemize}
        \item We will consider a sequence of input lengths $n_1, n_2, \dots$ defined by 
            $n_1:= N$ and $n_i := T(n_{i-1}) + 1$, where $T(n)$ is the running time of $\B$ on input of length $n$. This means that when $\B$'s input length is $n_{i-1}$, it cannot touch the oracle on input length $\ge n_i$. This guarantees that $\Pr[S_\B(n_i) \mid S_\B(n_{j})] = \Pr[S_\B(n_i)]$ for all $i > j$.
        \item We can now show that the probability that $\B$ succeeds on all inputs is equal to $0$ over the choices of $G,\ora$.
        \begin{align*}
            &\Pr[S_\B(1) \land S_\B(2) \land \dots] \\
            &\leq \Pr\left[ \bigwedge_{i} S_\B(n_i)\right] \\
            &= \Pr[S_\B(n_1)] \cdot \Pr[S_\B(n_2) \mid S_\B(n_1)] \cdot \dots \\
            &\leq c \cdot c \cdot \dots \\
            &= 0
        \end{align*}
    \end{itemize}
    Since there are countably many QPT machines, 
    $$
    \Pr_{G,\ora}[\exists \B~S_\B(1) \land S_\B(2) \land \dots]=0.
    $$
This means that $\calL \not\in \BQP^\ora/\poly$ with probability $1$ over the choice of $(G,\ora)$.
\end{proof}

\paragraph{Stronger separations.}
We can easily extend our proof to show $\BQP/\qpoly\cap \NP \cap \coNP\not\subseteq \BQP/\poly$ relative to a classically-accessible classical oracle. This can be seen by observing that for any $x\in \bit^n$, we can use $(\vv^{(1)},...,\vv^{(n)})\in C^n$ such that $f_C^{H^{(j)}}(\vv^{(j)})=x$ for some $j\in [n]$ as a witness that certifies $G(x)$. This means that the language $\mathcal{L}$ in the proof of \Cref{thm:separation_poly_qpoly} is in  $ \NP \cap \coNP$.
Moreover, we can further strengthen the separation to show $\mathsf{YQP}\cap \NP \cap \coNP\not\subseteq \BQP/\poly$ relative to a classically-accessible classical oracle. Here, $\mathsf{YQP}$ is the class of problems that can be decided by a $\BQP$ machine with \emph{untrusted} quantum advice, which was introduced in \cite{Aaronson07} followed by a minor definitional modification in \cite{SICOMP:AarDru14}. This can be seen by 
observing that the $\BQP/\qpoly$ algorithm for deciding the language $\mathcal{L}$ works even if the advice is untrusted because its guess of $G(x)$ is correct whenever the oracle $\ora$ does not return $\bot$ on $(x,(\vv^{(1)},...,\vv^{(n)}))$ where $\vv^{(j)}$ are candidate solutions to the YZ problem w.r.t. $H^{(j)}$ and $x$ generated by using the given (potentially ill-formed) quantum advice. 
\section{\texorpdfstring{$\QMA$ vs $\QCMA$ under Classically-Accessible Oracle}{QMA vs QCMA under Classically-Accessible Oracle}}\label{sec:QMA_QCMA_classical}
In this section, we demonstrate a $\QMA$ and $\QCMA$ separation relative to a classically-accessible classical oracle. 

   \paragraph{Notation.} Let $C$ be the code in \Cref{thm:YZLiu_separation}. (Remark that $C$ and $\Sigma$ are actually indexed by $n$, but we omit it for notational simplicity. See \Cref{rem:subscript_n}.)
    For $n\in \mathbb{N}$, 
    functions $G:\bit^n\ra \bit^n$ and $H:[n]\times \Sigma \ra \bit$, and a subset $S\subseteq \bit^n$, we define the following oracle:
\begin{description}
    \item[$\ora_{n}{[}G,H,S{]}$:] 
    It takes $t\in \bit^n$ and $\vv \in C$ as input and outputs $1$ if $f_{C}^{H}(\vv)=G(t)$ and $t\notin S$ and otherwise outputs $0$. 
\end{description}
The following is the key technical lemma for the separation between $\QMA$ and $\QCMA$ relative to a classically-accessible oracle. 
\begin{lemma}\label{lem:core_separation_QMA_QCMA_classical_access} 
The following hold:
\begin{enumerate} 
\item \label{item:dist_with_quantum} {\bf (Distinguishability with Quantum Witness)} There is a QPT algorithm $\A$ that makes polynomially many classical queries and a polynomial $\ell$  such that 
the following hold:
\begin{enumerate}
    \item \label{item:dist_with_quantum_completeness}
    There is an $\ell(n)$-qubit state $\ket{w}$ such that 
    $$\Pr_{G,H}[\A^{G,\ora_{n}[G,H,\emptyset]}(1^n,\ket{w})=1]\ge 1-\negl(n)$$ 
    where $G\la \Func(\bit^n,\bit^n)$ and $H\la \Func([n]\times \Sigma, \bit)$.
    \item \label{item:dist_with_quantum_soundness}
    For any
    $G$, $H$, 
    $S\subseteq \bit^n$, 
    and $\ell(n)$-qubit state $\ket{w}$, 
    $$\Pr[\A^{G,\ora_{n}[G,H,S]}(1^n,\ket{w})=1]\le 1-\frac{|S|}{2^n}$$
    for all $n\in \mathbb{N}$.
\end{enumerate}
\item \label{item:ind_with_classical}  {\bf (Indistinguishability with Classical Witness)}
For any unbounded-time algorithm $\B$ that makes polynomially many classical queries and polynomial $s$, 
there is a negligible function $\mu$ 
such that for any
family of $s(n)$-bit classical witness $\{w_{G,H}\}_{G, H}$,
there is a family $\{S_{G, H}\}_{G, H}$ with $|S_{G, H}| \geq 2^n\cdot 2/3$ and   
$$\left|\Pr_{G,H}[\B^{G,\ora_{n}[G,H,\emptyset]}(1^n,w_{G,H})=1]-\Pr_{G,H}[\B^{G,\ora_{n}[G,H,S_{G, H}]}(1^n,w_{G,H})=1]\right|\le \mu(n).$$
\end{enumerate}
\end{lemma}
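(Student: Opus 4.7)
For the distinguishability claim (item 1), the plan is to set the witness $\ket w$ to be the YZ quantum advice $\ket{z_H}$ of \Cref{thm:YZLiu_separation}. The verifier $\A$ samples $t \la \bit^n$, queries $G(t)$, runs the YZ inversion algorithm on target $G(t)$ using $\ket w$ to obtain some $\vv \in C$, and outputs $\ora(t, \vv)$. Completeness is immediate from YZ correctness: with probability $1 - \negl(n)$ over $(G,H)$, the algorithm returns $\vv$ with $f_C^H(\vv) = G(t)$, making $\ora_n[G,H,\emptyset](t,\vv) = 1$. For soundness, $\A$ outputs $1$ only when $\ora_n[G,H,S](t,\vv) = 1$, which requires $t \notin S$; since $t$ is uniform and this is independent of the witness, $\Pr[\A = 1] \le 1 - |S|/2^n$ for every choice of $G$, $H$, $S$, and $\ket w$.

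For the indistinguishability claim (item 2), the plan is to upper-bound the distinguishing advantage by the probability $\alpha$ that $\B$ makes a ``useful'' query $(t, \vv)$ with $t \in S_{G,H}$ and $f_C^H(\vv) = G(t)$, since on all other inputs the two oracles agree and so the transcripts coincide until such a query occurs. To show $\alpha = \negl(n)$, first apply the presampling technique (\Cref{thm:classical_bf_to_nonuniform}) to replace the $s(n)$-bit classical witness by $P = \poly(s, Q)$ bit-fixings of $(G, H)$. In the $P$-BF-ROM, let $T_j \subseteq \Sigma$ be the positions where $H(j,\cdot)$ is fixed and write $F_\vv := \{j : v_j \in T_j\}$; by $(\zeta,\ell,L)$-list-recoverability, the set $\tilde K := \{\vv \in C : |F_\vv| \ge (1-\zeta) n\}$ has size at most $L = 2^{\tilde O(n^{c'})}$. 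Define $Y' := \{y \in \bit^n : \exists \vv \in \tilde K,\ y_{F_\vv} = f_C^H(\vv)_{F_\vv}\}$ and $S_{G,H} := \{t : G(t) \notin Y'\}$. Since $|Y'| \le L \cdot 2^{\zeta n} \ll 2^n/3$ and $G$ is uniform on non-fixed positions, $|G^{-1}(Y')|$ concentrates near $|Y'|$ by Chernoff, so $|S_{G,H}| \ge 2^n \cdot 2/3$ except with negligible probability. A useful query at $t \in S_{G,H}$ cannot use $\vv \in \tilde K$, because then $f_C^H(\vv) \in Y'$ contradicts $G(t) \notin Y'$; and for $\vv \notin \tilde K$ at least $\zeta n$ components of $f_C^H(\vv)$ are fresh random bits conditioned on the bit-fixings, yielding $\Pr[\ora(t,\vv)=1] \le 2^{-\zeta n}$ per query. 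A union bound over $Q$ queries and the presampling factor of two then give $\alpha \le 2Q \cdot 2^{-\zeta n} = \negl(n)$.

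The main obstacle will be handling $\B$'s adaptivity within the bit-fixing analysis: each successful $\ora(t,\vv) = 1$ query leaks $n$ new bits of $H$ (namely $H(j, v_j) = G(t)_j$, whenever $G(t)$ is known), which can enlarge the sets $T_j$ and hence the reachable family $\tilde K$ as execution proceeds, potentially invalidating the static choice of $S_{G,H}$. The remedy is to define $S_{G,H}$ conservatively using an inflated envelope $T_j^+ \supseteq T_j$ that preemptively includes every position $\B$ could ever learn; since each of the $Q$ online queries can add at most one position to each $T_j^+$, one has $|T_j^+| \le P + Q$, which remains polynomial and far below the list-recoverability threshold $\ell = 2^{n^c}$, so $|\tilde K|$ and $|Y'|$ are unchanged up to constants. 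With this inflated envelope the inductive step goes through: the first useful query in $S_{G,H}$ must involve a $\vv$ outside the conservatively defined $\tilde K$, contributing at most $2^{-\zeta n}$ per attempt by the random-bits argument, and the analogous subtlety of coherently transporting the definition of $S_{G,H}$ across the presampling reduction can be handled by letting the game's verifier recompute $Y'$ from the bit-fixing directly.
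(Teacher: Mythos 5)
Your handling of Item~1 matches the paper's proof and is fine. The gap is in Item~2, specifically the step where you ``apply the presampling technique (\Cref{thm:classical_bf_to_nonuniform}) to replace the $s(n)$-bit classical witness by $P = \poly(s,Q)$ bit-fixings'' and then \emph{define} $S_{G,H}$ in terms of the bit-fixed positions $T_j$. \Cref{thm:classical_bf_to_nonuniform} transfers a security bound for a \emph{fixed} game $\game$ from the $P$-BF-ROM to the AI-ROM, where the game's sampler and verifier are specified independently of the advice. But the winning condition you need to bound is ``the adversary queries some $(t,\vv)$ with $t\in S_{G,H}$ and $f_C^H(\vv)=G(t)$,'' where $S_{G,H}$ is precisely the object you must construct and is allowed to depend on the advice $w_{G,H}$. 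Defining $S_{G,H}$ via the bit-fixings $T_j$ is circular: those $T_j$ exist only inside the $P$-BF-ROM world produced by the reduction, not in the AI-ROM world where the lemma lives, and the game definition that \Cref{thm:classical_bf_to_nonuniform} needs to hold fixed would itself have to refer to the output of the reduction. The paper flags exactly this obstacle in a footnote attached to \Cref{cla:prob_find_difference}, stating that \Cref{thm:classical_bf_to_nonuniform} cannot be applied because the probability in question is not captured by a game in the sense of \Cref{def:game_ROM}.

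The paper's actual route avoids the black-box presampling theorem. It invokes the finer-grained CDGS claims (\Cref{cl:CDGSalmostbitfixing}, \Cref{cl:CDGS-length-of-fixed}, \Cref{cl:dense-to-bit-fixing}) to express the conditional distribution of $(G,H)$ given the advice as (close to) a convex combination of $(P',1-\delta)$-dense sources; it then defines a \emph{randomized} set $S^*$ as the unfixed coordinates of the sampled dense component (\Cref{alg:sampling}), proves the bound with this randomized $S^*$ by splitting on whether the output $t$ was ever queried to $G'$ and using the YZ-style argument on the bit-fixing replacement from \Cref{cl:dense-to-bit-fixing}, and finally \emph{derandomizes} $S^*$ to a fixed $S_{G,H}$ by two averaging steps. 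Your intuition that $S$ should avoid targets whose $G$-images are ``covered'' by the known portion of $H$ is the same intuition that guides the paper, and your BF-ROM analysis (bounding $\tilde K$ by list-recoverability and charging $2^{-\zeta n}$ per query for $\vv\notin\tilde K$) is the right ingredient for the dense-source analysis; but the transport mechanism you invoke does not provide a well-defined $S_{G,H}$ in the real world, and your concluding remark that the subtlety ``can be handled by letting the game's verifier recompute $Y'$ from the bit-fixing directly'' is where the argument breaks down rather than where it is patched.
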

\begin{proof}
We start with the distinguishability with quantum witness. 

\noindent\textbf{Distinguishability with quantum witness.} We use the quantum advice of~\Cref{thm:YZLiu_separation} as a witness. The algorithm $\A$ proceeds as follows: Sample a random $t \in \bit^n$ and compute $r = G(t)$. Note that $r$ is also uniformly random over $\bit^n$.
Since we can generate $\vv\in C$ such that $H(\vv)=r$ using the algorithm in~\Cref{thm:YZLiu_separation} with probability~$1-\negl(n)$ over random $H,r$, $\A$ queries $\ora_n[G, H,S]$ with the generated $(t,\vv)$, and output the query result. If $S=\emptyset$, the oracle should return 1 with probability $1-\negl(n)$, else returns $0$ if the random $t \in S$, which happens with probability $\frac{|S|}{2^n}$.

\smallskip
\noindent\textbf{Indistinguishability with classical witness.} 
In the following proof, unless specified otherwise, we assume $G, H$ are uniformly sampled when using the notation $\Pr_{G, H}[X]$. 

The advantage for an
unbounded-time algorithm $\B$ with classical oracle queries to distinguish $(G,\ora_{n}[G, H, \emptyset])$ from
$(G, \ora_{n}[G, H, S_{G, H}])$ is at most the probability the queries of $\B$ include an input $(t, \vv)$ on which the two oracles differ. These inputs are precisely the ones that satisfy $t\in S_{G, H}$ and $f_C^H(\vv) = G(t)$. Thus we define $\tilde{\B}$ to be the adversary that is given oracle access to $(G,\ora_{n}[G, H, \emptyset])$, the same input as $\B$, and the set $S_{G, H}$ and outputs the first query $(t, \vv)$ of $\B$ on which the two oracles differ, i.e. $t \in S_{G, H}$ and $f_C^H(\vv) = G(t)$. (If there is no such query, $\tilde{\B}$ aborts.) Then we have the following inequality: 
  \begin{align*}
      \left|\Pr_{G, H}[\B^{G, \ora_{n}[G, H,\emptyset]}(1^n, w_{G, H})=1]-\Pr_{G, H}[\B^{G, \ora_{n}[G, H,S_{G, H}]}(1^n, w_{G, H})=1]\right|\\
      \leq \Pr_{G, H}\left[t \in S_{G, H} \land f_C^H(\vv) = G(t) : (t, \vv) \leftarrow \tilde{\B}^{G, \ora_{n}[G, H, \emptyset]}(1^n, w_{G, H}, S_{G, H})\right]
  \end{align*}
Note that one can simulate the oracle $(G, \ora_n[G, H, \emptyset])$ using only access to $(G, H)$. Thus it suffices to prove the following: 

\begin{claim}\label{cla:prob_find_difference}
    For any unbounded-time $\tilde{\B}$ that makes polynomially many classical queries and polynomial $s$, there is a negligible function $\mu$ such that for any
    family of $s(n)$-bit classical advice $\{w_{G, H}\}_{G, H}$, there is a family $\{S_{G, H}\}_{G, H}$ of subsets of $\{0, 1\}^n$ such that $|S_{G, H}| \geq 2^n\cdot 2/3$ and
    $$\Pr_{G, H}[t \in S_{G, H} \land f_C^H(\vv) = G(t) : (t, \vv) \la \tilde{\B}^{G, H}(1^n, w_{G, H}, S_{G, H})] \leq \mu(n).$$ 
\end{claim}

We prove \Cref{cla:prob_find_difference} by reducing it to a similar statement with $(P', 1-\delta)$-dense distributions by using \Cref{cl:CDGSalmostbitfixing}.\footnote{We cannot simply apply \Cref{thm:classical_bf_to_nonuniform} here because the probability considered in \Cref{cla:prob_find_difference} is not captured by security of a game in the ROM as defined in \Cref{def:game_ROM}.} 

    Let $Q$ be the number of queries by $\tilde{\B}$. 
    Applying \Cref{cl:CDGSalmostbitfixing} and \Cref{cl:CDGS-length-of-fixed}, for all $\gamma, \delta > 0$,
    there is a family $\{\calD_{G,H}\}_{G,H}$ of convex combinations of finitely many $(P', 1-\delta)$-dense distributions
    $\calD_{G,H}$ such that for any family $\{S_{G, H}\}_{G, H}$ of subsets of $\bit^n$, 
    \begin{align}
    \begin{split}
        &\Big|\Pr_{G, H}[t \in S_{G, H} \land f_C^H(\vv) = G(t) : (t, \vv) \la \tilde{\B}^{G, H}(1^n, w_{G, H}, S_{G, H})]  \\
        &- \Pr_{\substack{G,H\\(G',H')\la \calD_{G,H}}}[t \in S_{G', H'} \land f_C^{H'}(\vv) = G'(t) : (t, \vv) \la \tilde{\B}^{G', H'}(1^n,w_{G',H'}, S_{G', H'})]\Big|
        \quad \leq 2\gamma,
    \end{split} \label{eq:AI_to_BF}
    \end{align}
    where $P' = \frac{s + 2\log 1/\gamma}{\delta\cdot n}$. 
    Here we used \Cref{cl:CDGSalmostbitfixing} which states that $\calD_{G, H}$ is $\gamma$-close to $G, H$ for $P' = \frac{s_{G, H} + \log1/\gamma}{\delta\cdot n}$, with $s_{G, H}$ being the min-entropy deficiency of $G, H$ conditioned on $w_{G, H}$. Then we use \Cref{cl:CDGS-length-of-fixed} and ``truncate'' $s_{G, H}$ to a maximum of $s + \log 1/\gamma$ by only increasing the distance between the two distributions by an additional $\gamma$ additive factor.



Let us now construct a subset $S_{G', H'}$ that satisfies the statement of our claim. We will begin by finding a randomized construction of such a subset $S^*$ that will depend on all $(G, H, G', H')$, and we will derandomize it to $S_{G', H'}$. For each $(G,H)$, since $\calD_{G,H}$ is a convex combination of $(P', 1-\delta)$-dense distributions, it can be written as  
$$
\calD_{G,H}=\sum_{i}p_{G,H,i} \calD_{G,H,i}
$$
where $0<p_{G,H,i} \le 1$, 
$\sum_{i}p_{G,H,i}=1$ and each $\calD_{G,H,i}$ is a $(P', 1-\delta)$-dense distribution. 
Let $\mathcal{I}_{G,H}$ be a distribution that samples $i$ with probability $p_{G,H,i}$. 

\paragraph{Randomized $S^*$.} The randomized subset $S^*$ is defined using \Cref{alg:sampling}.

\begin{algorithm}
    \caption{Sampling $G', H', S^*$ from $\calD^1_{G, H}$}
    \label{alg:sampling}
    \begin{algorithmic}[1]
        \STATE Sample $i$ according to $\mathcal{I}_{G, H}$.
        \STATE Sample $G', H'$ from $\calD_{G, H, i}$.
        \STATE Define $S^* := \{\text{set of indices not fixed by}~\calD_{G, H, i}\}$.
    \end{algorithmic}
\end{algorithm}

We proceed with proving \Cref{cla:prob_find_difference} 
with this randomized $S^*$.
We will use the notation $(G', H', S^*) \la \calD^1_{G, H}$ to mean that we sample $(G', H', S^*)$ according to \Cref{alg:sampling}. 
Since each $\calD_{G,H,i}$ is a $(P', 1-\delta)$-dense distribution, we have that $|S^*| \geq 2^n - P'$ always.

      Define $\calD^{1, *}_{G,H}$ to be the distribution that samples $(G',H')$ by first sampling $(G',H', S^*)\leftarrow \calD^1_{G,H}$ and then replacing $G'(t)$ on all $t\in S^*$ by resampling according to the $(1-\delta)$-dense distribution indicated by $\calD_{G, H, i}$. By definition, since we replace the non-fixed values of $G'$ with a new sample from the same $(1 - \delta)$-dense distribution, the two distributions of $G'(t)$ for 
      $(G',H', S^*)\leftarrow \calD^1_{G,H}$ and $(G',H', S^*)\leftarrow \calD^{1, *}_{G,H}$ 
      are the same. Thus, we have 

    \begin{align}
    \begin{split}
        &\Pr_{\substack{G,H\\(G',H', S^*)\la\calD^1_{G,H}}}\left[t \in S^* \land f_C^{H'}(\vv) = G'(t) \colon (t, \vv )\la \tilde{\B}^{G', H'}(1^n,w_{G',H'}, S^*)\right]\\
        &=~\Pr_{\substack{G,H\\(G',H', S^*) \la\calD^{1, *}_{G,H}}}
        \left[t \in S^* \land f_C^{H'}(\vv) = G'(t) \colon (t, \vv )\la \tilde{\B}^{G', H'}(1^n,w_{G',H'}, S^*)\right].
    \end{split} \label{eq:replace_S}
    \end{align}
    
    Now we prove that 
    \begin{align}
        \Pr_{\substack{G,H\\(G',H', S^*) \la\calD^{1, *}_{G,H}}}[t \in S^* \land f_C^{H'}(\vv) = G'(t) \colon (t, \vv) \la \tilde{\B}^{G', H'}(1^n,w_{G',H'}, S^*)]\leq \negl(n). \label{eq:BF_negl}
    \end{align}
    Note that
     \begin{align*}
        &\Pr_{\substack{G,H\\(G',H', S^*) \la\calD^{1,*}_{G,H}}}[t \in S^* \land f_C^{H'}(\vv) = G'(t) \colon (t, \vv) \la \tilde{\B}^{G',H'}(1^n, w_{G',H'}, S^*)]\\
        &=\sum_{i\in [Q_{G'}]}\left(\Pr_{\substack{G,H\\(G',H', S^*)\la\calD^{1, *}_{G,H}}}\left[t=t_i \land t \in S^* \land f_C^{H'}(\vv) = G'(t) \colon (t, \vv) \la \tilde{\B}^{G',H'}(1^n,w_{G',H'}, S^*)\right]\right)\\&
        +\Pr_{\substack{G,H\\(G',H', S^*)\la\calD^{1, *}_{G,H}}}[t\notin\{t_1,...,t_{Q_{G'}}\} \land t \in S^* \land f_C^{H'}(\vv) = G'(t) \colon (t, \vv) \la \tilde{\B}^{G',H'}(1^n, w_{G',H'},S^*)].
    \end{align*}
     where $Q_{G'}$ is the number of queries to $G'$ and $t_i$ is the $i$-th query to $G'$. 
    For each $i$, we will set the parameters $\gamma = 2^{-n}$ and $\delta = 2^{-n^{c''}}$ (for a sufficiently small constant $c''$ that will be specified below) to bound the first term as follows:
    $$\Pr_{\substack{G,H\\(G',H', S^*)\la\calD^{1, *}_{G,H}}}\left[t=t_i \land t \in S^* \land f_C^{H'}(\vv) = G'(t) \colon (t, \vv) \la \tilde{\B}^{G',H'}(1^n,w_{G',H'}, S^*)\right]\le \negl(n).$$


\takashi{I believe the above statement assumes that $\delta$ and $\gamma$ are not too small. In this case, I believe the choice of $\delta$ and $\gamma$ should be specified before stating the above inequality.}
\angelos{added some justification for our choice of parameters}
\takashi{Looks good}

    This is because of the following argument: We can decompose the above probability as a weighted sum of probabilities over the $(P', 1-\delta)$-dense distributions $\calD_{G, H, i}$. Then, from \Cref{cl:dense-to-bit-fixing}, we can replace each $\calD_{G, H, i}$ with its corresponding $P'$-bit-fixing distribution $\calD'_{G, H, i}$, while only incurring an error of at most $Q\delta\cdot n$. Recall that $Q = \poly(n)$ is the number of queries that $\tilde{\B}$ makes to the oracle, and we have upper-bounded $\log |Y|$ by $n$, since $G'$ maps inputs to $\{0, 1\}^n$. Setting $\delta = 2^{n^{-c''}}$ for a sufficiently small $c''$ \footnote{In particular, we want $c''$ to be smaller than the constant $c$ that appears in the list-recoverability of the code in \Cref{thm:YZLiu_separation} so we can apply the same proof as \Cref{thm:YZLiu_separation_classical_query} to show the hardness of YZ with $P'$-bit-fixing oracles.}, gives that by substituting the corresponding bit-fixing distributions the advantage of the adversary only increases negligibly. 
    
     Now it suffices to show the hardness of the YZ problem when $G', H'$ are drawn from a $P'$-bit-fixing distribution. Recall that $P' = \frac{s + 2\log 1/\gamma}{\delta\cdot n}$, and thus $P' = \Theta\left(2^{n^{c''}}\poly(n)\right)$ for our choice of parameters. The hardness of the YZ problem under $P'$-bit-fixing oracles can be shown by repeating exactly the same proof as that of \Cref{thm:YZLiu_separation_classical_query} noting that $G'(t_i)$ is uniformly random for $\tilde{\B}$ before querying it when $t \in S^*$ and thus we can embed a fresh problem instance $r \la \bit^n$ into $G'(t_i)$. For the same proof to go through, we also need the fact that $P' =\Theta\left(2^{n^{c''}}\poly(n)\right) < 2^{n^c}$, where $c<1$ is the positive constant such that the code $C$ satisfies $(\zeta, 2^{n^c}, L)$-list-recoverability. \takashi{It may be better to mention that we are using $P'<2^{n^c}$ here.}
     \angelos{updated the justification for $P'$}

     \angelos{end of the proof above}

    The second term
        $$\Pr_{\substack{G,H\\(G',H', S^*)\la\calD^{1, *}_{G,H}}}[t\notin\{t_1,...,t_{Q_{G'}}\} \land t \in S^* \land f_C^{H'}(\vv) = G'(t) \colon (t, \vv) \la \tilde{\B}^{G',H'}(1^n, w_{G',H'},S^*)]$$
    is bounded by $2^{-n} + \negl(n)$. We again use \Cref{cl:dense-to-bit-fixing} with the same choice of parameters $(\gamma, \delta)$ to replace $(G', H')$ with their bit-fixing counterparts, while incurring negligible error. Now $G'(t)$ is uniformly random for $\tilde{\B}$ (since it was never queried) and the probability that $f_C^{H'}(\vv)$ is equal to a uniformly random value is $2^{-n}$. 
    Combining the above and $Q_{G'}=\poly(n)$, we get \Cref{eq:BF_negl}.

    \paragraph{Derandomizing $S^*$.} Our goal is to show that there exists a family of subsets $\{S_{G, H}\}_{G, H}$ such that the statement holds. To do this, we will take our randomized construction of $S^*$ and derandomize it to obtain a fixed subset $S_{G', H}$ for any $G', H'$. To this end, we begin with sampling $G', H', S^*$ according to $\calD^2_{G, H}$, whose description is given in \Cref{alg:equivalent-sampling}.

    \begin{algorithm}
    \caption{Sampling $G', H', S^*$ from $\calD^2_{G, H}$}
    \label{alg:equivalent-sampling}
    \begin{algorithmic}[1]
        \STATE Sample $i$ according to $\mathcal{I}_{G, H}$.
        \STATE Sample $G', H'$ from $\calD_{G, H, i}$.
        \STATE ($\star$) Sample $i'$ from the conditional distribution of $i$ conditioned on $(G, H, G', H')$. Formally, sample $i'$ with probability
        $$\frac{p_{G, H, i'}\cdot\Pr[(G', H')\la \calD_{G, H, i'}]}{\sum_i p_{G, H, i}\cdot\Pr[(G', H')\la \calD_{G, H, i}]}$$
        \STATE ($\star$) Define $S^* := \{\text{set of indices not fixed by}~\calD_{G, H, i'}\}$.
    \end{algorithmic}
\end{algorithm}

\angelos{I don't know if the notation I have here is very friendly with $\mathcal{X}$ etc...}

Note that the distribution of $(G', H')$ from steps 1 and 2 is $\calD_{G, H}$. 
We will also denote with $\mathcal{S}_{G,H,G',H'}$ the distribution of $S^*$ from steps 3 and 4 conditioned on $(G, H, G', H')$. It is not hard to see that the resulting distributions $\calD^1_{G, H}$ and $\calD^2_{G, H}$ are equal, and thus \Cref{eq:BF_negl} gives:
\begin{align*}
    \Pr_{\substack{G,H\\(G',H', S^*) \la\calD^{2}_{G,H}}}[t \in S^* \land f_C^{H'}(\vv) = G'(t) \colon (t, \vv) \la \tilde{\B}^{G', H'}(1^n,w_{G',H'}, S^*)]\leq \negl(n).
\end{align*}
Writing $\calD^2_{G, H}$ in terms of $\calD_{G, H}$ and $\mathcal{S}_{G, H, G', H'}$ gives:
\begin{align}
    \Pr_{\substack{G,H\\(G',H') \la\calD_{G,H} \\ S^* \la \mathcal{S}_{G, H, G', H'}}}[t \in S^* \land f_C^{H'}(\vv) = G'(t) \colon (t, \vv) \la \tilde{\B}^{G', H'}(1^n,w_{G',H'}, S^*)]\leq \negl(n).
\end{align}

What we now have is that for a set $S^*$ sampled according to the distribution $\mathcal{S}_{G, H, G', H'}$, the value of the LHS is negligible. Thus, there must exist some fixed subset $S_{G, H, G', H'}$ for every $(G, H, G', H')$ such that
\begin{align}
    \Pr_{\substack{G,H\\(G',H') \la\calD_{G,H}}}[t \in S_{G, H, G', H'} \land f_C^{H'}(\vv) = G'(t) \colon (t, \vv) \la \tilde{\B}^{G', H'}(1^n,w_{G',H'}, S_{G, H, G', H'})]\leq \negl(n). \label{eq:S-depends-on-GHGH}
\end{align}
    
    Additionally, the above probability never uses the original $G, H$ oracles. Hence we can apply a trivial averaging argument to conclude that there must exist some fixed $S_{G', H'}$ that only depends on $G', H'$ such that
    \begin{align}
        \Pr_{\substack{G,H\\(G',H') \la\calD_{G,H}}}[t \in S_{G', H'} \land f_C^{H'}(\vv) = G'(t) \colon (t, \vv) \la \tilde{\B}^{G', H'}(1^n,w_{G',H'}, S_{G', H'})]\leq \negl(n). \label{eq:S-depends-on-GH}
    \end{align}
Now we combine 
    \Cref{eq:AI_to_BF,eq:S-depends-on-GH} 
     where we set $\delta = 2^{-n^{c''}}$, $s=\poly(n)$, $Q=\poly(n)$, and $\gamma=2^{-n}$, which implies that $P' = \Theta(2^{n^{c''}} \poly(n))$, 
     and we conclude that
    $$\Pr_{G, H}[t \in S_{G, H} \land f_C^H(\vv) = G(t) : (t, \vv) \la \tilde{\B}^{G, H}(1^n, w_{G, H}, S_{G, H})] \leq \negl(n).$$

    Finally, recall that $|S_{G, H}| \geq 2^n - P' \geq 2^n\cdot 2/3$ and thus our condition on the sizes of the subsets is also satisfied.
This completes the proof of \Cref{cla:prob_find_difference}, which in turn completes the proof of \Cref{lem:core_separation_QMA_QCMA_classical_access}.

\end{proof}
Next, we give a corollary of \Cref{lem:core_separation_QMA_QCMA_classical_access}, which is useful for showing the separation between $\QMA$ and $\QCMA$ relative to a classically-accessible oracle.
\begin{corollary}\label{cor:diagonalization}
Let $\A$ and $\ell$ be as in \Cref{lem:core_separation_QMA_QCMA_classical_access}.
For any unbounded-time algorithm $\B$ that makes polynomially many classical queries and polynomial $s$, there is an integer $N_{\B,s}$ 
such that either of the following holds for all $n\ge N_{\B,s}$:
\begin{enumerate} 
    \item \label{item:diagonalization_case_one}
    There are $G\in \Func(\bit^n,\bit^n)$ and $H\in \Func([n]\times \Sigma, \bit)$ such that 
    \begin{enumerate}
    \item \label{item:diagonalization_case_one_a}
    There is an $\ell(n)$-qubit state $\ket{w}$ such that $\Pr[\A^{G,\ora_{n}[G,H,\emptyset]}(1^n,\ket{w})=1]\ge 2/3$.
    \item \label{item:diagonalization_case_one_b}
    $\Pr[\B^{G,\ora_{n}[G,H,\emptyset]}(1^n,w)=1]< 2/3$ for any $w\in \bit^{s(n)}$. 
    \end{enumerate}
        \item \label{item:diagonalization_case_two} 
        There are $G\in \Func(\bit^n,\bit^n)$, $H\in \Func([n]\times \Sigma, \bit)$,  and $S\subseteq \bit^n$ such that  
    \begin{enumerate}
    \item \label{item:diagonalization_case_two_a}
     $\Pr[\A^{G,\ora_{n}[G,H,S]}(1^n,\ket{w})=1]\le 1/3$ for any $\ell(n)$-qubit state $\ket{w}$.
    \item \label{item:diagonalization_case_two_b}
    There is $w\in \bit^{s(n)}$ such that $\Pr[\B^{G,\ora_{n}[G,H,S]}(1^n,w)=1]>  1/3$. 
    \end{enumerate}
\end{enumerate}
\end{corollary}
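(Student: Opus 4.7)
The plan is to argue by contradiction: assuming there exist infinitely many $n$ at which neither Case~\ref{item:diagonalization_case_one} nor Case~\ref{item:diagonalization_case_two} holds, I would derive a violation of the indistinguishability bound in Item~\ref{item:ind_with_classical} of \Cref{lem:core_separation_QMA_QCMA_classical_access}. The intuition is that the failure of Case~\ref{item:diagonalization_case_one} forces $\B$ to have a good classical witness on $\ora_n[G,H,\emptyset]$ for almost every $(G,H)$, while the failure of Case~\ref{item:diagonalization_case_two} forces $\B$ to reject on $\ora_n[G,H,S_{G,H}]$ under every classical witness, for the large $S_{G,H}$ supplied by the lemma; together these make $\B$ distinguish the two oracles with constant advantage, contradicting the lemma.

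Concretely, I would fix a large $n$ at which neither case holds. Item~\ref{item:dist_with_quantum_completeness} supplies a state $\ket{w^*}$ with $\Pr_{G,H}[\A^{G,\ora_n[G,H,\emptyset]}(1^n,\ket{w^*})=1]\ge 1-\negl(n)$, so Markov's inequality yields a \emph{good} set of pairs $(G,H)$ of measure $1-\negl(n)$ on which this probability is already at least $2/3$; every such pair satisfies Condition~\ref{item:diagonalization_case_one_a} via the witness $\ket{w^*}$. Since Case~\ref{item:diagonalization_case_one} fails, Condition~\ref{item:diagonalization_case_one_b} must fail on each good pair, furnishing a classical witness $w_{G,H}\in\bit^{s(n)}$ that makes $\B^{G,\ora_n[G,H,\emptyset]}(1^n,w_{G,H})$ accept with probability at least $2/3$. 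On the complement I set $w_{G,H}:=0^{s(n)}$, obtaining a full family $\{w_{G,H}\}_{G,H}$.

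Next, I would feed this family into Item~\ref{item:ind_with_classical} to obtain sets $\{S_{G,H}\}_{G,H}$ with $|S_{G,H}|\ge 2^n\cdot 2/3$ and expected distinguishing advantage at most $\mu(n)$. The soundness bound in Item~\ref{item:dist_with_quantum_soundness} immediately gives $\Pr[\A^{G,\ora_n[G,H,S_{G,H}]}(1^n,\ket{w})=1]\le 1/3$ for every quantum witness, so Condition~\ref{item:diagonalization_case_two_a} holds; the failure of Case~\ref{item:diagonalization_case_two} then forces Condition~\ref{item:diagonalization_case_two_b} to fail, meaning that on every $(G,H)$ the algorithm $\B^{G,\ora_n[G,H,S_{G,H}]}(1^n,w_{G,H})$ accepts with probability at most $1/3$. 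On the good set the two acceptance probabilities therefore differ by at least $1/3$ in the same direction, so the averaged absolute distinguishing advantage is at least $(1-\negl(n))\cdot\tfrac{1}{3}-\negl(n)$, which exceeds $\mu(n)$ once $n$ is large enough, a contradiction. Taking $N_{\B,s}$ to be the smallest threshold past which this inequality is strict completes the argument.

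The hard part will be ensuring that the $1/3$ gap is truly one-sided so that the averaged distinguishing advantage does not vanish by cancellation; this is precisely why the classical witness family is chosen on the good set to make $\B$ \emph{accept} $\ora_n[G,H,\emptyset]$, rather than merely distinguish between oracles. A minor subtlety is that Condition~\ref{item:diagonalization_case_one_a} allows the quantum witness to depend on $(G,H)$, whereas Item~\ref{item:dist_with_quantum_completeness} provides a universal $\ket{w^*}$; this is harmless, since a universal witness in particular witnesses the existential statement. Beyond these, the remainder is routine arithmetic.
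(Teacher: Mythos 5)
Your proof is correct and takes essentially the same approach as the paper's: averaging on the completeness bound to obtain a high-measure good set of pairs $(G,H)$, negating Case~\ref{item:diagonalization_case_one} to extract a good classical witness family on that set, feeding the resulting family into the indistinguishability item of \Cref{lem:core_separation_QMA_QCMA_classical_access}, and invoking soundness to confirm Condition~\ref{item:diagonalization_case_two_a}. The only structural difference is that the paper presents a direct case analysis (restricting the indistinguishability bound to the good set and then extracting a single $(G,H)$ witnessing Case~\ref{item:diagonalization_case_two}), whereas you argue the contrapositive over the full distribution with padding on the bad set; this is a cosmetic rearrangement of the same argument, and you correctly handle the one genuine subtlety (one-sidedness of the gap so that the absolute value outside the expectation cannot be defeated by cancellation).
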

\begin{proof}
  By  \Cref{item:dist_with_quantum_completeness} of \Cref{lem:core_separation_QMA_QCMA_classical_access} and a standard averaging argument, 
  for $1-\negl(n)$-fraction of $(G,H)$,  
    there is an $\ell(n)$-qubit state $\ket{w}$ such that 
    $$\Pr[\A^{G,\ora_{n}[G,H,\emptyset]}(1^n,\ket{w})=1]\ge \frac{2}{3}$$
for sufficiently large $n$. Let $\good_n$ be the set of all such $(G,H)$. 
Suppose that there is $(G,H)\in \good_n$ such that $\Pr[\B^{G,\ora_{n}[G,H,\emptyset]}(1^n,w)=1]< 2/3$ for any $w\in \bit^{s(n)}$. Then it implies that \Cref{item:diagonalization_case_one} of \Cref{cor:diagonalization} is satisfied. 
Thus, it suffices to prove that \Cref{item:diagonalization_case_two} of \Cref{cor:diagonalization} is satisfied assuming that 
for all $(G,H)\in \good_n$, there is $w\in \bit^{s(n)}$ such that $\Pr[\B^{G,\ora_{n}[G,H,\emptyset]}(1^n,w)=1]\ge 2/3$. We prove it below.

 Since $\good_n$ consists of $1-\negl(n)$-fraction of $(G,H)$, 
a similar inequality to \Cref{item:ind_with_classical} of \Cref{lem:core_separation_QMA_QCMA_classical_access} holds even if we sample $(G,H)$ from $\good_n$, i.e., 
there is a negligible function $\mu'$ 
such that for any
family of $s(n)$-bit classical advice $\{w_{G,H}\}_{G,H}$, 
there is a subset $S\subseteq \bit^n$ such that $|S|\ge 2^{n}\cdot 2/3$ and  
$$\left|\Pr_{G,H}[\B^{G,\ora_{n}[G,H,\emptyset]}(1^n,w_{G,H})=1]-\Pr_{G,H}[\B^{G,\ora_{n}[G,H,S]}(1^n,w_{G,H})=1]\right|\le \mu'(n)$$
 where $(G,H)\la \good_n$. In particular, there is $(G,H)\in \good_n$ such that for any  $s(n)$-bit classical advice $w$, there is a subset $S\subseteq \bit^n$ such that $|S|\ge 2^{n}\cdot 2/3$ and
 $$\left|\Pr_{G,H}[\B^{G,\ora_{n}[G,H,\emptyset]}(1^n,w)=1]-\Pr_{G,H}[\B^{G,\ora_{n}[G,H,S]}(1^n,w)=1]\right|< \frac{1}{3}$$
 for sufficiently large $n$.
 We fix such $(G,H)$. Since $(G,H)\in \good_n$, by our assumption, there is $w\in \bit^{s(n)}$ such that $\Pr[\B^{G,\ora_{n}[G,H,\emptyset]}(1^n,w)=1]\ge 2/3$.
 Combined with the above inequality,  there are
 $w\in \bit^{s(n)}$ and
 a subset $S\subseteq \bit^n$ such that $|S|\ge 2^{n}\cdot 2/3$ and
 $\Pr[\B^{G,\ora_{n}[G,H,S]}(1^n,w)=1]> 1/3$. This means that \Cref{item:diagonalization_case_two_b} of \Cref{cor:diagonalization} is satisfied. 
 Moreover, since $|S|\ge 2^{n}\cdot 2/3$, \Cref{item:dist_with_quantum_soundness} implies \Cref{item:diagonalization_case_two_a} of \Cref{cor:diagonalization}. Thus, \Cref{item:diagonalization_case_two} of  \Cref{cor:diagonalization} is satisfied. This completes the proof of  \Cref{cor:diagonalization}.
\end{proof}

Given \Cref{cor:diagonalization}, it is straightforward to prove the separation of $\QMA$ and $\QCMA$ relative to a classically-accessible oracle using the standard diagonalization argument.  
\begin{theorem}\label{thm:separation_QMA_QCMA_classical_access}
There is a classically-accessible classical oracle relative to which $\QMA \neq \QCMA$.
\end{theorem}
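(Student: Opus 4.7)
The plan is to invoke \Cref{cor:diagonalization} together with a standard diagonalization against an enumeration of $\QCMA$ machines, following the same template used to deduce \Cref{thm:separation_poly_qpoly} from \Cref{lem:separation_poly_qpoly_dist}. The oracle will be built level by level, with the oracle at each input length $n$ set to either the ``YES'' or ``NO'' configuration supplied by \Cref{cor:diagonalization}, and the language $\calL$ will simply record which configuration was chosen.

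First, I would enumerate all pairs $(\B_i, s_i)$ consisting of a QPT machine $\B_i$ (with classical oracle access) and a polynomial witness-length bound $s_i$; every candidate $\QCMA$ verifier appears in this list. I then pick an increasing sequence of input lengths $n_1 < n_2 < \cdots$ growing fast enough that the running time of $\B_i$ on inputs of length $n_i$ is strictly smaller than $n_{i+1}$, so that the behavior of $\B_i$ on $1^{n_i}$ cannot depend on oracle values at lengths $\ge n_{i+1}$. Applying \Cref{cor:diagonalization} to $(\B_i, s_i)$ at length $n_i$ yields either \Cref{item:diagonalization_case_one} (giving $G_i, H_i$ and putting $1^{n_i}\in \calL$) or \Cref{item:diagonalization_case_two} (giving $G_i, H_i, S_i$ with $|S_i|\ge \tfrac{2}{3}\cdot 2^{n_i}$ and putting $1^{n_i}\notin \calL$). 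At all other input lengths, set the oracle to some fixed default (e.g., the constant-$0$ oracle) and declare such inputs to be outside $\calL$. Formally, define
\[
\ora_n := \begin{cases}\ora_n[G_i,H_i,\emptyset] & \text{if } n=n_i \text{ and Case \ref{item:diagonalization_case_one} holds,}\\ \ora_n[G_i,H_i,S_i] & \text{if } n=n_i \text{ and Case \ref{item:diagonalization_case_two} holds,}\\ \text{default} & \text{otherwise,}\end{cases}
\]
together with the corresponding $G$-oracles, and let $\ora$ be their disjoint union.

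For the $\QMA^\ora$ upper bound, I would use the verifier $\A$ from \Cref{lem:core_separation_QMA_QCMA_classical_access}: on input $1^{n_i}$, if $1^{n_i}\in\calL$, the corresponding quantum witness $\ket{w}$ from \Cref{item:diagonalization_case_one_a} makes $\A$ accept with probability $\ge 2/3$, while if $1^{n_i}\notin\calL$, \Cref{item:diagonalization_case_two_a} ensures every witness is rejected with probability $\ge 2/3$; on lengths $n\ne n_i$, the verifier trivially rejects. Hence $\calL\in\QMA^{\ora}$. For the $\QCMA^\ora$ lower bound, fix any purported $\QCMA$ verifier $(\B,s)$; it appears in the enumeration as $(\B_i,s_i)$ for some $i$. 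By construction, if we took \Cref{item:diagonalization_case_one} at level $i$, then $1^{n_i}\in\calL$ but \Cref{item:diagonalization_case_one_b} says no classical witness makes $\B_i$ accept with probability $\ge 2/3$, violating completeness; if we took \Cref{item:diagonalization_case_two}, then $1^{n_i}\notin\calL$ but \Cref{item:diagonalization_case_two_b} gives a classical witness making $\B_i$ accept with probability $>1/3$, violating soundness. Either way $\B_i$ fails to decide $\calL$, so $\calL\notin\QCMA^\ora$.

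The main subtlety, and the only step requiring care, is ensuring the diagonalization is self-consistent: the behavior of $\B_i$ on $1^{n_i}$ must be unaffected by the oracle values at later levels, which is why the sequence $\{n_i\}$ is chosen to grow faster than the running times, exactly as in the proof of \Cref{thm:separation_poly_qpoly}. A minor point is that \Cref{cor:diagonalization} is stated for all sufficiently large $n$, so we must also choose each $n_i$ above the threshold $N_{\B_i,s_i}$ produced by the corollary; both constraints are simultaneously satisfiable because we may always inflate $n_i$. Once these parameters are chosen, the remainder of the argument is bookkeeping and does not require any additional probabilistic or oracle-theoretic machinery beyond \Cref{cor:diagonalization}.
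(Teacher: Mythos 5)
Your proposal is correct and follows essentially the same diagonalization argument as the paper, built on the same dichotomy from \Cref{cor:diagonalization} and the same requirement that $n_{i+1}$ exceed both $T_i(n_i)$ and the corollary's threshold so that levels do not interfere. The only difference is cosmetic bookkeeping: you place the all-zeros oracle at lengths $n\neq n_i$ and declare such $1^n\notin\calL$ (observing that $\ora_n[G,H,\bit^n]$ is all-zeros and hence $\A$ rejects by \Cref{item:dist_with_quantum_soundness} of \Cref{lem:core_separation_QMA_QCMA_classical_access}), whereas the paper initializes the entire oracle to the YES configuration with $\flag_n=1$ everywhere and only flips levels $n_i$ to NO when Case~\ref{item:diagonalization_case_two} applies; both variants work, and yours even avoids the paper's footnote about $\A$ possibly failing on finitely many small $n$ in the YES configuration.
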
 
\begin{proof}[Proof of \Cref{thm:separation_QMA_QCMA_classical_access}]
We enumerate all tuples $(\B_1,s_1),(\B_2,s_2),...$ where $\B_j$ is a QPT machine that makes polynomially many classical queries and $s_j$ is a polynomial for $j\in \mathbb{N}$. 
Let $T_j$ be a polynomial such that $\B_j$ runs in time $T_j(n)$ when it takes $1^n$ and $y\in \bit^{s_j(n)}$ as input.  
For  
a sequence $\ora=(\ora_1,\ora_2,...)$ of oracles $\ora_n:\bit^n\times C_n \rightarrow \bit$, 
$n^*\in \mathbb{N}$, 
$G:\bit^{n^*}\rightarrow \bit^{n^*}$, 
$H:[n^*]\times \Sigma^{n^*}\ra \bit$, and $S\subseteq \bit^{n^*}$, 
let $\tilde{\ora}[G,H,S]$ be the same as $\ora$ except that the $n^*$-th oracle is replaced with $\ora_{n^*}[G,H,S]$. That is,
we define $\tilde{\ora}[G,H,S]:=(\tilde{\ora}_1[G,H,S],\tilde{\ora}_2[G,H,S],...)$ where  $\tilde{\ora}_{n^*}[G,H,S]:=\ora_{n^*}[G,H,S]$ and $\tilde{\ora}_{n}[G,H,S]:=\ora_{n}$ for all $n\neq n^*$.
We define sequences of oracles $G^{(0)},G^{(1)},...$ and $\ora^{(0)},\ora^{(1)},...$ and a sequence of bits $\flag_1,\flag_2,...$ by the following procedure, where 
for each $i$, 
$G^{(i)}$ and
$\ora^{(i)}$ themselves are also sequences of oracles 
$G^{(i)}_1,G^{(i)}_2,...$ and
$\ora^{(i)}_1,\ora^{(i)}_2,...$, respectively, such that 
$G^{(i)}_n:\bit^n\ra \bit^n$
and 
$\ora^{(i)}_n:\bit^n\times C \rightarrow \bit$.
\begin{enumerate}
    \item 
    Take 
    $(G_1^{(0)},G_2^{(0)},...)$ and
    $(H_1^{(0)},H_2^{(0)},...)$ in such a way that 
  there is a $\ell(n)$-qubit state $\ket{w}$ such that $$\Pr[\A^{G_n,\ora_{n}[G_n^{(0)},H_n^{(0)},\emptyset]}(1^n,\ket{w})=1]\ge 2/3$$ for all $n\in \mathbb{N}$ 
    where $\A$ and $\ell$ are as in 
    \Cref{item:dist_with_quantum} of
    \Cref{lem:core_separation_QMA_QCMA_classical_access}. Note that such $(G_1^{(0)},G_2^{(0)},...)$ and
    $(H_1^{(0)},H_2^{(0)},...)$ exist by \Cref{item:dist_with_quantum} of \Cref{lem:core_separation_QMA_QCMA_classical_access}.
    Let $\ora^{(0)}:=(\ora^{(0)}_1,\ora^{(0)}_2,...)$ where $\ora^{(0)}_n:=\ora_n[G_n^{(0)},H_n^{(0)},\emptyset]$ for $n\in \mathbb{N}$.
    Set $n_0:=1$ and
     $\flag_n:=1$ for all $n\in \mathbb{N}$. 
    \item For $i=1,2,...,$ 
    \begin{enumerate}
        \item 
        Let $n_i:=\max\{T_{i-1}(n_{i-1})+1,N_{\B_i,s_i}\}$ where $N_{\B_i,s_i}$ is as in \Cref{cor:diagonalization} where we define $T_0(n_0):=1$ for convenience.
        \item 
        Do either of the following:
        \begin{enumerate}
        \item 
        If there are $G_{n_i}^{(i)}$ and $H_{n_i}^{(i)}$ such that 
        \begin{itemize}
        \item
    there is an $\ell(n_i)$-qubit state $\ket{w}$ such that $\Pr[\A^{G^{(i)},\tilde{\ora}^{(i-1)}[G_{n_i}^{(i)},H_{n_i}^{(i)},\emptyset]}(1^{n_i},\ket{w})=1]\ge 2/3$, and
    \item $\Pr[\B_i^{G^{(i)},\tilde{\ora}^{(i-1)}[G_{n_i}^{(i)}H_{n_i}^{(i)},\emptyset]}(1^{n_i},w)=1]< 2/3$ for any $w\in \bit^{s_i(n_i)}$,
    \end{itemize}
     where $G_{n}^{(i)}:=G_{n}^{(i-1)}$ for all $n\neq n_i$ (which define $G^{(i)}:=(G_1^{(i)},G_2^{(i)},...)$), 
       then set            $\ora^{(i)}:=\tilde{\ora}^{(i-1)}[G_{n_i}^{(i)},H_{n_i}^{(i)},\emptyset]$. 
    \item Otherwise, by \Cref{cor:diagonalization},
    there are $G_{n_i}^{(i)}$, $H_{n_i}^{(i)}$, and  $S_{n_i}^{(i)}$  such that 
    \begin{itemize}
\item $\Pr[\A^{G^{(i)},\tilde{\ora}^{(i-1)}[G_{n_i}^{(i)},H_{n_i}^{(i)},S_{n_i}^{(i)}]}(1^{n_i},\ket{w})=1]\le 1/3$ for any $\ell(n_i)$-qubit state $\ket{w}$.
    \item There is $w\in \bit^{s_i(n_i)}$ such that $\Pr[\B_i^{G^{(i)},\tilde{\ora}^{(i-1)}[G_{n_i}^{(i)},H_{n_i}^{(i)},S_{n_i}^{(i)}]}(1^{n_i},w)=1]>  1/3$. 
    \end{itemize}
     where $G_{n}^{(i)}:=G_{n}^{(i-1)}$ for all $n\neq n_i$ (which define $G^{(i)}:=(G_1^{(i)},G_2^{(i)},...)$). 
     Set 
     $\ora^{(i)}:=\tilde{\ora}^{(i-1)}[G_{n_i}^{(i)},H_{n_i}^{(i)},S_{n_i}^{(i)}]$ and overwrite $\flag_{n_i}:=0$.
        \end{enumerate}
    \end{enumerate}
\end{enumerate}
Let 
$G$ and
$\ora$ be oracles that are consistent to 
$G^{(i)}$ and
$\ora^{(i)}$, respectively, on all $n\in [n_{i+1}-1]$ for all $i\in \{0\}\cup \mathbb{N}$.
(That is,  for any $i\in \{0\}\cup \mathbb{N}$, $n\in [n_{i+1}-1]$, $t\in \bit^n$, and $\vv\in C$,  
$G(t)=G_n^{(i)}(t)$ and $\ora(t,\vv)=\ora_n^{(i)}(t,\vv)$.)
They are well-defined since we have $G_n^{(i+1)}=G_n^{(i)}$ and $\ora_n^{(i+1)}=\ora_n^{(i)}$ for all $n\leq  n_{i+1}-1$ by the definitions of $G^{(i)}$ and
$\ora^{(i)}$.
Let 
$\mathcal{L}$ be a unary language defined as
$\mathcal{L}:=\{1^n: \flag_n=1\}$. Then by the definitions of $G$ and $\ora$, $\A$ is a valid $\QMA$ verification algorithm for $\mathcal{L}$ i.e., $\mathcal{L}\in \QMA^{G,\ora}$.\footnote{Strictly speaking, $\A$ may not work as a $\QMA$ verifier for small $n$ on which $\Pr[\A^{G_n,\ora_{n}[G_n^{(0)},H_n^{(0)},\emptyset]}(1^n,\ket{w})=1]\ge 2/3$ does not hold. 
However, since there are only finitely many such $n$, we can augment $\A$ to work on all $n\in \mathbb{N}$ by hardwiring the correct outputs on all such $n$.} 
Moreover, for any QPT machine $\B_i$ with classical witness length $s_i$, it fails to be a valid $\QCMA$ verification algorithm for $\mathcal{L}$ on input length $n_i$. Thus, we have $\mathcal{L}\notin \QCMA^{G,\ora}$. 
This completes the proof of \Cref{thm:separation_QMA_QCMA_classical_access}.
\end{proof}
\section{\texorpdfstring{$\QMA$ vs $\QCMA$ under Distributional Oracle}{QMA vs QCMA under Distributional Oracle}}\label{sec:QMA_QCMA}

In this section, we demonstrate a $\QMA$ and $\QCMA$ separation relative to a distributional quantumly-accessible classical oracle.
\begin{lemma}
\label{lem:indist_qma_qcma} 
There exists a family of oracles $\{\ora_n^{b}[H, r]\}_{n\in \mathbb{N},b\in \bit,
H\in \mathcal{H}_n,
r\in\bit^n,}$ 
and a distribution $\mathcal{D}_n$ over 
$\mathcal{H}_n\times \bit^n$ such that the following hold: 
    \begin{enumerate}
        \item {\bf (Distinguishability with Quantum Witness.)} There exists QPT algorithm
            $\A$ and $\poly(n)$-qubit quantum witness $\{\ket{z_{H}}\}_{H}$ such that
            \begin{align*}
                \Pr_{\substack{(H,r)\la \mathcal{D}_n \\b\la \bit}}[\A^{\ora_n^{b}[H,r]}(\ket{z_H},r)=b]\geq 1 - \negl(n).
            \end{align*}
        \item \label{item:ind_with_classical_witness} {\bf(Indistinguishability with Classical Witness.)} For any QPT algorithm
            $\B$ and any polynomial $s$, there is a negligible function $\mu$ such that for any 
            $s(n)$-bit classical witness $\{z_{H}\}_{H}$
            \begin{align*}
                \left|\Pr_{\substack{(H,r)\la \mathcal{D}_n \\b\la \bit}}[\B^{\ora_n^{b}[H,r]}(z_{H},r)=b]-\frac{1}{2}\right|\leq \mu(n).
            \end{align*}
    \end{enumerate}
\end{lemma}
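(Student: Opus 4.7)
I would instantiate the distribution and oracles directly from the YZ construction. Let $C\subseteq \Sigma^n$ be the code of \Cref{thm:YZLiu_separation}, let $\mathcal{H}_n := \Func([n]\times \Sigma,\bit)$, and let $\mathcal{D}_n$ sample $H\la\mathcal{H}_n$ and $r\la\bit^n$ independently. Define the ``YES'' oracle $\ora_n^{1}[H,r]$ to take $\vv\in\Sigma^n$ as input and output $1$ iff $\vv\in C$ and $f_C^H(\vv)=r$, and define the ``NO'' oracle $\ora_n^{0}[H,r]$ to be the constantly-zero function. Note that $\ora_n^{1}[H,r]$ and $\ora_n^{0}[H,r]$ differ exactly on the set $S_{H,r}:=\{\vv\in C:f_C^H(\vv)=r\}$ of valid YZ preimages of $r$.

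For the distinguishability with a quantum witness, I would take the witness $\ket{z_H}$ from \Cref{item:YZ_easiness_with_quantum_advice} of \Cref{thm:YZLiu_separation}. The algorithm $\A$ on input $r$ runs the YZ quantum inverter to obtain $\vv\la\A_{\mathsf{YZ}}(\ket{z_H},r)$, queries the oracle at $\vv$, and outputs the response. By \Cref{thm:YZLiu_separation}, with overwhelming probability over $(H,r)\la\mathcal{D}_n$ the inverter returns a valid $\vv\in S_{H,r}$, in which case the response is $b$. Averaging over the random bit $b$ only affects the output deterministically, so the probability of outputting $b$ is $1-\negl(n)$.

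For the indistinguishability with a classical witness, I would apply the One-Way to Hiding lemma (\Cref{lem:o2h}) with $G:=\ora_n^{1}[H,r]$, $H':=\ora_n^{0}[H,r]$, and the ``bad set'' $S_{H,r}$. For any QPT $\B$ making $Q=\poly(n)$ quantum queries, the standard reduction from distinguishing to finding gives
\[
\left|\Pr[\B^{\ora_n^{1}[H,r]}(z_H,r)=1]-\Pr[\B^{\ora_n^{0}[H,r]}(z_H,r)=1]\right|\le 2Q\sqrt{\Pr[\B'{}^{\ora_n^{0}[H,r]}(z_H,r)\in S_{H,r}]},
\]
where $\B'$ is the query-measuring extractor. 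Averaging over $(H,r)$ and applying Jensen's inequality reduces the indistinguishability bound to showing that $\B'$ outputs an element of $S_{H,r}$ only with negligible probability on average. The crucial observation is that $\B'$'s oracle $\ora_n^{0}[H,r]$ is the constant-zero function, so its quantum queries carry no information and $\B'$ effectively becomes a non-uniform algorithm with $s(n)$ bits of classical advice depending only on $H$, taking $r$ as input, and trying to output a $\vv$ with $f_C^H(\vv)=r$. This is exactly the setting of \Cref{thm:YZLiu_separation} (hardness with classical advice and no online queries), so the success probability is $\mu(n)=\negl(n)$. Plugging back yields an advantage of $2Q\sqrt{\mu(n)}=\negl(n)$, which together with a standard rewriting $\bigl|\Pr[\B=b]-1/2\bigr|=\tfrac{1}{2}\bigl|\Pr[\B=1\mid b=1]-\Pr[\B=1\mid b=0]\bigr|$ proves \Cref{item:ind_with_classical_witness}.

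The main subtlety to double-check is the O2H bookkeeping: the lemma is stated for fixed (or jointly distributed) $(S,G,H,z)$, whereas here $S_{H,r}$, the two oracles, and the witness $z_H$ are all correlated through $H$ and $r$. Since \Cref{lem:o2h} allows an arbitrary joint distribution and the extractor $\B'$ does not need to know $S_{H,r}$ explicitly, this should go through cleanly, but I would be careful to take the expectation over $(H,r)\la\mathcal{D}_n$ \emph{before} applying Jensen's inequality so that the inside probability matches the inversion game of \Cref{thm:YZLiu_separation} verbatim. No other step requires more than routine manipulation.
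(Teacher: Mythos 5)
Your proposal is correct and follows essentially the same route as the paper: same choice of $\mathcal{D}_n$ and oracles, same use of the YZ quantum witness for the completeness direction, and the same application of the one-way to hiding lemma to reduce indistinguishability to the hardness of inverting $f_C^H$ with classical advice and no queries (noting that the all-zero oracle $\ora_n^0[H,r]$ can be simulated without $H$). The only cosmetic difference is that the paper applies \Cref{lem:o2h} directly with the joint distribution over $(H,r)$ rather than fixing $(H,r)$ and averaging via Jensen's inequality, but both routes give the same bound.
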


\begin{proof}
    Let  $C\subseteq \Sigma^n$ be the code in \Cref{thm:YZLiu_separation}.
Let $\mathcal{H}_n:=\Func([n]\times \Sigma,\bit)$. 
  For a function $H_n:[n]\times \Sigma \rightarrow \bit$ and $r_n \in \bit^n$, let
  $\ora_{n}^{b}[H,r]$ be an oracle that works as follows:
  if $b=1$, $\ora_{n}^{1}[H,r]$ takes $\vv\in \Sigma^n$ as input and outputs $1$ if $f_C^H(\vv)=r_n$ and outputs $0$ otherwise; 
  if $b=0$, $\ora_{n}^{0}[H,r]$ always returns $0$ for all inputs  $\vv\in \Sigma^n$. 
  Note that $\ora_{n}^0[H,r]$ does not depend on $(H,r)$ at all, but we use this notation for convenience.
  
  \paragraph{Distinguishability with quantum witness.} We use the quantum advice of~\Cref{thm:YZLiu_separation} as a witness.
  Since we can generate a $\vv\in C$ such that $H(\vv)=r$ using the algorithm in~\Cref{thm:YZLiu_separation} with probability~$1-\negl(n)$ over random $H,r$, we query $\ora_n^{b}[H,r]$ with the generated $\vv$, and outputs the query result. If $b=1$, the oracle should return 1 with probability $1-\negl(n)$, else it will always return $0$. 
     
  \paragraph{Indistinguishability with classical witness.}
     
  By the one-way to hiding lemma (\Cref{lem:o2h}), the probability that an unbounded-time algorithm $\B$ can distinguish $\ora_{n}^1[H, r]$ from $\ora_{n}^0[H, r]$ is related to the probability that measuring a random query of $\B$ collapses to an input on which the two oracles differ. The inputs $\vv$ for which $\ora_{n}^1[H, r]$ and $\ora_{n}^0[H, r]$ differ are precisely the ones that satisfy $f_C^H(\vv) = r$. Thus if we define $\M$ to be the adversary that outputs the measurement of a random query of $\B$ we have
  \begin{align*}
      \left|\Pr_{H,r}[\B^{\ora_{n}^{1}[H,r]}(z_{H},r)=1]-\Pr_{H,r}[\B^{\ora_{n}^{0}[H,r]}(z_{H},r)=1]\right|\\
      \leq 2q\sqrt{\Pr\left[f_C^H(\vv) = r \mid \vv \leftarrow \M^{\ora_{n}^{0}[H, r]}(z_{H}, r)\right]}
  \end{align*}
  where $q$ is the number of $\B$'s queries.
  
  Since $\ora_{n}^0[H, r]$ is an oracle with all-zeros, $\M$ can simulate it without access to $H$. Thus from \Cref{item:YZ_easiness_with_quantum_advice} of \Cref{thm:YZLiu_separation}, the RHS is $\negl(n)$. This implies that for any unbounded-time algorithm $\B$ that makes $\poly(n)$ quantum queries and a family of $\poly(n)$-bit classical witness $\{z_{H}\}_{H}$,  
  \begin{align*}
      \left|\Pr_{H,r}[\B^{\ora_{n}^1[H,r]}(z_{H},r)=1]-\Pr_{H,r}[\B^{\ora_{n}^0[H,r]}(z_{H},r)=1]\right|\leq \negl(n).
  \end{align*}
Equivalently, we have
  \begin{align*}
      \left|\Pr_{H,r,b\la \bit}[\B^{\ora_n^{b}[H,r]}(z_{H},r)=b]-\frac{1}{2}\right|\leq \negl(n).
  \end{align*}
\end{proof}



We now use the standard diagonalization argument to translate the indistinguishability result of~\Cref{lem:indist_qma_qcma} to a $\QMA$ and $\QCMA$ separation with respect to a distributional quantumly-accessible classical oracle $\ora$.

\begin{theorem}\label{thm:separation_QMA_QCMA_dist}
    There is a distributional quantumly-accessible classical oracle $\ora$ relative to which $\QMA \neq \QCMA$. 
    

    
    \end{theorem}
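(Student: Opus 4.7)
The plan is a standard diagonalization argument built on \Cref{lem:indist_qma_qcma}. I would package the per-length oracles of the lemma into a single distributional oracle $\ora$, define a unary language $\mathcal{L}$ whose membership at $1^n$ is controlled by a bit $b_n \in \bit$ encoded in $\ora$, and then pick the sequence $(b_n)_{n\in\mathbb{N}}$ adversarially against an enumeration of $\QCMA$ verifiers. Concretely, for each $n$ I would sample $(H_n, r_n)\la \calD_n$ independently and let the $n$-th ``slot'' of $\ora$ expose two interfaces: a metadata query that returns $r_n$, and an input-$\vv$ query that returns $\ora_n^{b_n}[H_n, r_n](\vv)$. The distribution of $\ora$ is then the product distribution $\prod_n \calD_n$, parameterised by the deterministic sequence $(b_n)_n$ that we will choose.

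I would next set $\mathcal{L}:=\{1^n: b_n=1\}$. For $\QMA$-membership, the honest witness at $1^n$ is the quantum state $\ket{z_{H_n}}$ from \Cref{lem:indist_qma_qcma}; the verifier first queries the metadata slot to recover $r_n$ and then simulates the algorithm $\A$ of the lemma, which outputs $b_n$ with probability $1-\negl(n)$ over $(H_n,r_n)\la \calD_n$. Crucially, this verifier succeeds for \emph{every} choice of $(b_n)_n$, so $\mathcal{L}\in \QMA^\ora$ holds once the sequence is fixed, as long as standard bootstrapping handles finitely many small $n$.

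For the $\QCMA$ lower bound I would enumerate candidate pairs $(\B_i,s_i)_{i\ge 1}$ of QPT verifiers with quantum oracle access and classical-witness length $s_i$, and then choose a sparse sequence of diagonalisation lengths $n_1<n_2<\cdots$ such that $n_{i+1}$ exceeds the running time of $\B_i$ at $n_i$; for $n\notin\{n_i\}$ I would set $b_n:=0$ by default. At stage $i$ I would pick $b_{n_i}$ so that $\B_i$ fails to be a correct $\QCMA$ verifier for $\mathcal{L}$ at $1^{n_i}$. Suppose, for contradiction, that no such choice exists: then on the one hand, under $b_{n_i}=1$, some classical witness $w\in\bit^{s_i(n_i)}$ makes $\B_i$ accept with probability at least $2/3$; and on the other hand, under $b_{n_i}=0$, every witness, and in particular the same $w$, makes $\B_i$ accept with probability at most $1/3$. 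Applying \Cref{item:ind_with_classical_witness} of \Cref{lem:indist_qma_qcma} to the constant family $\{z_H\}_H$ with $z_H\equiv w$ yields a distinguishing advantage at most $\mu(n_i)$, contradicting the gap $\ge 1/3$ for sufficiently large $n_i$. Hence the adversarial $b_{n_i}$ always exists.

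The main obstacle is the bookkeeping required to make the diagonalisation fit the distributional-oracle setting. Two independence conditions have to hold: the adversarial choice of $b_{n_i}$ at stage $i$ must not retroactively break $\QMA$-correctness at other input lengths, and it must not reactivate an already-falsified $\B_j$ for $j<i$. The first is automatic because the $\QMA$ verifier's analysis is uniform in $(b_n)_n$ and the per-length samples $(H_n,r_n)$ are independent. The second follows from the sparsity condition on $(n_i)_i$, which ensures $\B_j$'s queries cannot reach the $n_i$-slot of $\ora$ for any $i>j$, so once $b_{n_j}$ is pinned down the falsification at $1^{n_j}$ is permanent. Combining these observations with the enumeration argument yields $\mathcal{L}\in \QMA^\ora\setminus \QCMA^\ora$, establishing \Cref{thm:separation_QMA_QCMA_dist}.
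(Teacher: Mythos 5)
Your overall plan (diagonalization over an enumeration of $\QCMA$ verifiers, using \Cref{lem:indist_qma_qcma} to pick a falsifying $b_{n_i}$ at each stage) is reasonable, and it differs from the paper's route (the paper chooses $(b_n, H_n, r_n)$ randomly and shows the separation holds with probability $1$ via a Borel--Cantelli-type argument). However, there is a genuine gap in how you set up the distributional oracle.

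You declare the distribution of $\ora$ to be $\prod_n \calD_n$, so each $(H_n, r_n)$ is sampled when $f\la \F$ is drawn. But you then take the $\QMA$ witness at length $n$ to be $\ket{z_{H_n}}$, which \emph{depends on the realized $H_n$}. This contradicts the semantics of a distributional oracle: the witness in the completeness condition is a single fixed state that must satisfy $\Pr_{f\la\F}[V^f(x,\ket{w})=1]\ge 2/3$; it is chosen before $f$ is drawn and can depend only on the distribution, not on the sample (the paper emphasizes this explicitly in its remark after the definition). The YZ witness genuinely depends on $H$, so there is no fixed witness that works if $H_n$ is part of what $\F$ samples; your $\QMA$-membership argument does not go through as stated.

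Conversely, if you fix $H_n$ outside the distribution (as the paper ultimately does, sampling only $r_n$) so that $\ket{z_{H_n}}$ is a legitimate witness, then your falsification-via-indistinguishability step breaks: you invoke \Cref{item:ind_with_classical_witness} of \Cref{lem:indist_qma_qcma} to bound the gap between the acceptance probabilities under $b_{n_i}=1$ and $b_{n_i}=0$, but that item bounds the distinguishing advantage \emph{averaged over $H\la\calD_n$}. For a fixed $H_{n_i}$ the lemma gives you nothing directly, and a naive Markov argument does not suffice because the advantage can be negative, and because the $w$ you extract from the assumed completeness of $\B_i$ itself depends on $H_{n_i}$. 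The paper resolves exactly this tension by keeping the randomness over $(b_n,H_n,r_n)$ throughout the diagonalization, proving $\Pr[S_V(n)] < c < 1$ via a careful averaging step (the $E_1\wedge E_2$ bound, including the ``add $1$, Markov, subtract $1$'' trick), and only at the very end fixing a good $(H_n, b_n)$ for every $n$ simultaneously. You would need to import an argument of this kind to make your adversarial-$b_n$ variant rigorous; as written, the two halves of your proof implicitly rely on incompatible choices of what $\F$ samples.
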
 
\begin{proof}
Let $\calL$ be a unary language chosen uniformly at random, that is for each $n$, we choose $b_n\la \bit$ independently and put $1^n$ into $\calL$ if and only if $b_n=1$. 
The oracle $\ora=\{\ora_n\}_{n\in \mathbb{N}}$ is chosen as follows where we abuse the notation to write $r_n$ to mean an oracle that takes a null string as input and outputs $r_n$:
\begin{itemize}
    \item If $1^n \in \calL$ (i.e., $b_n=1$), then $\ora_n := (\ora^1_n[H_n, r_n],r_n)$ where $(H_n, r_n)\la \mathcal{D}_n$.
    \item If $1^n \not\in \calL$ (i.e., $b_n=0$), then $\ora_n := (\ora^0_n[H_n, r_n],r_n)$ where $(H_n, r_n)\la \mathcal{D}_n$.
\end{itemize}

We start with proving that $\calL \in \QMA^\ora$  with probability 1 over the choice of $\{(b_n,H_n,r_n)\}_n$.  The verifier $V$ works as follows: It first queries $\ora$ at point $0^n$, obtains the random string $r_n$, then calls the algorithm in~\Cref{lem:indist_qma_qcma}.  Since for each $n$, if $b_n=1$, the algorithm $\A^{\ora_n^{1}[H_n,r_n]}(\ket{z_{H_n}},r_n)$ will return 1 with probability $1-\negl(n)$, and if $b_n=0$, the $\A^{\ora_n^{0}[H_n,r_n]}(\ket{z},r_n)$ will always return 0 for all witness $\ket{z}$. Thus for each $n$ the verifier will fail with probability $\negl(n)$. By applying the Borel-Cantelli lemma and following the same arguments as in~\Cref{thm:separation_poly_qpoly}, we can prove that $\calL\in\QMA^\ora$. 
Note that our final oracle distribution $\F$ will fix $H_n$ for each $n$, thus here we allow the witness $\ket{z_{H_n}}$ to depend on $H_n$.

Next, we prove $\calL \not\in \QCMA^\ora$ with probability $1$ over the choice of $\{(b_n,H_n,r_n)\}_n$. Fix a QPT machine $V$ that takes $\ell(n)$-bit classical witness for some polynomial $\ell$  and let $S_V(n)$ be the event that $V^\ora$ succeeds on $1^n$, that is either
    \begin{itemize}
        \item $1^n \in \calL$ and there exists classical witness $w_\ora\in \bit^{\ell(n)}$ 
        such that
            $V^\ora$ accepts $(1^n,w_\ora)$ 
            with probability at least $\frac{2}{3}$, or
        \item $1^n \not \in \calL$ and $V^\ora$ accepts $(1^n,w)$ with
            probability at most $\frac{1}{3}$ for all $w\in \bit^{\ell(n)}$.
    \end{itemize}
    To be precise, 
    \begin{align*}
        S_V(n)=[&\exists w\in \bit^{\ell(n)} : \Pr[V^{\ora^{b_n=1}}(1^n,w)=1]\geq 2/3] \\
        &\vee[\forall w\in \bit^{\ell(n)} : \Pr[V^{\ora^{b_n=0}}(1^n,w)=1]\leq 1/3]
    \end{align*}
    We first show that for any QPT algorithm $V^\ora$, there is a distinguishing algorithm $\B^{\ora_n^{b_n}[H_n,r_n]}$ in~\Cref{lem:indist_qma_qcma} that has the same accept probability as $V^\ora$ for all given $(H_1,H_2,\dots)$.  $\B^{\ora_n^{b_n}[H_n,r_n]}$ takes the witness $z_\ora$, and it hardcodes all other $H_i,b_i(i\neq n)$ in its program, and it simulates the behaviour of $V^\ora$ by randomly choosing $r_i$ for $i\neq n$, and calculates $O_i^{b_i}[H_i,r_i]$ for $i\neq n$ oracle queries, and when querying $n$, it queries its own oracle. In the end, $\B$ sets its output to be the same as $V$. It can be seen that
    \begin{align*}
        \Pr_{H_n,r_n}\left[\B^{\ora_n^{b_n}[H_n,r_n]}(w_\ora,r_n)=1\right]=\Pr_{\{r_i\}_i,H_n}\left[V^\ora(1^n,w_\ora)=1\mid \{H_i,b_i\}_{i\neq n}\right].
    \end{align*}
    Notice that on the left-hand side we can view $w_\ora$ as a witness $z_{H_n}$ as in~\Cref{lem:indist_qma_qcma}.
    \item Now we prove that 
    there exists universal constant $c < 1$ such that for any QPT verification algorithm $V$, there exist infinitely many $n$ such that
    $$\Pr_{\{(b_n,H_n,r_n)\}_n}[S_V(n)] < c.$$
    
    By our observation above, if we can prove that for all unbounded quantum algorithms $\B$, the following inequality holds:
    \begin{align*}
        \Pr_{H_n,r_n,b_n}[[E_1\wedge b_n=1]\vee [E_2\wedge b_n=0]]<c,
    \end{align*}
    where
    \begin{align*}
         &E_1=\exists w\in \bit^{\ell(n)} : \Pr[\B^{\ora_n^{1}[H_n,r_n]}(w,r_n)=1]\geq 2/3,\\
        &E_2=\forall w\in \bit^{\ell(n)} : \Pr[\B^{\ora_n^{0}[H_n,r_n]}(w,r_n)=1]\leq 1/3,
    \end{align*}
    then $\Pr[S_V(n)]$ is also bounded by an averaging argument.
    We can see that
    \begin{align*}
        \Pr[[E_1\wedge b_n=1]\vee[E_2\wedge b_n=0]]=\frac{1}{2}(\Pr[E_1]+\Pr[E_2]).
    \end{align*}
    It can be seen that
    \begin{align*}
        \Pr[E_1]=\Pr[E_1\wedge\neg E_2]+\Pr[E_1\wedge E_2]. 
    \end{align*}
     We prove $\Pr[E_1\wedge E_2] \le \frac{4}{5}(1+2\mu(n))$ as follows. 
     By \Cref{item:ind_with_classical_witness} of \Cref{lem:indist_qma_qcma}, for all polynomial sized $w_{H_n}$,   
     \begin{align*}
     \Pr_{H_n,r_n}[\B^{\ora_{n}^1[H_n,r_n]}(w_{H_n},r_n)=1]-\Pr_{H_n,r_n}[\B^{\ora_{n}^0[H_n,r_n]}(w_{H_n},r_n)=1]\leq 2\mu(n).
  \end{align*}
  By a standard averaging argument,\footnote{Concretely, add $1$ to both sides, apply Markov's inequality, and then subtract $1$ from both sides.
  Then we see that for at most $\frac{4}{5}(1+2\mu(n))$ fraction of $(H_n,r_n)$, we have $\Pr[\B^{\ora_{n}^1[H_n,r_n]}(w_{H_n},r_n)=1]-\Pr[\B^{\ora_{n}^0[H_n,r_n]}(w_{H_n},r_n)=1]\ge 1/4$. 
  }
  we can conclude that for at least 
  $1-\frac{4}{5}(1+2\mu(n))$  
  fraction of $(H_n,r_n)$, 
  $$\Pr[\B^{\ora_{n}^1[H_n,r_n]}(w_{H_n},r_n)=1]-\Pr[\B^{\ora_{n}^0[H_n,r_n]}(w_{H_n},r_n)=1]<1/4,$$
  implying that if $E_1$ occurs, $E_2$ does not occur, i.e., there exists $w_{H_n}\in \bit^{\ell(n)}$ such that
  \begin{align*}
      \Pr[\B^{\ora_{n}^0[H_n,r_n]}(w_{H_n},r_n)=1]>\frac{2}{3}-\frac{1}{4}>\frac{1}{3}.
  \end{align*}
     Thus we have that 
     $\Pr[E_1\wedge E_2] \le \frac{4}{5}(1+2\mu(n))$, 
     giving us
    \begin{align*}
     \frac{1}{2}(\Pr[E_1]+&\Pr[E_2])\leq  \frac{1}{2}(\Pr[E_1\wedge\neg E_2]+\frac{4}{5}(1+2\mu(n))+\Pr[E_2])\\
        &\leq \frac{1}{2}(\Pr[\neg E_2]+\frac{5}{6}+\Pr[E_2])=\frac{11}{12}.
    \end{align*}
    Thus, we can set $c:=11/12$. 

We now show that 
$$\Pr_{\{H_n,r_n,b_n\}_n}[\exists V\;S_V(1) \land S_V(2) \land \dots] = 0.$$
Consider a sequence of input lengths $n_1, n_2, \dots$, such
that the indistinguishability condition of~\Cref{lem:indist_qma_qcma} holds for $n_i$ and $n_i \geq T(n_{i-1}) + 1$, where $T(n)$ is the running time of $V$ on input of length $n$. This means that when $V$'s input length is $n_{i-1}$, it cannot touch the oracle on input length $\ge n_i$. This guarantees that $\Pr[S_V(n_i) \mid S_V(n_{j})] = \Pr[S_V(n_i)]$ for all $i > j$.

We can now show that the probability that $V$ succeeds on all inputs is equal to $0$ over the choices of $H_n, r_n, b_n$.
\begin{align*}
    &\Pr[S_V(1) \land S_V(2) \land \dots] \\
    &\leq \Pr\left[ \bigwedge_{i} S_V(n_i)\right] \\
    &= \Pr[S_V(n_1)] \cdot \Pr[S_V(n_2) \mid S_V(n_1)] \cdot \dots \\
    &\leq c \cdot c \cdot \dots \\
    &= 0
\end{align*}
Since there are countably many $\QCMA$ machines, we have that
$$\Pr_{\{H_n,r_n,b_n\}_n}[\exists V\;S_V(1) \land S_V(2) \land \dots] = 0.$$
Thus $\calL \not\in \QCMA^\ora$ with probability $1$ over the choice of $\{H_n,r_n,b_n\}_n$.

We conclude that by fixing $H_n,b_n$ for each $n$, we can obtain a language $\calL$ in $\QMA^\ora$ but not in $\QCMA^\ora$, where $\ora$ is now a distribution over $\{r_n\}_n$.

\end{proof}

\section{One-Way Communication Complexity}\label{sec:communication} 
We observe that the results of \cite{FOCS:YamZha22,EC:Liu23} (\Cref{thm:YZLiu_separation}) can be seen as a separation of classical and quantum one-way communication complexity.
Consider the following protocol between Alice and Bob where we use the notations defined in \Cref{sec:YZ}.
\begin{description}
\item[Alice's input:] A truth table of a function $H:[n]\times \Sigma\ra \bit$.
\item[Bob's input:] A string $r=(r_1,...,r_n)\in \bit^n$.
\item[One-way communication:] Alice sends a classical or quantum message $m$ to Bob.
\item[Bob's output:] Bob outputs $\vv=(\vv_1,...,\vv_n)\in C$. We say that Bob wins if $H_i(\vv_i)=r_i$ for all $i\in [n]$. 
\end{description}

Quantum (resp. classical) one-way communication complexity is defined to be the minimum length of the quantum (resp. classical) message $m$ that enables Bob to win with high probability (say, $2/3$).  
\Cref{thm:YZLiu_separation} directly means that quantum one-way communication complexity is $\poly(n)$ but classical one-way communication complexity is super-polynomial in $n$ (one can see that it is actually subexponential in $n$ from its proof).\footnote{Strictly speaking, \Cref{item:YZ_easiness_with_quantum_advice} of \Cref{thm:YZLiu_separation} only ensures the existence of a quantum communication protocol with $\poly(n)$-qubit communication that works on average over random $H$.
In the standard definition of one-way communication complexity, we require a protocol to work for all inputs. For ensuring that, we have to rely on the perfectly correct version of \cite{FOCS:YamZha22} that can be found at the end of \cite[Section 4]{FOCS:YamZha22}.
} 
This gives a new super-polynomial separation between classical and quantum one-way communication complexity.
Such a separation is already known back in 2004 by Bar-Yossef, Jayram, and Kerenidis~\cite{STOC:BarJayKer04} where they showed it based on a problem called the hidden matching problem. 
However, our protocol has the following two interesting features compared to theirs. 

First, Bob's input length is exponentially smaller than Alice's input length. 
Why is that interesting? To give the context, we review the following theorem shown by Aaronson~\cite{Aaronson07}.
\begin{theorem}
    For any (possibly partial) boolean function $f:\bit^N\times \bit^M \rightarrow \bit$,  
    $\mathcal{R}^{1}(f)=O(M \mathcal{Q}^{1}(f))$.
\end{theorem}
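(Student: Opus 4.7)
The plan is to simulate the quantum one-way protocol by a classical one using Aaronson's online mistake bound for learning quantum states. Write $Q := \mathcal{Q}^{1}(f)$ and let $\rho_x$ denote Alice's $Q$-qubit message on input $x \in \bit^N$, with measurement operator $E_y$ giving acceptance probability $p_{x,y} := \mathrm{Tr}(E_y \rho_x)$, where $|p_{x,y} - f(x,y)| \le 1/3$ whenever $f(x,y)$ is defined. The classical protocol will transmit at most $O(Q)$ pairs of the form $(y_t, f(x, y_t))$, each costing $M+1$ bits, for a total of $O(MQ)$.

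First, I would set up the simulation. Alice and Bob share a deterministic online quantum-state learner $\mathcal{L}$: given a sequence of queries $y_1, y_2, \ldots \in \bit^M$, $\mathcal{L}$ maintains a hypothesis $Q$-qubit state $\sigma_t$ (initialized to the maximally mixed state) and outputs a prediction $\hat{p}_t(y_t) \in [0,1]$, updating $\sigma_{t+1}$ only when given corrective feedback. Both parties jointly iterate $\mathcal{L}$ through all inputs $y \in \bit^M$ in some canonical order; whenever $\mathcal{L}$ would make a \emph{mistake}, meaning $|\hat{p}_t(y_t) - p_{x, y_t}| > \epsilon$ for a fixed $\epsilon < 1/6$, Alice transmits the pair $(y_t, f(x, y_t))$ so that Bob can replicate the update. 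After the sweep, Bob holds the same final hypothesis $\sigma_T$ as Alice and outputs $\mathbf{1}[\hat{p}_T(y) \ge 1/2]$ on his own input $y$.

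The heart of the argument, and the main obstacle, is to show that the number $T$ of mistakes is at most $O(Q)$. This is precisely Aaronson's online learnability theorem for quantum states, proved via a potential argument on the convex body of $Q$-qubit density matrices consistent with the observed feedback: the log-volume of this body lies in a range of width $O(Q)$ and drops by $\Omega(1)$ with every corrective update, forcing termination after $O(Q)$ mistakes. The naive information-theoretic estimate would give only $T \le N$, which is useless; exploiting the fact that the messages live in a Hilbert space of dimension $2^Q$ rather than being arbitrary $N$-bit strings is exactly where the quantum structure is used. I would invoke this theorem in a black-box manner, calibrating $\epsilon$ so that upon termination, every prediction $\hat{p}_T(y)$ is within $\epsilon < 1/6$ of $p_{x,y}$. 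Correctness on Bob's actual input then follows from combining this $\epsilon$-accuracy with the $1/3$-bias guarantee of the quantum protocol, and the total communication is $O(Q(M+1)) = O(M\mathcal{Q}^{1}(f))$, as claimed.
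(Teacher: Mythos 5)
Your route — replaying the quantum protocol through an online quantum-state learner run in lockstep by Alice and Bob, with Alice signalling each mistake at $O(M)$ bits — is a legitimate way to get $R^1=O(MQ^1)$, and it is worth noting that it is a different route from the one in the cited reference: Aaronson's original argument goes through the batch, PAC-style learnability of quantum states (Quantum Occam's Razor and the fat-shattering bound), whereas the online mistake bound you invoke is a later refinement due to Aaronson--Chen--Hazan--Kale--Nayak. Both hinge on the same fact, that a $Q$-qubit state has ``effective dimension'' $O(Q)$ against two-outcome measurements, but the online version gives a particularly clean, worst-case reduction with no minimax step.

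There is, however, a genuine gap in the way feedback is wired into the learner. You declare a mistake when $|\hat p_t(y_t) - p_{x,y_t}| > \epsilon$ with $\epsilon < 1/6$, yet the corrective signal you transmit and update on is the bit $f(x,y_t)$, which can lie up to $1/3$ away from $p_{x,y_t}$. The potential argument needs each corrective update to decrease $S(\rho_x\,\|\,\sigma_t)$; for a matrix multiplicative-weights update driven by feedback $z_t$, the leading-order drop in relative entropy is proportional to $(z_t - \hat p_t)(p_{x,y_t} - \hat p_t)$, and under your mistake condition this product can be negative. For example, $p_{x,y_t}=0.7$, $z_t=f(x,y_t)=1$, $\hat p_t=0.85$ triggers a mistake ($0.15>\epsilon$) but the update moves $\sigma_t$ \emph{away} from $\rho_x$, so the potential can go up. Since the noise $|z_t - p_{x,y_t}|\le 1/3$ strictly exceeds your threshold $\epsilon<1/6$, the $O(Q)$ mistake bound does not follow as invoked. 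A related issue: for a partial $f$ there are rounds where $f(x,y_t)$ is undefined and you have nothing to transmit at all. Both problems are fixable without changing the $O(MQ)$ budget: either (i) define a mistake as an actual decision error $\mathbf 1[\hat p_t>1/2]\ne f(x,y_t)$ restricted to rounds where $f(x,y_t)$ is defined — then $f=1$ forces $\hat p_t\le 1/2$ and $p_{x,y_t}\ge 2/3$, so the product above is at least $1/12$, and symmetrically for $f=0$ — or (ii) have Alice transmit $p_{x,y_t}$ truncated to $O(1)$ bits of precision in place of the single bit $f(x,y_t)$, which still costs $O(M)$ per mistake and restores a clean $\Omega(\epsilon^2)$ potential drop on every mistake.
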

Here, $\mathcal{Q}^{1}$ and $\mathcal{R}^{1}$ mean the quantum and classical randomized bounded-error one-way communication complexity, respectively. 
This theorem means that we cannot have a large quantum-classical separation for boolean functions if Bob's input length is small. 
We circumvent this barrier by considering \emph{relations} rather than functions.\footnote{Interestingly, a similar observation was made by a concurrent work~\cite{ABK23} where they show a variant of the hidden matching problem with short Bob's input length.} 
This is reminiscent of \cite{FOCS:YamZha22} that overcomes the barrier of Aaronson-Ambainis conjecture~\cite{AA14} by considering search problems rather than decision problems. 

Second, by \Cref{thm:YZLiu_separation_classical_query}, one can see that the hardness with classical communication remains to hold even if we allow Bob to classically query Alice's input. On the other hand, the hidden matching problem becomes completely easy if we allow such classical queries. This property is the key to show $\BQP/\poly\neq \BQP/\qpoly$ and $\QMA\neq \QCMA$ relative to classically-accessible classical oracles.

\section{Acknowledgements}

The authors would like to thank Shalev Ben-David and Srijita Kundu for pointing out an issue with a proof in a prior version of this paper.

\bibliographystyle{alpha} 
\bibliography{abbrev3,crypto,reference}

\end{document}